\documentclass[aps,pra,amsmath,amssymb,superscriptaddress,twocolumn,notitlepage,nofootinbib,longbibliography]{revtex4-2}


\usepackage[final]{graphicx}
\usepackage{times,bbm,amsmath,amssymb,amsthm}
\usepackage{epsfig,color}
\usepackage{xcolor}
\usepackage[colorlinks=true,citecolor=blue,linkcolor=blue]{hyperref}
\hypersetup{
    allcolors  = {blue},
}

\usepackage{cleveref}
\usepackage{float,siunitx}
\usepackage[caption = false]{subfig}
\usepackage{verbatim}
\usepackage[greek,english]{babel}
\usepackage{thumbpdf,enumerate}
\usepackage{booktabs}
\usepackage{sidecap}
\usepackage[scaled=.8]{couriers}    
\usepackage{pstricks}
\usepackage{multirow}
\usepackage{placeins}
\usepackage{relsize}  
\usepackage{pst-grad,bm}
\usepackage{epigraph}
\usepackage{gensymb}
\usepackage{longtable}
\usepackage{booktabs}
\usepackage{gensymb}

\usepackage{soul}
\usepackage{ulem} 
\normalem 
\usepackage{acronym}
\usepackage{physics}
\usepackage{tikz}


\theoremstyle{plain}

\newtheorem*{theorem*}    {Theorem}
\newtheorem*{proposition*}{Proposition}
\newtheorem*{lemma*}      {Lemma}
\newtheorem*{corollary*}  {Corollary}
\newtheorem*{conjecture*} {Conjecture}

\begin{document}
\title{Experimental certification of contextuality, coherence and dimension in a programmable {universal photonic processor}}




\author{Taira Giordani} 
\thanks{These two authors contributed equally}
\affiliation{Dipartimento di Fisica, Sapienza Universit\`{a} di Roma, Piazzale Aldo Moro 5, I-00185 Roma, Italy}

\author{Rafael Wagner}
\thanks{These two authors contributed equally}
\affiliation{International Iberian Nanotechnology Laboratory (INL)
 Av. Mestre José Veiga s/n, 4715-330 Braga, Portugal
}
\affiliation{Centro de F\'{i}sica, Universidade do Minho, Campus de Gualtar, 4710-057 Braga, Portugal}

\author{Chiara Esposito} 
\affiliation{Dipartimento di Fisica, Sapienza Universit\`{a} di Roma, Piazzale Aldo Moro 5, I-00185 Roma, Italy} 

\author{Anita Camillini}
\affiliation{International Iberian Nanotechnology Laboratory (INL)
 Av. Mestre José Veiga s/n, 4715-330 Braga, Portugal
}
\affiliation{Centro de F\'{i}sica, Universidade do Minho, Campus de Gualtar, 4710-057 Braga, Portugal}

\author{Francesco Hoch} 
\affiliation{Dipartimento di Fisica, Sapienza Universit\`{a} di Roma, Piazzale Aldo Moro 5, I-00185 Roma, Italy} 

\author{Gonzalo Carvacho} \affiliation{Dipartimento di Fisica, Sapienza Universit\`{a} di Roma, Piazzale Aldo Moro 5, I-00185 Roma, Italy}

\author{Ciro Pentangelo}
\affiliation{Dipartimento di Fisica, Politecnico di Milano, Piazza Leonardo da Vinci, 32, I-20133 Milano, Italy}
\affiliation{Istituto di Fotonica e Nanotecnologie, Consiglio Nazionale delle Ricerche (IFN-CNR), 
Piazza Leonardo da Vinci, 32, I-20133 Milano, Italy}

\author{Francesco Ceccarelli}
\affiliation{Istituto di Fotonica e Nanotecnologie, Consiglio Nazionale delle Ricerche (IFN-CNR), 
Piazza Leonardo da Vinci, 32, I-20133 Milano, Italy}

\author{Simone Piacentini}
\affiliation{Istituto di Fotonica e Nanotecnologie, Consiglio Nazionale delle Ricerche (IFN-CNR), 
Piazza Leonardo da Vinci, 32, I-20133 Milano, Italy}

\author{Andrea Crespi}
\affiliation{Dipartimento di Fisica, Politecnico di Milano, Piazza Leonardo da Vinci, 32, I-20133 Milano, Italy}
\affiliation{Istituto di Fotonica e Nanotecnologie, Consiglio Nazionale delle Ricerche (IFN-CNR), 
Piazza Leonardo da Vinci, 32, I-20133 Milano, Italy}

\author{Nicol\`o Spagnolo} \affiliation{Dipartimento di Fisica, Sapienza Universit\`{a} di Roma, Piazzale Aldo Moro 5, I-00185 Roma, Italy}


\author{Roberto Osellame}
\affiliation{Istituto di Fotonica e Nanotecnologie, Consiglio Nazionale delle Ricerche (IFN-CNR), 
Piazza Leonardo da Vinci, 32, I-20133 Milano, Italy}


\author{Ernesto F. Galv\~ao}
\email[Corresponding author: ]{ernesto.galvao@inl.int}
\affiliation{International Iberian Nanotechnology Laboratory (INL)
 Av. Mestre José Veiga s/n, 4715-330 Braga, Portugal
}

\affiliation{Instituto de F\'isica, Universidade Federal Fluminense, Av. Gal. Milton Tavares de Souza s/n, Niter\'oi, RJ, 24210-340, Brazil}

\author{Fabio Sciarrino}
\email[Corresponding author: ]{fabio.sciarrino@uniroma1.it}
\affiliation{Dipartimento di Fisica, Sapienza Universit\`{a} di Roma, Piazzale Aldo Moro 5, I-00185 Roma, Italy}

\begin{abstract}
Quantum superposition of high-dimensional states {enables} both computational speed-up and security in cryptographic protocols. However, the exponential complexity of tomographic processes makes certification of these properties a challenging task. 
In this work, we experimentally certify coherence witnesses tailored for quantum systems of  increasing dimension, using pairwise overlap measurements enabled by a {six-mode universal photonic processor fabricated with a femtosecond laser writing technology}. In particular, we show the effectiveness of the proposed coherence and dimension witnesses for qudits of dimensions up to 5. We also demonstrate  advantage in a quantum interrogation task, and show it is fueled by quantum contextuality. Our experimental results testify to the efficiency of this novel approach {for} the certification of quantum properties in {programmable integrated photonic platforms}. 

\end{abstract}

\maketitle

\section*{Introduction}

Quantum computers are capable of solving problems believed to be effectively impossible classically, such as sampling from complex probability distributions~\cite{aaronson2011computational}, predicting properties of physical systems~\cite{huang2020predictingmany}, and factoring large integers~\cite{shor1999polynomial}. 
The quantum computational advantage for these tasks is built on rigorous no-go results in computational complexity theory, showing a gap between quantum and classical resources for the same task that can be exponential. 
Research on quantum foundations also addresses quantum advantage, by asking the question: what type of quantum information processing \textit{cannot} be explained classically? Answers  broadly suggest a different kind of advantage, not in terms of computational power, but in terms of intrinsic classical limits to information processing. 

Advantage in quantum information processing pushes success rates in communication tasks~\cite{saha2019preparation,spekkens2009parity, Buhrman10}, security of key distribution protocols~\cite{ekert1992quantumcryptography}, and success rates in discrimination tasks~\cite{schmid2018contextual} beyond those that can be reached using only classical resources. This kind of advantage results from understanding quantum foundational aspects of nonclassical resources such as entanglement~\cite{horodecki2009quantumentanglement}, coherence~\cite{streltsov2017quantumcoherence},  Bell nonlocality~\cite{brunner2014nonlocality} and contextuality~\cite{budroni2021kochen}, where classical-quantum gaps in explaining the phenomena can be predicted, bounding success rates in a quantifiable manner. For instance, phenomena that can be reproduced by  noncontextual models~\cite{spekkens2005contextuality} include interference~\cite{catani2021whyinterference}, superdense coding~\cite{spekkens2007toymodel} and Gaussian quantum mechanics~\cite{bartlett2012reconstruction}, but it is possible to describe precisely when processing of quantum information allows an advantage over such seemingly powerful models in many tasks~\cite{spekkens2009parity,saha2019preparation}. 

In this work, we propose and test families of coherence and contextuality witnesses in { proof-of-principle device dependent }experiments carried out with single-photon states processed by a {programmable} integrated photonic circuit. Such witnesses require the careful preparation of the system in a set of different states and then the estimation of a set of pairwise state overlaps \cite{galvaobrod2020quantum, giordani2021witnessing, wagner2022coherence, Wagner2022}. For this task, we encode qubits and qudits with dimensions up to five in {a six-mode universal photonic processor (UPP) realized with the femtosecond laser writing technology \cite{corrielli2021flm}.} 

We start by testing a recent 
quantum information advantage~\cite{wagner2022coherence} for the task of quantum interrogation, first proposed as the celebrated Elitzur-Vaidman bomb-testing experiment~\cite{elitzur1993quantum}. This task has had a profound impact in quantum foundations~\cite{vaidman2003themeaning,hardy1992emptywaves} that later converged into technical developments, such as the possibility of performing counterfactual quantum computation~\cite{hosten2006counterfactual,liu2012experimental}, the development of the field of quantum imaging with undetected photons~\cite{lemos2014quantumimaging,lahiri2015theory}, and high-efficiency interrogation using the quantum Zeno effect~\cite{kwiat1995interactionfree,rudolph2000better}. We experimentally verify that the efficiency achievable by quantum theory cannot be explained by noncontextual models such as those of Refs.~\cite{spekkens2007toymodel,catani2021whyinterference}, as predicted in Ref.~\cite{wagner2022coherence}. We hence certify both the presence of this nonclassical resource in the device, and its ability to use it for information processing advantage.

Although noncontextual models are capable of capturing some aspects of quantum coherence~\cite{spekkens2007toymodel}, in a way similar to how local models reproduce some aspects of quantum entanglement~\cite{werner1989quantum}, coherence is still of utmost relevance for quantum information science. It  plays a major role in Shor's factoring algorithm~\cite{ahnefeld2022coherence}, and is crucial for the quantum advantage provided by linear-optical devices. One can then ask the question of what \textit{cannot} be explained with coherence-free models. The recently established inequalities of Refs.~\cite{galvaobrod2020quantum,Wagner2022} provide a precise answer, by rigorously bounding such models.

We show theoretically, numerically, and experimentally that a family of inequalities introduced in Ref.~\cite{wagner2022coherence} has a particularly interesting property: violations of such inequalities witness not only coherence inside the interferometers, but also the \textit{dimensionality} of the information encoded. Since Hilbert space dimension is itself a resource, 
the question of what can be done \textit{only} with qu\textit{dits} is of  relevance for information processing. Some information tasks do require qudits~\cite{vidick2011doesignorance,kewming2020hidingignorance,tavakoli2015quantum,tavakoli2015secret}, or have their security linked to the dimension \cite{Acin2006fromBelltoSecure}, so an important research field is devoted to developing methods to guarantee lower bounds on the Hilbert space dimension attained by different physical systems~\cite{brunner2008testing,wehner2008stephanie,gallego2010device,guhne2014bounding,ray2021graph}. 

The fact that this family of inequalities has violations only for \textit{coherent qudits} marks a new paradigm for quantum coherence allowed by the basis-independent perspective~\cite{designolle2021setcoherence,galvaobrod2020quantum}. There exists quantum coherence that is achieved by qudits that cannot be achieved with qubits. Such a fact has no precedent from the resource theoretic perspective of basis-dependent coherence~\cite{streltsov2017quantumcoherence}. Coherence captured only with qudits was first considered in Ref.~\cite{brunner2013dimension}.  We experimentally witness this proposed new form of coherence for qutrits, ququarts and ququints inside a six-mode programmable integrated universal photonic processor. We prove that the inequality violated by pure qutrits cannot be violated by qubits, complementing this result with numerical and experimental investigations, and we perform a similar analysis for our inequalities violated by ququart and ququint systems. In doing so, we extend the dimension witness result from Ref.~\cite{giordani2021witnessing} both qualitatively and quantitatively, making the best use of the flexibility and accuracy of our multi-mode {processor}.

\section*{Results}
\subsection{Theoretical framework}

Quantum coherence is commonly described as a basis-dependent property. Given some space $\mathcal{H}$ describing a  system and a fixed  basis $\Omega = \{\vert \omega \rangle \}_{\omega}$, any state $\rho$ is said to be \emph{coherent} if it is not diagonal with respect to $\Omega$. It is possible to avoid basis-dependence by considering sets of states~\cite{designolle2021setcoherence}. Given any set of states $\underline{\rho} = \{\rho_i\}_{i=0}^{n-1}$, the entire set is said to be basis-independent coherent, or simply set-coherent, if there exists no unitary $U$ such that $\underline{\rho} \mapsto \underline{\sigma} = \underline{U\rho U^\dagger} = \{U \rho_i U^\dagger \}_{i=0}^{n-1}$, with every $\sigma_i=U\rho_i U^\dagger$ diagonal. 

\begin{figure}[t]
    \centering
    \includegraphics[width=\columnwidth]{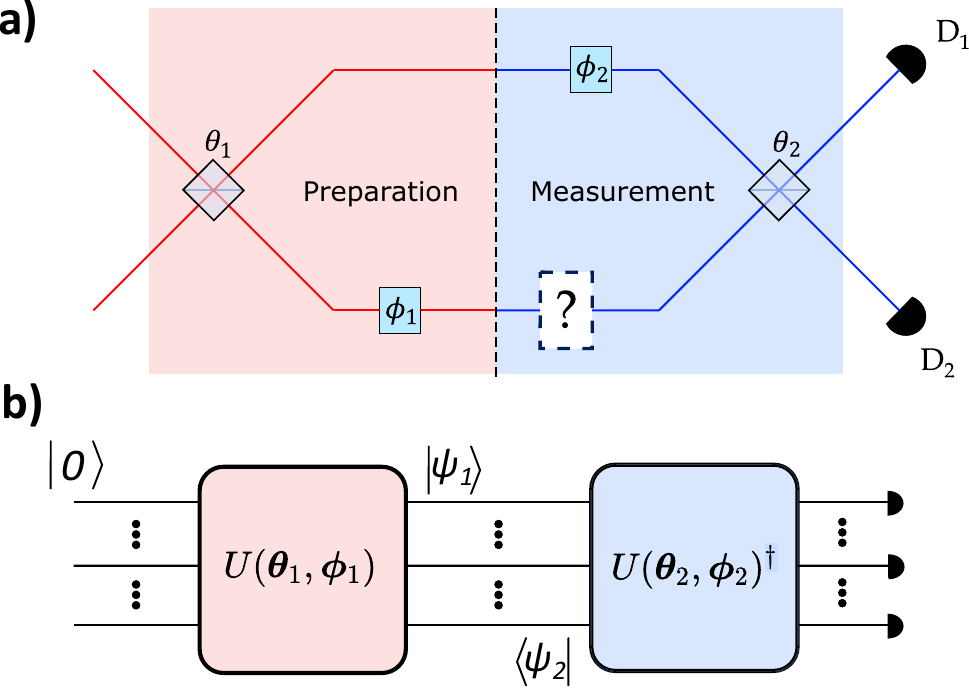}
    \caption{\textbf{Mach-Zehnder interferometer (MZI) and its multimode generalization for quantum interrogation, coherence, and dimension witnesses.} a) Any MZI can be ideally separated in a preparation stage (red), in which we prepare a qubit state $\ket{\psi(\theta_1,\phi_1)}$, and a measurement stage (blue) that projects onto another qubit state $\ket{\psi(\theta_2,\phi_2)}$. In a quantum interrogation experiment, the ?-box is an object that absorbs photons. b)  In analogy with the MZI, a multi-path interferometer encodes $d$-level systems by a set of {beam splitters $\boldsymbol{\theta_1}$} and {phase shifters $\boldsymbol{\phi_1}$} that operate on the $d$ mode as a unitary operator $U(\boldsymbol{\theta_1}, \boldsymbol{\phi_1})$. In the second stage, another round of {beam splitters $\boldsymbol{\theta_2}$} and {phase shifters $\boldsymbol{\phi_2}$} followed by a series of photodetectors detects photons at the $d$ output ports. With a single-photon input at the top input mode $\ket{0}$, this setup is capable of measuring two-state overlaps  $\vert \langle \psi_2 \vert \psi_1\rangle \vert^2=\vert \langle 0 \vert U(\boldsymbol{\theta}_2,\boldsymbol{\phi}_2)^\dagger  U(\boldsymbol{\theta}_1,\boldsymbol{\phi}_1) \vert 0\rangle \vert^2$. }
    \label{fig:MZI}
    \label{fig:high-dimension interference}
\end{figure}

\begin{figure*}[t]
    \centering
    \includegraphics[width=\textwidth]{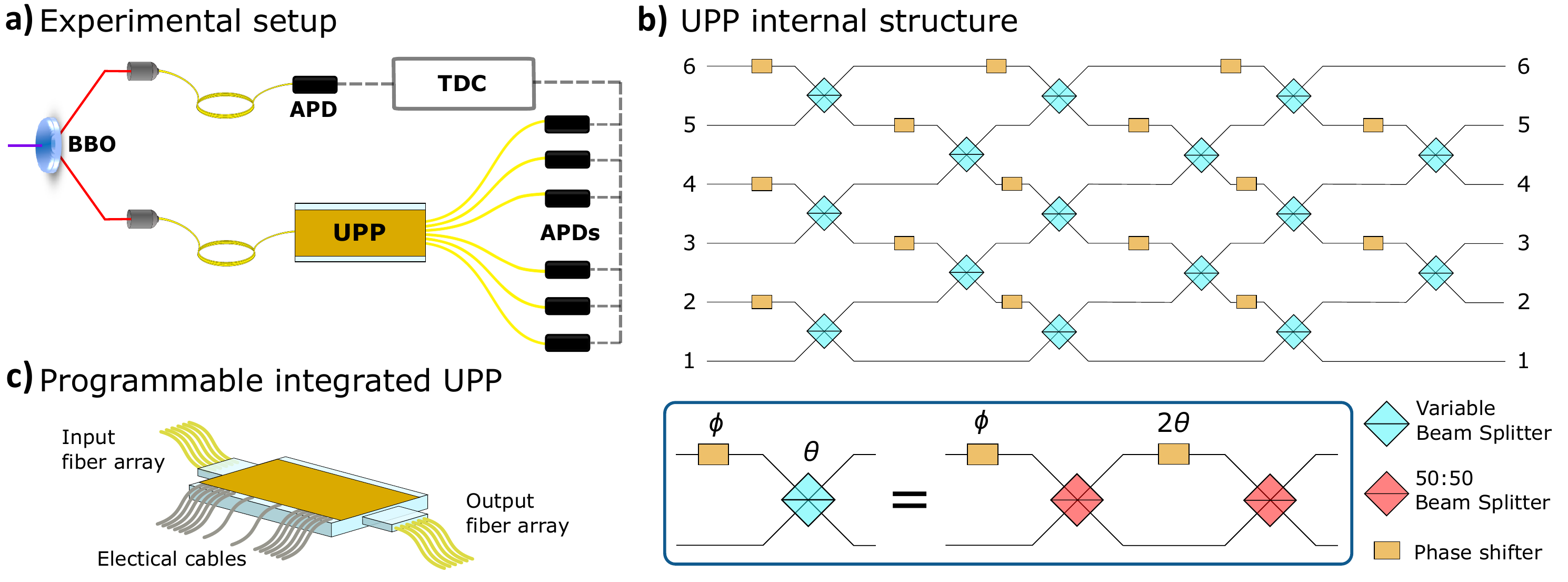}
    \caption{\textbf{Experimental setup and {universal photonic processor.}} {a) Experimental setup.} A pair of photons is generated by a SPDC source. One photon is detected as trigger. The second photon is sent in {a programmable integrated UPP} that can realize a generic unitary {transformation}. 
    At the output of the chip we measure the two-fold coincidence between the photon in the chip and the trigger photon. They are detected by exploiting {APDs}.  {b) UPP internal structure.} The optical circuit is a six-mode rectangular mesh of variable {beam splitters} and phase shifters{, enabling the implementation of arbitrary 6x6 unitary transformations.} Each variable {beam splitter} is actually a MZI structure with two 50:50 {beam splitters} and a phase shifter in between (see the inset). {c) Programmable integrated UPP}. The integrated device employed in the experiment is a 
    {UPP}, realized by the {femtosecond} laser writing technique in {an alumino-borosilicate} glass substrate. Two fiber arrays are directly plugged in at the input  and at the output of the interferometer. 
    {Thermo-optic phase shifters are patterned with the same technology on a thin gold layer deposited on the substrate. Electrical currents are supplied to the phase shifters through interposing printed circuit boards (not shown in figure for the sake of simplicity) allowing one to locally heat the waveguides and change the settings of the optical device.} 
    \textit{Legend}: BBO: Beta-Barium Borate crystal; APD: Avalanche photo-diode detector; TDC Time-to-digital converter.}
    \label{fig:chip}
\end{figure*}

Witnesses of such a notion of basis-independent coherence were proposed in Ref.~\cite{galvaobrod2020quantum}, building on the realization that set-coherence is a relational property among the states in $\underline{\rho}$. Bargmann invariants~\cite{bargmann1964note,oszmaniec2021measuring} completely characterize all the relational information of any set of states. The simplest such invariants are the two-state overlaps $r_{i,j}=\text{Tr}(\rho_i\rho_j)$, for $\rho_i,\rho_j \in \underline{\rho}$. 
In the Methods section and Supplementary Note 1 we recall why the overlap inequalities of Refs.~\cite{galvaobrod2020quantum,Wagner2022} serve as set-coherence witnesses. The first non-trivial inequality bounding coherence~\cite{galvaobrod2020quantum} was experimentally investigated in Ref.~\cite{giordani2021witnessing}, and bounds the three overlaps of a set of 3 states: 
\begin{equation}
    r_{0,1} + r_{0,2} - r_{1,2} \leq 1.\label{ineq: 3-cycle inequalities}
\end{equation}
Violations of such inequalities represent witnesses of basis-independent coherence of $\underline{\rho} = \{\rho_i\}_{i=0}^2$. However, as was shown in Ref.~\cite{Wagner2022}, this is \textit{also} a witness of  contextuality when we interpret each state either as an operational preparation procedure, or as a measurement effect. In Supplementary Note 2 we review in detail how these inequalities help to characterize contextuality.

As part of our certification we perform a quantum information task known as (standard) quantum interrogation~\cite{elitzur1993quantum}, that can be performed using a two-mode Mach-Zehnder interferometer (MZI) set-up, as depicted in Fig.~\ref{fig:MZI}a, and interpreted in light of our discussion about the connection between coherence and contextuality. For the purpose of testing our device, we will quantify the success rate of the interrogation task using the efficiency $\eta$ given by
\begin{equation}
    \eta = \frac{p_{succ}}{p_{succ}+p_{abs}},
    \label{eq:eff_bomb}
\end{equation}
where $p_{succ}$ is the probability of {successfully} detecting the presence of the object without it absorbing the photon, and $p_{abs}$ corresponds to the probability of absorption. In the Methods section we precisely describe the task, and how these probabilities relate to the MZI beam-splitting ratio. 
Ref.~\cite{wagner2022coherence} showed that noncontextual models \textit{cannot} explain $\eta$ for arbitrary beam-splitting ratios, and that there exists a quantifiable gap between the efficiencies achievable by quantum theory and noncontextual models. We provide a more robust discussion of this result in Supplementary Notes 2, 3 and 4, where we model the noise-resistance of the contextual advantage result, describing also related loopholes for testing contextuality of the obtained data. In the remaining of the certification we will solely focus on 
nonclassicality provided by set-coherence.

Violations of the inequalities of Ref.~\cite{wagner2022coherence} are a  promising, scalable and  efficient way to witness coherence inside multi-mode interferometers, as described in Fig.~\ref{fig:high-dimension interference}b (see also  Supplementary Note 1). A multipath interferometer corresponds to an efficient device for generating high-dimensional coherent states and measuring their two-state overlaps. Consider any two states, $\vert \psi_1 \rangle = U(\boldsymbol{\theta}_1,\boldsymbol{\phi}_1)\vert 0\rangle $ and $\vert \psi_2 \rangle = U(\boldsymbol{\theta}_2,\boldsymbol{\phi}_2)\vert 0\rangle $ over some finite-dimensional Hilbert space in which $\ket{0}$ is one state of a given basis. Their overlap can be measured  choosing the two stages of a generic interferometer such that 
\begin{equation}
    r_{1,2}\equiv r_{2,1}=\vert \langle \psi_2 \vert \psi_1 \rangle \vert ^2 = \vert \langle 0 \vert U(\boldsymbol{\theta}_2,\boldsymbol{\phi}_2)^\dagger  U(\boldsymbol{\theta}_1,\boldsymbol{\phi}_1) \vert 0\rangle \vert^2.
\end{equation}
Using multi-mode devices it is possible to witness not only coherence, but \textit{coherence achievable only with qudits} by violation of the following family of inequalities defined recursively:

\begin{equation}\label{eq: family}
    h_{n}(r) = h_{n-1}(r)+r_{0,n-1}-\sum_{i=1}^{n-2}r_{i,n-1} \leq 1 ,
\end{equation}
where the sequence starts with $h_{3}(r) = r_{0,1}+r_{0,2}-r_{1,2}$ and the above equation defines inequalities for any integer $n>3$. When $n=4$ we have 
\begin{equation}    
h_4(r)=r_{0,1}+r_{0,2}+r_{0,3}-r_{1,2}-r_{1,3}-r_{2,3}\leq 1. 
\label{ineq: qutrit}
\end{equation}

The inequality in Eq.~\eqref{ineq: qutrit} cannot be violated by a set of pure qubit states.
With qutrits, it reaches violations up to $1/3$. Hence, quantum violations of inequality \eqref{ineq: qutrit} represent witnesses of both coherence and Hilbert space dimension higher than $2$. We have numerically found the same behaviour for sets of pure quantum states for the family of $h_n$ inequalities~\eqref{eq: family} up to $h_{10}$ by maximizing over parameters describing up to 10 states. This is evidence that the family of inequalities \eqref{eq: family} corresponds to simultaneous witnesses of coherence and dimension, achievable only by the device's capability of precisely preparing and measuring high-dimensional coherent states. We do not prove that these inequalities have this property for all values of $n$. Using semidefinite programming (SDP) techniques, we show that we can map the maximum possible values of the inequalities~\eqref{eq: family} to the solutions of a quadratic SDP. We then show that, for $n$ up to $n=2^{12}$, sets of  states with dimension $d=n-1$ are capable of violating the inequalities $h_n$, while no violation is obtained from states spanned by a basis of any lower dimension. In Supplementary Note 1 we present these theoretical results{, and discuss the underlying assumptions for the dimensionality certification} in detail.

{\subsection{Experimental Implementation and Results}}

Quantum interrogation and coherence witnesses are tested in heralded single-photon experiments, by means of the experimental setup shown in Fig.~\ref{fig:chip}, composed of a single-photon source based on parametric down-conversion, and a programmable integrated universal photonic processor (UPP) fabricated via femtosecond laser micro-machining (see Methods for more details).
Let us now discuss the results obtained in the performed experiments.

\vspace{1em}
\subsection*{Coherence and contextuality in two-level systems}

In the previous section, we have provided the theoretical framework which derives families of coherence witnesses based on the evaluation of pairwise overlaps among states in a finite set. Some inequalities tailored to work as coherence witnesses can also witness quantum contextuality. We have already mentioned inequality \eqref{ineq: 3-cycle inequalities} as one of such example. Furthermore, this inequality predicts an advantage for the efficiency in the task of quantum interrogation \cite{wagner2022coherence}. We implement and test this task in a {programmable} MZI inside the {integrated UPP}. The experimental optical circuit is shown Fig.~\ref{fig:expbomb_int}a. In our experiment, the absorbing object is modelled by a completely transparent {beam splitter} with reflectivity $r_{B}=\sin{\theta_B}=0$ placed in one arm of the MZI. 
The MZI is calibrated so that the two {beam splitters} have the same beam-splitting ratio ($\theta_1=\theta_2=\theta$), with a null internal phase ($\phi = 0$). These conditions guarantee that a {single photon} injected in mode 0 will always come out of output 0 when the object is absent.
The aim of this experiment is to estimate the efficiency $\eta$ of detecting the presence of the object without it absorbing the photon, as defined in Eq. \eqref{eq:eff_bomb}. In our scheme $p_{succ}$ corresponds to the fraction of single photons detected in mode 1, since this output is only possible when the object is present. The probability of absorption $p_{abs}$ is given by the fraction of photons detected in mode 2. In Fig. \ref{fig:expbomb_int}b we report the measurements of $\eta$ for different values of the reflectivity $r = \sin \theta$ of the two {beam splitters}. The theoretical curve is given by a  quantum model of the MZI, whose performance is, in general, not achievable by any generalized noncontextual model (see also Supplementary Note 2), given by

\begin{figure}[t]
    \centering
    \includegraphics[width=\columnwidth]{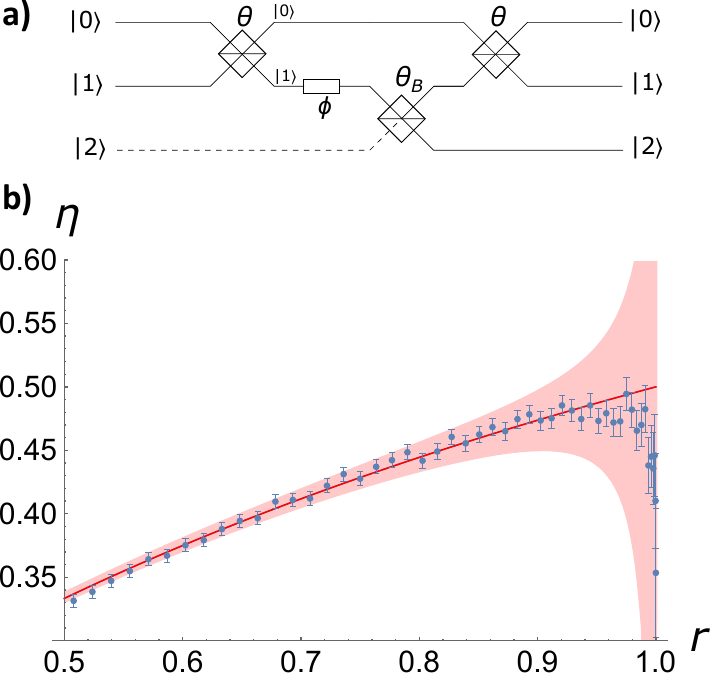}
    \caption{\textbf{Coherence and contextuality in a Mach-Zehnder interferometer.}
    a) Scheme of the {circuit} employed for the quantum interrogation task. b) Efficiency $\eta$ of the object's detection versus the reflectivity $r=\sin\theta$ of the two {beam splitters} in the MZI. The red curve is the theoretical prediction while the red shaded area represents the deviations from the ideal model due to dark counts and imperfect calibration of the {beam splitters}. The error bars derive from the poissonian statistics of the single-photon counts.}
    \label{fig:expbomb_int}
\end{figure}

\begin{figure*}[t]
    \centering
    \includegraphics[width=\textwidth]{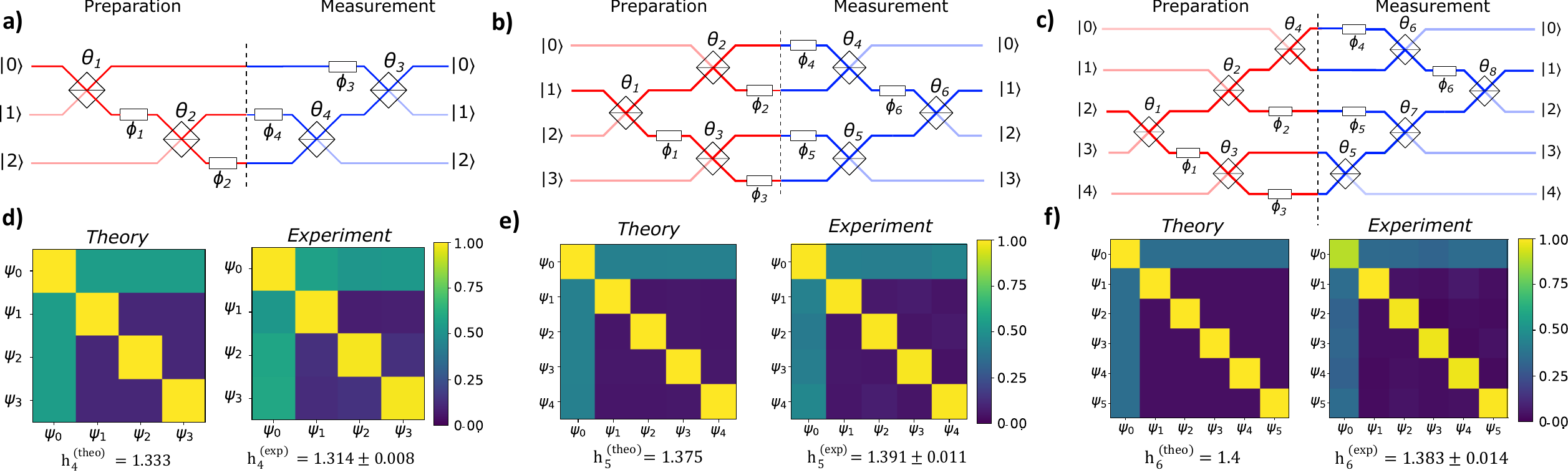}
    \caption{\textbf{Violations of 
    $h_n$ inequalities by qudits.}  a-c) Circuit schemes for qutrits a), ququarts b) and $5$-mode qudits c) preparation (red part) and measurement (blue). The single-photon signal enters from mode 0 for the qutrits case, from mode 1 for ququarts and from mode 2 for the 5-mode case. 
    d-f) Comparison of the theoretical and the experimental matrices of the pairwise overlaps values $r_{i,j}=\vert \langle \psi_j \vert \psi_i \rangle \vert ^2$ for the sets of states that maximize the violation of $h_4$, $h_5$, and $h_6$. 
    In particular, $h_4$ is violated by sets of 4 qutrit states $\{\psi_0, \dots \psi_3\}$ in d), the $h_5$ inequality is violated by sets of 5 ququart  states $\{\psi_0, \dots, \psi_4\}$ e) and $h_6$ is violated by sets of 6 quantum states of dimension 5 $\{\psi_0, \dots, \psi_5\}$ f). The uncertainty reported for each inequality derives from the poissonian statistics of the single-photon counts.}
    \label{fig:qudit}
\end{figure*}

\begin{equation}
   \eta= \frac{r (1-r)}{r (1-r)-r+1}
   \label{eq:bomb_noncontext}
\end{equation}
Our experimental data follows very well the predictions of the quantum model, showing not only that the device \textit{generates} data that cannot be explained with noncontextual models, but also that it \textit{uses} contextuality as a resource to achieve quantum-over-classical performance, quantified by the efficiency $\eta$. We observe that the largest deviations from the theoretical curve appear for values of $r$ close to 1. This discrepancy can be justified by taking into account the experimental imperfections in the apparatus (see Methods). 

In Supplementary Notes 2, 3 and 4 we present a detailed discussion about contextuality in  {MZIs}, including an analysis of the requirements for witnessing contextuality when the device is used for the quantum interrogation task. In there, we pick a specific {beam splitter} configuration for the interrogation task, and show that we experimentally achieve $\eta_{exp} = 0.428\pm 0.006$, while noncontextual models must have an efficiency $\eta^{NC} \le 0.410$, even when benefiting from the effect of noise, which raises the noncontextual upper bound for $\eta$.

In Supplementary Note 5 we also certify coherence in the MZI using a different, and novel 
inequality featuring a high level of violation by five symmetrical states on a great circle of the Bloch sphere.

\subsection*{Coherence and dimension witnesses in higher dimensions}

Eq.~\eqref{eq: family} describes a family of inequalities that are tailored for certifying coherence in systems with dimension $d>2$. We will refer to such inequalities as $h_n$, where $n$ is the number of states in the set whose overlaps we need to evaluate. They arise as inequalities obtained using the event graph approach \cite{Wagner2022},  when we consider complete graphs $K_n$. The presence of coherence in the states is witnessed when an inequality is violated, that is, when $h_n >1$. An important point to note is that the values $h_n(r)$ provide information regarding the coherence accessible only due to the dimension of the space. In fact,  the $h_n$ inequalities are not violated by sets of states without coherence, nor by systems with dimension $d<n-1$. For example, one-qubit states do not violate $h_4\leq 1$,  qutrit states do not violate $h_5\leq 1$, and so on. 
The main result is that $h_n$ displays different maximum values according to the dimension of the system. This implies that the functionals $h_n(r)$ are not only dimension witnesses but also \textit{indicators} of the dimension of the space. 


\begin{figure*}[t]
    \centering
    \includegraphics[width=\textwidth]{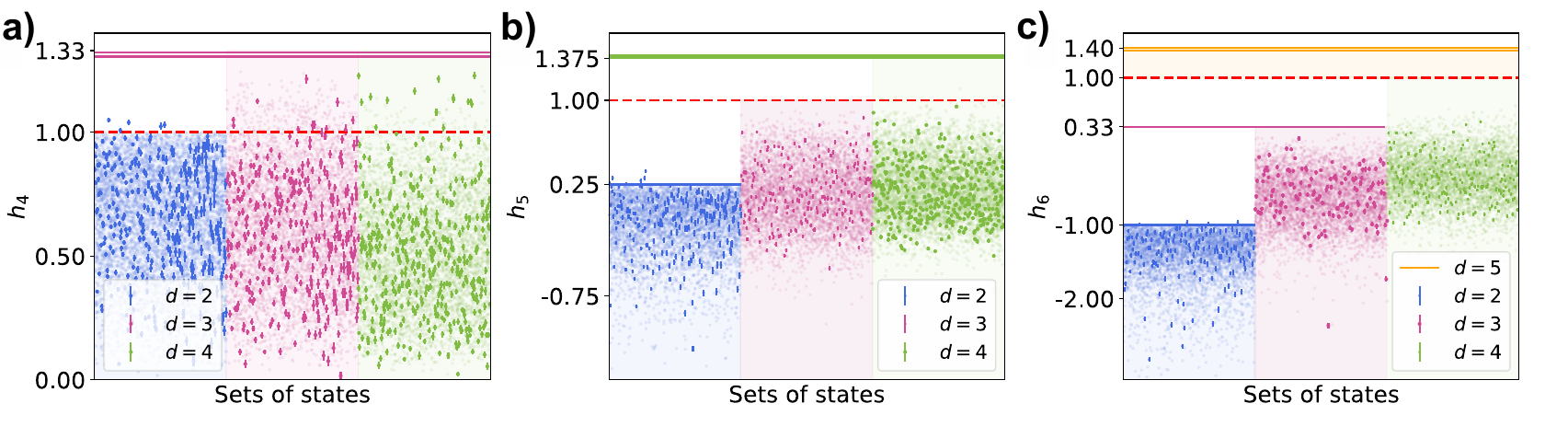}
    \caption{\textbf{$h_n$ inequalities as dimension witnesses.} Random states sampled uniformly in Hilbert spaces of dimensions 2 (blue), 3 (purple) and 4 (green). Bold points correspond to $\sim 200$ experimental preparations of sets of uniformly random qudit states for each dimension. The uncertainties are smaller than the points' size. The shaded points in the background are $\sim 5000$ sets of numerically simulated states for each dimension. The dotted red line indicates the threshold value $1$ for $h_n$ required to witness coherence. a) Distributions of the $h_4(r)$ values. The purple line is the maximum violation of the $h_4$ inequality for state dimension $d=3$. b) Same analysis for the $h_5$ inequalities. Here, we observe a lower bound for $d=2$, highlighted by the blue line, that allows better discrimination of high-dimensional sets of states. The green line is the maximum violation achieved with the set which includes states of $d=4$. c) Distributions for the $h_6$ inequality. We report only the maximum violation measured for the coherence witness since our setup is not a universal state preparator for $d=5$. Blue and purple lines highlight the maximum values of $h_6$ for $d=2$ and $d=3$ respectively.}
    \label{fig:dim_witness}
\end{figure*}

We tested firstly the effectiveness of $h_4$, $h_5$, and $h_6$ inequalities as coherence witnesses. In Fig. \ref{fig:qudit} we show the circuits to prepare and measure 3-, 4- and 5-mode qudits. 
In particular, the red part of the circuits for qutrit (Fig. \ref{fig:qudit}a) and ququart (Fig. \ref{fig:qudit}b) are  universal state preparators when a single photon enters the device respectively in inputs 0 and 1. In the case of 5-mode qudits, the 6-mode {UPP} does not have enough layers of MZIs to implement  independent universal preparation and measurement stations. However, the circuit in Fig. \ref{fig:qudit}c can prepare a set of six 5-mode qudit states that maximize $h_6$. In the figure, we also report the $h_n$ values together with the matrix of the pairwise overlaps. We estimated the violations by considering only the upper triangular part of such a matrix, i.e $r_{i,j}$ with $i>j$. 
Further details on the sets of states used to violate the inequalities, and a discussion on the sources of noise in the experimental measurements are reported in Supplementary Notes 6 and 7.

We then move to the experimental test of $h_n$ as dimension witnesses by sampling uniformly random states spanned by bases of different dimensions. The distributions of the values obtained for the l.h.s. of the $h_n$ inequalities for $d=2$, $d=3$ and  $d=4$ are reported in Fig. \ref{fig:dim_witness}, together with the maximum theoretical values of $h_n$ for systems of those dimensions. The experimental data confirm the theoretical predictions. This provides further insight on the power of the $h_n$ as dimension witnesses.  In fact, the uniform sampling of the states tries to answer the following question: how much information about the dimension of the space can we retrieve from the value of the l.h.s. of the $h_n$ inequalities, without knowing the optimal set that maximizes the violation?
 We did not sample random states in $d=5$ since our circuits are not universal state preparators and measurement devices for qudits of this dimension. Table \ref{tab:exp_dim} summarizes the maximum values of the functionals $h_n$ obtained 
 in the random sampling and in the previous analysis dedicated to coherence witnesses. We see that the maximum violations are not typically achievable when sampling random states. However, the $h_n$ become very effective in discerning systems of $d>2$ for increasing values of $n$.

\begin{table}[t]
    \centering
    \begin{tabular}{p{0.5cm} p{1.75cm} p{1.7cm} p{1.7cm} p{1.7cm}}
    \toprule
      & \textbf{$d=2$} & \textbf{$d=3$} & \textbf{$d=4$} & \textbf{$d=5$}\\
     \midrule
     \textbf{$h_4$} & $1.05\pm0.01$ & $\mathbf{1.31 \pm 0.01}$ & $1.23\pm0.01$ & \\ 
     \textbf{$h_5$} & $0.36\pm0.02$ & $0.84\pm0.02$ &  $\mathbf{1.39 \pm 0.01}$ &   \\ 
    \textbf{$h_6$} & $-0.96 \pm0.04$ & $0.17\pm 0.03$ & $0.40\pm 0.02$ & $\mathbf{1.38\pm 0.01}$  \\
    \bottomrule
\end{tabular}
\caption{\textbf{Experimental results for witnesses of coherence and dimension.} In bold the $h_n$ values that we measured for the coherence witnesses, and in Roman the maximum experimental values measured in the uniform random sampling of $\sim 200$ sets of states of dimension $d$.}
\label{tab:exp_dim}
\end{table}

In summary, we showed how to exploit new families of overlap inequalities to witness coherence in qudit systems. The coherence is certified when $h_n>1$. Furthermore, the $h_n$ inequalities introduced in Ref.~\cite{Wagner2022} and tested here are only violated by systems having both coherence and a dimension $d\ge n-1$. Even when $h_n<1$, while not witnessing coherence, the value of $h_n$ still provides information on the dimension.


\section*{Discussion}

In this work, we have characterized how quantum information is processed within {a six-mode programmable integrated UPP}. To do so, we witnessed - and used - two different notions of quantum nonclassicality, namely generalized contextuality, and coherence. 
Our characterization of nonclassicality is done in a way that depends on the dimension of the Hilbert space generated by single-photon interference through the paths of the programmable device. Our analysis begins with the simplest scenario, where we use a subsection of the device to implement a two-mode Mach-Zehnder interferometer {MZI}. We demonstrate the presence of generalized contextuality within the MZI by violation of a recently introduced novel generalized noncontextuality inequality. We show that this resource is used to achieve efficiencies in the task of quantum interrogation that are higher than those possible by any noncontextual model that reproduces the same operational constraints considered in our experiment.

Then we proceeded to investigate nonclassicality generated by a single photon able to propagate through gradually larger portions of the interferometer. In doing so, we introduce a novel theoretical perspective in coherence theory: quantum coherence achievable only by qudits. We show that a family of inequalities is capable of identifying coherence that can only be witnessed when the totality of the Hilbert space dimension considered is used in non-trivial ways. We experimentally measure the presence of this kind of coherence for a single photon interfering in up to 5 modes. Via numerical simulations, we demonstrate that this family of inequalities is a simultaneous witness of coherence and Hilbert space dimension $d$ up to $d = 2^{12}$. 

{Our certification scheme leaves some opportunities for further theoretical investigation. For instance, while our scheme is not device independent, we believe that in the future it may be suitable for a description in the semi-device independent framework, since our single requirement over the data is that it corresponds to two-state overlaps, possibly interpreted as a promise over the possible measurements. Quantum states, Hilbert spaces and physical devices are arbitrary. Also, due to the invariant properties of the inequalities, their maximization will likely be related to the task of self-testing~\cite{navascues2023selftesting}, and techniques used there might be applicable in our case.}

We believe that violation of these inequalities can be exploited in the future as a novel certification technique benchmarking nontrivial high-dimensional coherence, and that may be related to  hardness of quantum computation. Moreover, the theoretical results presented here apply to any platform for quantum computation, and not just photonics. 

\section*{Methods}

\subsection*{Experimental setup}
\noindent  

A BBO crystal is pumped by a pulsed-laser at the wavelength of $\lambda = 392.5$ nm. The spontaneous parametric down conversion (SPDC) process generates photon pairs at $\lambda = 785$ nm.  In the experiment we focus on one pair emissions, much more likely to happen than multi-pair generation. 

One of the two photons is employed as a trigger signal {while the other one is injected in a universal photonic processor {(UPP), i.e. a fully-programmable multi-mode interferometer.} This device consists in a {waveguide circuit}, fabricated in-house by the femtosecond laser writing technology in an alumino-borosilicate glass chip.  }
The scheme of the {photonic circuit of the {six-mode} {UPP} is reported in Fig.~\ref{fig:chip}c and  follows the decomposition into a {rectangular network of {beam splitters} and phase shifters devised in \cite{Clements2016}.} 
{The beam splitters in the scheme are actually MZIs (see inset in Fig.~\ref{fig:chip}c), which provide variable splitting ratios depending on the value of the internal phase. Dynamic reconfiguration of the {UPP} operation is accomplished by thermo-optic phase shifters, which enables the active control of the values of the phase terms placed inside and outside the cascaded MZIs. The input and output ports of the waveguide circuit are optically connected via fiber arrays (single mode fibers at the input, multi-mode fibers at the output, see also Fig.~\ref{fig:chip}b).
 
In particular, the phase shifters are based on {gold microheaters}, deposited and patterned on the chip surface. Upon driving suitable currents into the {microheaters}, local heating of the substrate is achieved in a precise and controlled way. Such local heating induces in turn a refractive index change and thus controlled phase delays in the waveguides due to the thermo-optic effect.  A careful calibration of the phase shifters allows to implement with our {UPP} any linear unitary transformation of the input optical modes. Such calibration is performed by classical coherent light and does not rely on the quantum theory we test in our experiments. More details on the design, fabrication and calibration process of the {UPP} are provided in Supplementary Note 8.

Finally, t}he outputs and the trigger photon fibers are connected to avalanche-photodiode single-photon detectors (APDs). The detector signals are processed by a time-to-digital converter (TDC) that counts the two-fold coincidence between the chip outputs and the trigger photon. 

\subsection*{Noise model for the quantum interrogation experiment}
The main sources of noise that need to be considered for the quantum interrogation experiment are mismatches in the reflectivity $r$ of the two {beam splitters}, as well as dark counts of the detectors. These become more significant for $r\sim 1$, since in this regime both $p_{succ}$ and $p_{abs}$ tend to be very small.
Our noise-corrected formula for $\eta$  will be
\begin{equation}
   \eta_{noisy}= \frac{r (1-(1\pm\varepsilon)r)+n_1}{r (1-(1\pm\varepsilon)r)-r+1+n_1+n_2},
\end{equation}
where $\varepsilon$ is the percentage mismatch of the two reflectivities and $n_1$ and $n_2$ indicate the ratio between dark counts and signal.

The red area in Fig. \ref{fig:expbomb_int}b encloses the set of curves resulting from a span of the parameters $\varepsilon$, $n_1$ and $n_2$ in the range 0 and 0.005. Our noise model predicts large deviations from the ideal quantum efficiency when the beam-splitting ratio approaches $r=1$.

\subsection*{Pairwise overlap inequalities characterizing incoherent sets of states}

We will briefly recall the arguments from Refs.~\cite{galvaobrod2020quantum,wagner2022coherence} for why the selected inequalities are capable of bounding the overlaps of sets of incoherent states, and how a graph-theoretic construction enables finding such inequalities. 

For a diagonal set of states, with respect to \textit{any} basis, the two-state overlaps $\text{Tr}(\rho_i\rho_j)$ of the elements $\{\rho_i\}_i$ of that set represent the probability of obtaining equal outcomes from the states upon measurements with respect to the reference basis. When such an interpretation is possible, we say that the set of states is coherence-free, or incoherent, and the reference basis is a coherence-free model for this set. For each set of states $\underline{\rho}$ it is possible to define an edge-weighted graph $(G,r)$ with vertices of the graph $V(G)$ representing quantum states and edges $E(G)$ having weights $r_e \equiv r_{i,j} := \text{Tr}(\rho_i\rho_j)$. If we collect all weights into a tuple $r = (r_e)_e \in \mathbb{R}^{|E(G)|}$ with $|E(G)|$ the total number of edges in the (finite) graph $G$, it is possible to bound all tuples $r$ resulting from states that are diagonal with respect to \textit{some} basis. This was studied in Ref.~\cite{galvaobrod2020quantum,Wagner2022}, and the bounds are given by linear inequalities. Similar to what is already well established in Bell nonlocality~\cite{brunner2014nonlocality} and contextuality~\cite{budroni2021kochen}, inequalities bound descriptions of classicality; those inequalities are built to bound coherence-free models for $\underline{\rho}$.

The basic reasoning described in its most general form starts by acknowledging that, if all states are incoherent, there exists some set of output outcomes $\Omega$ with respect to which  the weights $r_{i,j}$ represent the probability that, upon independently measuring $\rho_i,\rho_j$ from adjacent nodes $i,j$ in the graph $G$ we obtain equal outcomes. As an example, consider two adjacent nodes described by the maximally mixed qubit state $I_2/2$. In this case the edge-weight corresponds to the two-state overlap $\text{Tr}(I_2/4) = 1/2$. This is also the probability that we measure these two states with respect to the basis that diagonalizes them, and obtain equal outcomes, i.e., the probability that two ideal coins return equal outcomes.

To find overlap inequalities the algorithm then goes as follows: for a given graph $G$ we have all conceivable tuples $r = (r_e)_e$ described by all assignments $0$ or $1$. Any tuple of two-state overlaps will be inside the polytope described by the hypercube $[0,1]^{|E(G)|}$. Using the assumption of incoherent states and what this forces overlaps to satisfy, i.e., to be the probability of equal outcomes with respect to some set of labels, one can \textit{forbid} some assignments from the hypercube. The remaining ones are just those possible from an incoherent interpretation of states and edge-weights. In convex geometry, the convex hull of this set of assignments defines a polytope, and using standard tools it is possible to find the facet-defining inequalities for this polytope, given that the vertices are known. These facet-defining inequalities are the inequalities we probe in our work. By construction, inequality violations immediately contradict the hypothesis of incoherent states, hence serving as witnesses of set-coherence. 

In this work, the graphs considered are only complete graphs $K_n$, a graph where every is connected to every other node, for $n=3,4,5,6$. The label $n$ describes the number of nodes for the graph. The inequalities described by Eq.~\eqref{eq: family} form one among many inequalities that can be derived for the graph $K_n$, having the particularly interesting properties we discuss in the main text. In the Supplemental Material 
we describe a different inequality from the same graph that has a large violation that can be violated with qubits.  For a more detailed description we refer to Ref.~\cite{Wagner2022}.

\subsection*{Standard quantum interrogation task}

For the interrogation task one assumes that there might be some object in one of the interferometer's arms, depicted as a question mark in Fig.~\ref{fig:MZI}a. The object, if present, is assumed to completely absorb incoming photons, and the task is to detect the presence of the object without any photons being absorbed by the object. It is somewhat surprising that such a task can be accomplished at all, but using the fact that {beam splitters} create coherence, a simple set-up is capable of performing the task. 
Letting two {50:50} beam splitters and no difference in phase between the arms, in case there is no object all the coherence created in the preparation stage is destroyed in the measurement stage, and one of the detectors never lights up. However, in the presence of an object, it acts as a complete path-information measurement device inside the interferometer, projecting the state of the photon inside the interferometer to the arm where the object is \textit{not} present. This happens with 50\% probability. When the non-absorbed photon hits the second {beam splitter} in the measurement stage there is a 50\% chance that the detector that would be dark in the absence of an object now lights up. Therefore, with probability 25\% one can detect the presence of the object without directly interacting with it. Surprisingly, there are noncontextual models capable of reproducing precisely this feature~\cite{catani2021whyinterference}.

We can vary the beam splitting ratios for the protocol, which allows us to have an improvement in the efficiency of the task. It is crucial that the second beam splitter in the MZI perfectly reverses the first beam splitter's action, so that whenever we have no object inside the interferometer, or we have an inactive object, one of the detectors never clicks (remaining dark). The probabilities $p_{succ}$ and $p_{abs}$ depend on the beam-splitting ratio characterizing the beam splitter. The probability $p_{abs}$ is simply the probability that once the photon enters the device it goes to the arm that has the object. The probability $p_{succ}$ is the probability that the photon does not reach the object, hence going to the other arm, and leaves the MZI via the output port that is dark in the absence of the object.

\bibliography{bibliography}

\begin{thebibliography}{48}%
\makeatletter
\providecommand \@ifxundefined [1]{%
 \@ifx{#1\undefined}
}%
\providecommand \@ifnum [1]{%
 \ifnum #1\expandafter \@firstoftwo
 \else \expandafter \@secondoftwo
 \fi
}%
\providecommand \@ifx [1]{%
 \ifx #1\expandafter \@firstoftwo
 \else \expandafter \@secondoftwo
 \fi
}%
\providecommand \natexlab [1]{#1}%
\providecommand \enquote  [1]{``#1''}%
\providecommand \bibnamefont  [1]{#1}%
\providecommand \bibfnamefont [1]{#1}%
\providecommand \citenamefont [1]{#1}%
\providecommand \href@noop [0]{\@secondoftwo}%
\providecommand \href [0]{\begingroup \@sanitize@url \@href}%
\providecommand \@href[1]{\@@startlink{#1}\@@href}%
\providecommand \@@href[1]{\endgroup#1\@@endlink}%
\providecommand \@sanitize@url [0]{\catcode `\\12\catcode `\$12\catcode
  `\&12\catcode `\#12\catcode `\^12\catcode `\_12\catcode `\%12\relax}%
\providecommand \@@startlink[1]{}%
\providecommand \@@endlink[0]{}%
\providecommand \url  [0]{\begingroup\@sanitize@url \@url }%
\providecommand \@url [1]{\endgroup\@href {#1}{\urlprefix }}%
\providecommand \urlprefix  [0]{URL }%
\providecommand \Eprint [0]{\href }%
\providecommand \doibase [0]{https://doi.org/}%
\providecommand \selectlanguage [0]{\@gobble}%
\providecommand \bibinfo  [0]{\@secondoftwo}%
\providecommand \bibfield  [0]{\@secondoftwo}%
\providecommand \translation [1]{[#1]}%
\providecommand \BibitemOpen [0]{}%
\providecommand \bibitemStop [0]{}%
\providecommand \bibitemNoStop [0]{.\EOS\space}%
\providecommand \EOS [0]{\spacefactor3000\relax}%
\providecommand \BibitemShut  [1]{\csname bibitem#1\endcsname}%
\let\auto@bib@innerbib\@empty
\bibitem [{\citenamefont {Aaronson}\ and\ \citenamefont
  {Arkhipov}(2011)}]{aaronson2011computational}%
  \BibitemOpen
  \bibfield  {author} {\bibinfo {author} {\bibfnamefont {S.}~\bibnamefont
  {Aaronson}}\ and\ \bibinfo {author} {\bibfnamefont {A.}~\bibnamefont
  {Arkhipov}},\ }\bibfield  {title} {\bibinfo {title} {The computational
  complexity of linear optics},\ }in\ \href
  {https://doi.org/10.1145/1993636.1993682} {\emph {\bibinfo {booktitle}
  {Proceedings of the 43rd annual {ACM} symposium on Theory of computing -
  {STOC} {\textquotesingle}11}}}\ (\bibinfo  {publisher} {{ACM} Press},\
  \bibinfo {year} {2011})\BibitemShut {NoStop}%
\bibitem [{\citenamefont {Huang}\ \emph {et~al.}(2020)\citenamefont {Huang},
  \citenamefont {Kueng},\ and\ \citenamefont
  {Preskill}}]{huang2020predictingmany}%
  \BibitemOpen
  \bibfield  {author} {\bibinfo {author} {\bibfnamefont {H.-Y.}\ \bibnamefont
  {Huang}}, \bibinfo {author} {\bibfnamefont {R.}~\bibnamefont {Kueng}},\ and\
  \bibinfo {author} {\bibfnamefont {J.}~\bibnamefont {Preskill}},\ }\bibfield
  {title} {\bibinfo {title} {Predicting many properties of a quantum system
  from very few measurements},\ }\href
  {https://doi.org/10.1038/s41567-020-0932-7} {\bibfield  {journal} {\bibinfo
  {journal} {Nature Physics}\ }\textbf {\bibinfo {volume} {16}},\ \bibinfo
  {pages} {1050} (\bibinfo {year} {2020})}\BibitemShut {NoStop}%
\bibitem [{\citenamefont {Shor}(1999)}]{shor1999polynomial}%
  \BibitemOpen
  \bibfield  {author} {\bibinfo {author} {\bibfnamefont {P.~W.}\ \bibnamefont
  {Shor}},\ }\bibfield  {title} {\bibinfo {title} {Polynomial-time algorithms
  for prime factorization and discrete logarithms on a quantum computer},\
  }\href {https://doi.org/10.1137/s0036144598347011} {\bibfield  {journal}
  {\bibinfo  {journal} {{SIAM} Review}\ }\textbf {\bibinfo {volume} {41}},\
  \bibinfo {pages} {303} (\bibinfo {year} {1999})}\BibitemShut {NoStop}%
\bibitem [{\citenamefont {Saha}\ and\ \citenamefont
  {Chaturvedi}(2019)}]{saha2019preparation}%
  \BibitemOpen
  \bibfield  {author} {\bibinfo {author} {\bibfnamefont {D.}~\bibnamefont
  {Saha}}\ and\ \bibinfo {author} {\bibfnamefont {A.}~\bibnamefont
  {Chaturvedi}},\ }\bibfield  {title} {\bibinfo {title} {Preparation
  contextuality as an essential feature underlying quantum communication
  advantage},\ }\href {https://doi.org/10.1103/physreva.100.022108} {\bibfield
  {journal} {\bibinfo  {journal} {Physical Review A}\ }\textbf {\bibinfo
  {volume} {100}},\ \bibinfo {pages} {022108} (\bibinfo {year}
  {2019})}\BibitemShut {NoStop}%
\bibitem [{\citenamefont {Spekkens}\ \emph {et~al.}(2009)\citenamefont
  {Spekkens}, \citenamefont {Buzacott}, \citenamefont {Keehn}, \citenamefont
  {Toner},\ and\ \citenamefont {Pryde}}]{spekkens2009parity}%
  \BibitemOpen
  \bibfield  {author} {\bibinfo {author} {\bibfnamefont {R.~W.}\ \bibnamefont
  {Spekkens}}, \bibinfo {author} {\bibfnamefont {D.~H.}\ \bibnamefont
  {Buzacott}}, \bibinfo {author} {\bibfnamefont {A.~J.}\ \bibnamefont {Keehn}},
  \bibinfo {author} {\bibfnamefont {B.}~\bibnamefont {Toner}},\ and\ \bibinfo
  {author} {\bibfnamefont {G.~J.}\ \bibnamefont {Pryde}},\ }\bibfield  {title}
  {\bibinfo {title} {{Preparation Contextuality Powers Parity-Oblivious
  Multiplexing}},\ }\href {https://doi.org/10.1103/physrevlett.102.010401}
  {\bibfield  {journal} {\bibinfo  {journal} {Physical Review Letters}\
  }\textbf {\bibinfo {volume} {102}},\ \bibinfo {pages} {010401} (\bibinfo
  {year} {2009})}\BibitemShut {NoStop}%
\bibitem [{\citenamefont {Buhrman}\ \emph {et~al.}(2010)\citenamefont
  {Buhrman}, \citenamefont {Cleve}, \citenamefont {Massar},\ and\ \citenamefont
  {de~Wolf}}]{Buhrman10}%
  \BibitemOpen
  \bibfield  {author} {\bibinfo {author} {\bibfnamefont {H.}~\bibnamefont
  {Buhrman}}, \bibinfo {author} {\bibfnamefont {R.}~\bibnamefont {Cleve}},
  \bibinfo {author} {\bibfnamefont {S.}~\bibnamefont {Massar}},\ and\ \bibinfo
  {author} {\bibfnamefont {R.}~\bibnamefont {de~Wolf}},\ }\bibfield  {title}
  {\bibinfo {title} {Nonlocality and communication complexity},\ }\href
  {https://doi.org/10.1103/RevModPhys.82.665} {\bibfield  {journal} {\bibinfo
  {journal} {Rev. Mod. Phys.}\ }\textbf {\bibinfo {volume} {82}},\ \bibinfo
  {pages} {665} (\bibinfo {year} {2010})}\BibitemShut {NoStop}%
\bibitem [{\citenamefont {Ekert}(1992)}]{ekert1992quantumcryptography}%
  \BibitemOpen
  \bibfield  {author} {\bibinfo {author} {\bibfnamefont {A.~K.}\ \bibnamefont
  {Ekert}},\ }\bibfield  {title} {\bibinfo {title} {{Quantum Cryptography and
  Bell's Theorem}},\ }in\ \href {https://doi.org/10.1007/978-1-4615-3386-3_34}
  {\emph {\bibinfo {booktitle} {Quantum Measurements in Optics}}}\ (\bibinfo
  {publisher} {Springer {US}},\ \bibinfo {year} {1992})\ pp.\ \bibinfo {pages}
  {413--418}\BibitemShut {NoStop}%
\bibitem [{\citenamefont {Schmid}\ and\ \citenamefont
  {Spekkens}(2018)}]{schmid2018contextual}%
  \BibitemOpen
  \bibfield  {author} {\bibinfo {author} {\bibfnamefont {D.}~\bibnamefont
  {Schmid}}\ and\ \bibinfo {author} {\bibfnamefont {R.~W.}\ \bibnamefont
  {Spekkens}},\ }\bibfield  {title} {\bibinfo {title} {{Contextual Advantage
  for State Discrimination}},\ }\href
  {https://doi.org/10.1103/physrevx.8.011015} {\bibfield  {journal} {\bibinfo
  {journal} {Physical Review X}\ }\textbf {\bibinfo {volume} {8}},\ \bibinfo
  {pages} {011015} (\bibinfo {year} {2018})}\BibitemShut {NoStop}%
\bibitem [{\citenamefont {Horodecki}\ \emph {et~al.}(2009)\citenamefont
  {Horodecki}, \citenamefont {Horodecki}, \citenamefont {Horodecki},\ and\
  \citenamefont {Horodecki}}]{horodecki2009quantumentanglement}%
  \BibitemOpen
  \bibfield  {author} {\bibinfo {author} {\bibfnamefont {R.}~\bibnamefont
  {Horodecki}}, \bibinfo {author} {\bibfnamefont {P.}~\bibnamefont
  {Horodecki}}, \bibinfo {author} {\bibfnamefont {M.}~\bibnamefont
  {Horodecki}},\ and\ \bibinfo {author} {\bibfnamefont {K.}~\bibnamefont
  {Horodecki}},\ }\bibfield  {title} {\bibinfo {title} {Quantum entanglement},\
  }\href {https://doi.org/10.1103/revmodphys.81.865} {\bibfield  {journal}
  {\bibinfo  {journal} {Reviews of Modern Physics}\ }\textbf {\bibinfo {volume}
  {81}},\ \bibinfo {pages} {865} (\bibinfo {year} {2009})}\BibitemShut
  {NoStop}%
\bibitem [{\citenamefont {Streltsov}\ \emph {et~al.}(2017)\citenamefont
  {Streltsov}, \citenamefont {Adesso},\ and\ \citenamefont
  {Plenio}}]{streltsov2017quantumcoherence}%
  \BibitemOpen
  \bibfield  {author} {\bibinfo {author} {\bibfnamefont {A.}~\bibnamefont
  {Streltsov}}, \bibinfo {author} {\bibfnamefont {G.}~\bibnamefont {Adesso}},\
  and\ \bibinfo {author} {\bibfnamefont {M.~B.}\ \bibnamefont {Plenio}},\
  }\bibfield  {title} {\bibinfo {title} {\textit{Colloquium}: Quantum coherence
  as a resource},\ }\href {https://doi.org/10.1103/revmodphys.89.041003}
  {\bibfield  {journal} {\bibinfo  {journal} {Reviews of Modern Physics}\
  }\textbf {\bibinfo {volume} {89}},\ \bibinfo {pages} {041003} (\bibinfo
  {year} {2017})}\BibitemShut {NoStop}%
\bibitem [{\citenamefont {Brunner}\ \emph {et~al.}(2014)\citenamefont
  {Brunner}, \citenamefont {Cavalcanti}, \citenamefont {Pironio}, \citenamefont
  {Scarani},\ and\ \citenamefont {Wehner}}]{brunner2014nonlocality}%
  \BibitemOpen
  \bibfield  {author} {\bibinfo {author} {\bibfnamefont {N.}~\bibnamefont
  {Brunner}}, \bibinfo {author} {\bibfnamefont {D.}~\bibnamefont {Cavalcanti}},
  \bibinfo {author} {\bibfnamefont {S.}~\bibnamefont {Pironio}}, \bibinfo
  {author} {\bibfnamefont {V.}~\bibnamefont {Scarani}},\ and\ \bibinfo {author}
  {\bibfnamefont {S.}~\bibnamefont {Wehner}},\ }\bibfield  {title} {\bibinfo
  {title} {Bell nonlocality},\ }\href
  {https://doi.org/10.1103/revmodphys.86.419} {\bibfield  {journal} {\bibinfo
  {journal} {Reviews of Modern Physics}\ }\textbf {\bibinfo {volume} {86}},\
  \bibinfo {pages} {419} (\bibinfo {year} {2014})}\BibitemShut {NoStop}%
\bibitem [{\citenamefont {Budroni}\ \emph {et~al.}(2022)\citenamefont
  {Budroni}, \citenamefont {Cabello}, \citenamefont {G\"uhne}, \citenamefont
  {Kleinmann},\ and\ \citenamefont {Larsson}}]{budroni2021kochen}%
  \BibitemOpen
  \bibfield  {author} {\bibinfo {author} {\bibfnamefont {C.}~\bibnamefont
  {Budroni}}, \bibinfo {author} {\bibfnamefont {A.}~\bibnamefont {Cabello}},
  \bibinfo {author} {\bibfnamefont {O.}~\bibnamefont {G\"uhne}}, \bibinfo
  {author} {\bibfnamefont {M.}~\bibnamefont {Kleinmann}},\ and\ \bibinfo
  {author} {\bibfnamefont {J.-A.}\ \bibnamefont {Larsson}},\ }\bibfield
  {title} {\bibinfo {title} {{Kochen-Specker contextuality}},\ }\href
  {https://doi.org/10.1103/RevModPhys.94.045007} {\bibfield  {journal}
  {\bibinfo  {journal} {Rev. Mod. Phys.}\ }\textbf {\bibinfo {volume} {94}},\
  \bibinfo {pages} {045007} (\bibinfo {year} {2022})}\BibitemShut {NoStop}%
\bibitem [{\citenamefont {Spekkens}(2005)}]{spekkens2005contextuality}%
  \BibitemOpen
  \bibfield  {author} {\bibinfo {author} {\bibfnamefont {R.~W.}\ \bibnamefont
  {Spekkens}},\ }\bibfield  {title} {\bibinfo {title} {Contextuality for
  preparations, transformations, and unsharp measurements},\ }\href
  {https://doi.org/10.1103/PhysRevA.71.052108} {\bibfield  {journal} {\bibinfo
  {journal} {Physical Review A}\ }\textbf {\bibinfo {volume} {71}},\ \bibinfo
  {pages} {052108} (\bibinfo {year} {2005})}\BibitemShut {NoStop}%
\bibitem [{\citenamefont {Catani}\ \emph {et~al.}(2023)\citenamefont {Catani},
  \citenamefont {Leifer}, \citenamefont {Schmid},\ and\ \citenamefont
  {Spekkens}}]{catani2021whyinterference}%
  \BibitemOpen
  \bibfield  {author} {\bibinfo {author} {\bibfnamefont {L.}~\bibnamefont
  {Catani}}, \bibinfo {author} {\bibfnamefont {M.}~\bibnamefont {Leifer}},
  \bibinfo {author} {\bibfnamefont {D.}~\bibnamefont {Schmid}},\ and\ \bibinfo
  {author} {\bibfnamefont {R.~W.}\ \bibnamefont {Spekkens}},\ }\bibfield
  {title} {\bibinfo {title} {Why interference phenomena do not capture the
  essence of quantum theory},\ }\href
  {https://doi.org/10.22331/q-2023-09-25-1119} {\bibfield  {journal} {\bibinfo
  {journal} {{Quantum}}\ }\textbf {\bibinfo {volume} {7}},\ \bibinfo {pages}
  {1119} (\bibinfo {year} {2023})}\BibitemShut {NoStop}%
\bibitem [{\citenamefont {Spekkens}(2007)}]{spekkens2007toymodel}%
  \BibitemOpen
  \bibfield  {author} {\bibinfo {author} {\bibfnamefont {R.~W.}\ \bibnamefont
  {Spekkens}},\ }\bibfield  {title} {\bibinfo {title} {Evidence for the
  epistemic view of quantum states: A toy theory},\ }\href
  {https://doi.org/10.1103/physreva.75.032110} {\bibfield  {journal} {\bibinfo
  {journal} {Physical Review A}\ }\textbf {\bibinfo {volume} {75}},\ \bibinfo
  {pages} {032110} (\bibinfo {year} {2007})}\BibitemShut {NoStop}%
\bibitem [{\citenamefont {Bartlett}\ \emph {et~al.}(2012)\citenamefont
  {Bartlett}, \citenamefont {Rudolph},\ and\ \citenamefont
  {Spekkens}}]{bartlett2012reconstruction}%
  \BibitemOpen
  \bibfield  {author} {\bibinfo {author} {\bibfnamefont {S.~D.}\ \bibnamefont
  {Bartlett}}, \bibinfo {author} {\bibfnamefont {T.}~\bibnamefont {Rudolph}},\
  and\ \bibinfo {author} {\bibfnamefont {R.~W.}\ \bibnamefont {Spekkens}},\
  }\bibfield  {title} {\bibinfo {title} {{Reconstruction of Gaussian quantum
  mechanics from Liouville mechanics with an epistemic restriction}},\ }\href
  {https://doi.org/10.1103/physreva.86.012103} {\bibfield  {journal} {\bibinfo
  {journal} {Physical Review A}\ }\textbf {\bibinfo {volume} {86}},\ \bibinfo
  {pages} {012103} (\bibinfo {year} {2012})}\BibitemShut {NoStop}%
\bibitem [{\citenamefont {Galv{\~{a}}o}\ and\ \citenamefont
  {Brod}(2020)}]{galvaobrod2020quantum}%
  \BibitemOpen
  \bibfield  {author} {\bibinfo {author} {\bibfnamefont {E.~F.}\ \bibnamefont
  {Galv{\~{a}}o}}\ and\ \bibinfo {author} {\bibfnamefont {D.~J.}\ \bibnamefont
  {Brod}},\ }\bibfield  {title} {\bibinfo {title} {Quantum and classical bounds
  for two-state overlaps},\ }\href
  {https://doi.org/10.1103/physreva.101.062110} {\bibfield  {journal} {\bibinfo
   {journal} {Physical Review A}\ }\textbf {\bibinfo {volume} {101}},\ \bibinfo
  {pages} {062110} (\bibinfo {year} {2020})}\BibitemShut {NoStop}%
\bibitem [{\citenamefont {Giordani}\ \emph {et~al.}(2021)\citenamefont
  {Giordani}, \citenamefont {Esposito}, \citenamefont {Hoch}, \citenamefont
  {Carvacho}, \citenamefont {Brod}, \citenamefont {Galv{\~{a}}o}, \citenamefont
  {Spagnolo},\ and\ \citenamefont {Sciarrino}}]{giordani2021witnessing}%
  \BibitemOpen
  \bibfield  {author} {\bibinfo {author} {\bibfnamefont {T.}~\bibnamefont
  {Giordani}}, \bibinfo {author} {\bibfnamefont {C.}~\bibnamefont {Esposito}},
  \bibinfo {author} {\bibfnamefont {F.}~\bibnamefont {Hoch}}, \bibinfo {author}
  {\bibfnamefont {G.}~\bibnamefont {Carvacho}}, \bibinfo {author}
  {\bibfnamefont {D.~J.}\ \bibnamefont {Brod}}, \bibinfo {author}
  {\bibfnamefont {E.~F.}\ \bibnamefont {Galv{\~{a}}o}}, \bibinfo {author}
  {\bibfnamefont {N.}~\bibnamefont {Spagnolo}},\ and\ \bibinfo {author}
  {\bibfnamefont {F.}~\bibnamefont {Sciarrino}},\ }\bibfield  {title} {\bibinfo
  {title} {Witnesses of coherence and dimension from multiphoton
  indistinguishability tests},\ }\href
  {https://doi.org/10.1103/physrevresearch.3.023031} {\bibfield  {journal}
  {\bibinfo  {journal} {Physical Review Research}\ }\textbf {\bibinfo {volume}
  {3}},\ \bibinfo {pages} {023031} (\bibinfo {year} {2021})}\BibitemShut
  {NoStop}%
\bibitem [{\citenamefont {Wagner}\ \emph
  {et~al.}(2022{\natexlab{a}})\citenamefont {Wagner}, \citenamefont
  {Camillini},\ and\ \citenamefont {Galvão}}]{wagner2022coherence}%
  \BibitemOpen
  \bibfield  {author} {\bibinfo {author} {\bibfnamefont {R.}~\bibnamefont
  {Wagner}}, \bibinfo {author} {\bibfnamefont {A.}~\bibnamefont {Camillini}},\
  and\ \bibinfo {author} {\bibfnamefont {E.~F.}\ \bibnamefont {Galvão}},\
  }\href {https://doi.org/10.48550/ARXIV.2210.05624} {\bibinfo {title}
  {{Coherence and contextuality in a Mach-Zehnder interferometer}}},\ \bibinfo
  {howpublished} {arXiv: 2210.05624 [quant-ph]} (\bibinfo {year}
  {2022}{\natexlab{a}})\BibitemShut {NoStop}%
\bibitem [{\citenamefont {Wagner}\ \emph
  {et~al.}(2022{\natexlab{b}})\citenamefont {Wagner}, \citenamefont {Barbosa},\
  and\ \citenamefont {Galvão}}]{Wagner2022}%
  \BibitemOpen
  \bibfield  {author} {\bibinfo {author} {\bibfnamefont {R.}~\bibnamefont
  {Wagner}}, \bibinfo {author} {\bibfnamefont {R.~S.}\ \bibnamefont
  {Barbosa}},\ and\ \bibinfo {author} {\bibfnamefont {E.~F.}\ \bibnamefont
  {Galvão}},\ }\href {https://doi.org/10.48550/ARXIV.2209.02670} {\bibinfo
  {title} {Inequalities witnessing coherence, nonlocality, and
  contextuality}},\ \bibinfo {howpublished} {arXiv: 2209.02670 [quant-ph]}
  (\bibinfo {year} {2022}{\natexlab{b}})\BibitemShut {NoStop}%
\bibitem [{\citenamefont {Corrielli}\ \emph {et~al.}(2021)\citenamefont
  {Corrielli}, \citenamefont {Crespi},\ and\ \citenamefont
  {Osellame}}]{corrielli2021flm}%
  \BibitemOpen
  \bibfield  {author} {\bibinfo {author} {\bibfnamefont {G.}~\bibnamefont
  {Corrielli}}, \bibinfo {author} {\bibfnamefont {A.}~\bibnamefont {Crespi}},\
  and\ \bibinfo {author} {\bibfnamefont {R.}~\bibnamefont {Osellame}},\
  }\bibfield  {title} {\bibinfo {title} {Femtosecond laser micromachining for
  integrated quantum photonics},\ }\href
  {https://doi.org/10.1515/nanoph-2021-0419} {\bibfield  {journal} {\bibinfo
  {journal} {Nanophotonics}\ }\textbf {\bibinfo {volume} {10}},\ \bibinfo
  {pages} {3789} (\bibinfo {year} {2021})}\BibitemShut {NoStop}%
\bibitem [{\citenamefont {Elitzur}\ and\ \citenamefont
  {Vaidman}(1993)}]{elitzur1993quantum}%
  \BibitemOpen
  \bibfield  {author} {\bibinfo {author} {\bibfnamefont {A.~C.}\ \bibnamefont
  {Elitzur}}\ and\ \bibinfo {author} {\bibfnamefont {L.}~\bibnamefont
  {Vaidman}},\ }\bibfield  {title} {\bibinfo {title} {Quantum mechanical
  interaction-free measurements},\ }\href
  {https://doi.org/https://doi.org/10.1007/BF00736012} {\bibfield  {journal}
  {\bibinfo  {journal} {Foundations of Physics}\ }\textbf {\bibinfo {volume}
  {23}},\ \bibinfo {pages} {987} (\bibinfo {year} {1993})}\BibitemShut
  {NoStop}%
\bibitem [{\citenamefont {Vaidman}(2003)}]{vaidman2003themeaning}%
  \BibitemOpen
  \bibfield  {author} {\bibinfo {author} {\bibfnamefont {L.}~\bibnamefont
  {Vaidman}},\ }\bibfield  {title} {\bibinfo {title} {The {M}eaning of the
  {I}nteraction-{F}ree {M}easurements},\ }\href
  {https://doi.org/10.1023/a:1023767716236} {\bibfield  {journal} {\bibinfo
  {journal} {Foundations of Physics}\ }\textbf {\bibinfo {volume} {33}},\
  \bibinfo {pages} {491} (\bibinfo {year} {2003})}\BibitemShut {NoStop}%
\bibitem [{\citenamefont {Hardy}(1992)}]{hardy1992emptywaves}%
  \BibitemOpen
  \bibfield  {author} {\bibinfo {author} {\bibfnamefont {L.}~\bibnamefont
  {Hardy}},\ }\bibfield  {title} {\bibinfo {title} {On the existence of empty
  waves in quantum theory},\ }\href
  {https://doi.org/10.1016/0375-9601(92)90618-v} {\bibfield  {journal}
  {\bibinfo  {journal} {Physics Letters A}\ }\textbf {\bibinfo {volume}
  {167}},\ \bibinfo {pages} {11} (\bibinfo {year} {1992})}\BibitemShut
  {NoStop}%
\bibitem [{\citenamefont {Hosten}\ \emph {et~al.}(2006)\citenamefont {Hosten},
  \citenamefont {Rakher}, \citenamefont {Barreiro}, \citenamefont {Peters},\
  and\ \citenamefont {Kwiat}}]{hosten2006counterfactual}%
  \BibitemOpen
  \bibfield  {author} {\bibinfo {author} {\bibfnamefont {O.}~\bibnamefont
  {Hosten}}, \bibinfo {author} {\bibfnamefont {M.~T.}\ \bibnamefont {Rakher}},
  \bibinfo {author} {\bibfnamefont {J.~T.}\ \bibnamefont {Barreiro}}, \bibinfo
  {author} {\bibfnamefont {N.~A.}\ \bibnamefont {Peters}},\ and\ \bibinfo
  {author} {\bibfnamefont {P.~G.}\ \bibnamefont {Kwiat}},\ }\bibfield  {title}
  {\bibinfo {title} {Counterfactual quantum computation through quantum
  interrogation},\ }\href {https://doi.org/10.1038/nature04523} {\bibfield
  {journal} {\bibinfo  {journal} {Nature}\ }\textbf {\bibinfo {volume} {439}},\
  \bibinfo {pages} {949} (\bibinfo {year} {2006})}\BibitemShut {NoStop}%
\bibitem [{\citenamefont {Liu}\ \emph {et~al.}(2012)\citenamefont {Liu},
  \citenamefont {Ju}, \citenamefont {Liang}, \citenamefont {Tang},
  \citenamefont {Tu}, \citenamefont {Zhou}, \citenamefont {Peng}, \citenamefont
  {Chen}, \citenamefont {Chen}, \citenamefont {Chen},\ and\ \citenamefont
  {Pan}}]{liu2012experimental}%
  \BibitemOpen
  \bibfield  {author} {\bibinfo {author} {\bibfnamefont {Y.}~\bibnamefont
  {Liu}}, \bibinfo {author} {\bibfnamefont {L.}~\bibnamefont {Ju}}, \bibinfo
  {author} {\bibfnamefont {X.-L.}\ \bibnamefont {Liang}}, \bibinfo {author}
  {\bibfnamefont {S.-B.}\ \bibnamefont {Tang}}, \bibinfo {author}
  {\bibfnamefont {G.-L.~S.}\ \bibnamefont {Tu}}, \bibinfo {author}
  {\bibfnamefont {L.}~\bibnamefont {Zhou}}, \bibinfo {author} {\bibfnamefont
  {C.-Z.}\ \bibnamefont {Peng}}, \bibinfo {author} {\bibfnamefont
  {K.}~\bibnamefont {Chen}}, \bibinfo {author} {\bibfnamefont {T.-Y.}\
  \bibnamefont {Chen}}, \bibinfo {author} {\bibfnamefont {Z.-B.}\ \bibnamefont
  {Chen}},\ and\ \bibinfo {author} {\bibfnamefont {J.-W.}\ \bibnamefont
  {Pan}},\ }\bibfield  {title} {\bibinfo {title} {{Experimental Demonstration
  of Counterfactual Quantum Communication}},\ }\href
  {https://doi.org/10.1103/physrevlett.109.030501} {\bibfield  {journal}
  {\bibinfo  {journal} {Physical Review Letters}\ }\textbf {\bibinfo {volume}
  {109}},\ \bibinfo {pages} {030501} (\bibinfo {year} {2012})}\BibitemShut
  {NoStop}%
\bibitem [{\citenamefont {Lemos}\ \emph {et~al.}(2014)\citenamefont {Lemos},
  \citenamefont {Borish}, \citenamefont {Cole}, \citenamefont {Ramelow},
  \citenamefont {Lapkiewicz},\ and\ \citenamefont
  {Zeilinger}}]{lemos2014quantumimaging}%
  \BibitemOpen
  \bibfield  {author} {\bibinfo {author} {\bibfnamefont {G.~B.}\ \bibnamefont
  {Lemos}}, \bibinfo {author} {\bibfnamefont {V.}~\bibnamefont {Borish}},
  \bibinfo {author} {\bibfnamefont {G.~D.}\ \bibnamefont {Cole}}, \bibinfo
  {author} {\bibfnamefont {S.}~\bibnamefont {Ramelow}}, \bibinfo {author}
  {\bibfnamefont {R.}~\bibnamefont {Lapkiewicz}},\ and\ \bibinfo {author}
  {\bibfnamefont {A.}~\bibnamefont {Zeilinger}},\ }\bibfield  {title} {\bibinfo
  {title} {Quantum imaging with undetected photons},\ }\href
  {https://doi.org/10.1038/nature13586} {\bibfield  {journal} {\bibinfo
  {journal} {Nature}\ }\textbf {\bibinfo {volume} {512}},\ \bibinfo {pages}
  {409} (\bibinfo {year} {2014})}\BibitemShut {NoStop}%
\bibitem [{\citenamefont {Lahiri}\ \emph {et~al.}(2015)\citenamefont {Lahiri},
  \citenamefont {Lapkiewicz}, \citenamefont {Lemos},\ and\ \citenamefont
  {Zeilinger}}]{lahiri2015theory}%
  \BibitemOpen
  \bibfield  {author} {\bibinfo {author} {\bibfnamefont {M.}~\bibnamefont
  {Lahiri}}, \bibinfo {author} {\bibfnamefont {R.}~\bibnamefont {Lapkiewicz}},
  \bibinfo {author} {\bibfnamefont {G.~B.}\ \bibnamefont {Lemos}},\ and\
  \bibinfo {author} {\bibfnamefont {A.}~\bibnamefont {Zeilinger}},\ }\bibfield
  {title} {\bibinfo {title} {Theory of quantum imaging with undetected
  photons},\ }\href {https://doi.org/10.1103/physreva.92.013832} {\bibfield
  {journal} {\bibinfo  {journal} {Physical Review A}\ }\textbf {\bibinfo
  {volume} {92}},\ \bibinfo {pages} {013832} (\bibinfo {year}
  {2015})}\BibitemShut {NoStop}%
\bibitem [{\citenamefont {Kwiat}\ \emph {et~al.}(1995)\citenamefont {Kwiat},
  \citenamefont {Weinfurter}, \citenamefont {Herzog}, \citenamefont
  {Zeilinger},\ and\ \citenamefont {Kasevich}}]{kwiat1995interactionfree}%
  \BibitemOpen
  \bibfield  {author} {\bibinfo {author} {\bibfnamefont {P.}~\bibnamefont
  {Kwiat}}, \bibinfo {author} {\bibfnamefont {H.}~\bibnamefont {Weinfurter}},
  \bibinfo {author} {\bibfnamefont {T.}~\bibnamefont {Herzog}}, \bibinfo
  {author} {\bibfnamefont {A.}~\bibnamefont {Zeilinger}},\ and\ \bibinfo
  {author} {\bibfnamefont {M.~A.}\ \bibnamefont {Kasevich}},\ }\bibfield
  {title} {\bibinfo {title} {Interaction-free measurement},\ }\href
  {https://doi.org/10.1103/PhysRevLett.74.4763} {\bibfield  {journal} {\bibinfo
   {journal} {Physical Review Letters}\ }\textbf {\bibinfo {volume} {74}},\
  \bibinfo {pages} {4763} (\bibinfo {year} {1995})}\BibitemShut {NoStop}%
\bibitem [{\citenamefont {Rudolph}(2000)}]{rudolph2000better}%
  \BibitemOpen
  \bibfield  {author} {\bibinfo {author} {\bibfnamefont {T.}~\bibnamefont
  {Rudolph}},\ }\bibfield  {title} {\bibinfo {title} {Better schemes for
  quantum interrogation in lossy experiments},\ }\href
  {https://doi.org/10.1103/physrevlett.85.2925} {\bibfield  {journal} {\bibinfo
   {journal} {Physical Review Letters}\ }\textbf {\bibinfo {volume} {85}},\
  \bibinfo {pages} {2925} (\bibinfo {year} {2000})}\BibitemShut {NoStop}%
\bibitem [{\citenamefont {Werner}(1989)}]{werner1989quantum}%
  \BibitemOpen
  \bibfield  {author} {\bibinfo {author} {\bibfnamefont {R.~F.}\ \bibnamefont
  {Werner}},\ }\bibfield  {title} {\bibinfo {title} {{Quantum states with
  Einstein-Podolsky-Rosen correlations admitting a hidden-variable model}},\
  }\href {https://doi.org/10.1103/physreva.40.4277} {\bibfield  {journal}
  {\bibinfo  {journal} {Physical Review A}\ }\textbf {\bibinfo {volume} {40}},\
  \bibinfo {pages} {4277} (\bibinfo {year} {1989})}\BibitemShut {NoStop}%
\bibitem [{\citenamefont {Ahnefeld}\ \emph {et~al.}(2022)\citenamefont
  {Ahnefeld}, \citenamefont {Theurer}, \citenamefont {Egloff}, \citenamefont
  {Matera},\ and\ \citenamefont {Plenio}}]{ahnefeld2022coherence}%
  \BibitemOpen
  \bibfield  {author} {\bibinfo {author} {\bibfnamefont {F.}~\bibnamefont
  {Ahnefeld}}, \bibinfo {author} {\bibfnamefont {T.}~\bibnamefont {Theurer}},
  \bibinfo {author} {\bibfnamefont {D.}~\bibnamefont {Egloff}}, \bibinfo
  {author} {\bibfnamefont {J.~M.}\ \bibnamefont {Matera}},\ and\ \bibinfo
  {author} {\bibfnamefont {M.~B.}\ \bibnamefont {Plenio}},\ }\bibfield  {title}
  {\bibinfo {title} {{Coherence as a Resource for Shor's Algorithm}},\ }\href
  {https://doi.org/10.1103/physrevlett.129.120501} {\bibfield  {journal}
  {\bibinfo  {journal} {Physical Review Letters}\ }\textbf {\bibinfo {volume}
  {129}},\ \bibinfo {pages} {120501} (\bibinfo {year} {2022})}\BibitemShut
  {NoStop}%
\bibitem [{\citenamefont {Vidick}\ and\ \citenamefont
  {Wehner}(2011)}]{vidick2011doesignorance}%
  \BibitemOpen
  \bibfield  {author} {\bibinfo {author} {\bibfnamefont {T.}~\bibnamefont
  {Vidick}}\ and\ \bibinfo {author} {\bibfnamefont {S.}~\bibnamefont
  {Wehner}},\ }\bibfield  {title} {\bibinfo {title} {{Does Ignorance of the
  Whole Imply Ignorance of the Parts? Large Violations of Noncontextuality in
  Quantum Theory}},\ }\href {https://doi.org/10.1103/physrevlett.107.030402}
  {\bibfield  {journal} {\bibinfo  {journal} {Physical Review Letters}\
  }\textbf {\bibinfo {volume} {107}},\ \bibinfo {pages} {030402} (\bibinfo
  {year} {2011})}\BibitemShut {NoStop}%
\bibitem [{\citenamefont {Kewming}\ \emph {et~al.}(2020)\citenamefont
  {Kewming}, \citenamefont {Shrapnel}, \citenamefont {White},\ and\
  \citenamefont {Romero}}]{kewming2020hidingignorance}%
  \BibitemOpen
  \bibfield  {author} {\bibinfo {author} {\bibfnamefont {M.}~\bibnamefont
  {Kewming}}, \bibinfo {author} {\bibfnamefont {S.}~\bibnamefont {Shrapnel}},
  \bibinfo {author} {\bibfnamefont {A.}~\bibnamefont {White}},\ and\ \bibinfo
  {author} {\bibfnamefont {J.}~\bibnamefont {Romero}},\ }\bibfield  {title}
  {\bibinfo {title} {Hiding ignorance using high dimensions},\ }\href
  {https://doi.org/10.1103/physrevlett.124.250401} {\bibfield  {journal}
  {\bibinfo  {journal} {Physical Review Letters}\ }\textbf {\bibinfo {volume}
  {124}},\ \bibinfo {pages} {250401} (\bibinfo {year} {2020})}\BibitemShut
  {NoStop}%
\bibitem [{\citenamefont {Tavakoli}\ \emph
  {et~al.}(2015{\natexlab{a}})\citenamefont {Tavakoli}, \citenamefont
  {Hameedi}, \citenamefont {Marques},\ and\ \citenamefont
  {Bourennane}}]{tavakoli2015quantum}%
  \BibitemOpen
  \bibfield  {author} {\bibinfo {author} {\bibfnamefont {A.}~\bibnamefont
  {Tavakoli}}, \bibinfo {author} {\bibfnamefont {A.}~\bibnamefont {Hameedi}},
  \bibinfo {author} {\bibfnamefont {B.}~\bibnamefont {Marques}},\ and\ \bibinfo
  {author} {\bibfnamefont {M.}~\bibnamefont {Bourennane}},\ }\bibfield  {title}
  {\bibinfo {title} {{Quantum Random Access Codes Using Single $d$-Level
  Systems}},\ }\href {https://doi.org/10.1103/physrevlett.114.170502}
  {\bibfield  {journal} {\bibinfo  {journal} {Physical Review Letters}\
  }\textbf {\bibinfo {volume} {114}},\ \bibinfo {pages} {170502} (\bibinfo
  {year} {2015}{\natexlab{a}})}\BibitemShut {NoStop}%
\bibitem [{\citenamefont {Tavakoli}\ \emph
  {et~al.}(2015{\natexlab{b}})\citenamefont {Tavakoli}, \citenamefont
  {Herbauts}, \citenamefont {{\.{Z}}ukowski},\ and\ \citenamefont
  {Bourennane}}]{tavakoli2015secret}%
  \BibitemOpen
  \bibfield  {author} {\bibinfo {author} {\bibfnamefont {A.}~\bibnamefont
  {Tavakoli}}, \bibinfo {author} {\bibfnamefont {I.}~\bibnamefont {Herbauts}},
  \bibinfo {author} {\bibfnamefont {M.}~\bibnamefont {{\.{Z}}ukowski}},\ and\
  \bibinfo {author} {\bibfnamefont {M.}~\bibnamefont {Bourennane}},\ }\bibfield
   {title} {\bibinfo {title} {Secret sharing with a single $d$-level quantum
  system},\ }\href {https://doi.org/10.1103/physreva.92.030302} {\bibfield
  {journal} {\bibinfo  {journal} {Physical Review A}\ }\textbf {\bibinfo
  {volume} {92}},\ \bibinfo {pages} {030302} (\bibinfo {year}
  {2015}{\natexlab{b}})}\BibitemShut {NoStop}%
\bibitem [{\citenamefont {Ac{\'{\i}}n}\ \emph {et~al.}(2006)\citenamefont
  {Ac{\'{\i}}n}, \citenamefont {Gisin},\ and\ \citenamefont
  {Masanes}}]{Acin2006fromBelltoSecure}%
  \BibitemOpen
  \bibfield  {author} {\bibinfo {author} {\bibfnamefont {A.}~\bibnamefont
  {Ac{\'{\i}}n}}, \bibinfo {author} {\bibfnamefont {N.}~\bibnamefont {Gisin}},\
  and\ \bibinfo {author} {\bibfnamefont {L.}~\bibnamefont {Masanes}},\
  }\bibfield  {title} {\bibinfo {title} {{From Bell's Theorem to Secure Quantum
  Key Distribution}},\ }\href {https://doi.org/10.1103/physrevlett.97.120405}
  {\bibfield  {journal} {\bibinfo  {journal} {Physical Review Letters}\
  }\textbf {\bibinfo {volume} {97}},\ \bibinfo {pages} {120405} (\bibinfo
  {year} {2006})}\BibitemShut {NoStop}%
\bibitem [{\citenamefont {Brunner}\ \emph {et~al.}(2008)\citenamefont
  {Brunner}, \citenamefont {Pironio}, \citenamefont {Acin}, \citenamefont
  {Gisin}, \citenamefont {M{\'{e}}thot},\ and\ \citenamefont
  {Scarani}}]{brunner2008testing}%
  \BibitemOpen
  \bibfield  {author} {\bibinfo {author} {\bibfnamefont {N.}~\bibnamefont
  {Brunner}}, \bibinfo {author} {\bibfnamefont {S.}~\bibnamefont {Pironio}},
  \bibinfo {author} {\bibfnamefont {A.}~\bibnamefont {Acin}}, \bibinfo {author}
  {\bibfnamefont {N.}~\bibnamefont {Gisin}}, \bibinfo {author} {\bibfnamefont
  {A.~A.}\ \bibnamefont {M{\'{e}}thot}},\ and\ \bibinfo {author} {\bibfnamefont
  {V.}~\bibnamefont {Scarani}},\ }\bibfield  {title} {\bibinfo {title}
  {{Testing the Dimension of Hilbert Spaces}},\ }\href
  {https://doi.org/10.1103/physrevlett.100.210503} {\bibfield  {journal}
  {\bibinfo  {journal} {Physical Review Letters}\ }\textbf {\bibinfo {volume}
  {100}},\ \bibinfo {pages} {210503} (\bibinfo {year} {2008})}\BibitemShut
  {NoStop}%
\bibitem [{\citenamefont {Wehner}\ \emph {et~al.}(2008)\citenamefont {Wehner},
  \citenamefont {Christandl},\ and\ \citenamefont
  {Doherty}}]{wehner2008stephanie}%
  \BibitemOpen
  \bibfield  {author} {\bibinfo {author} {\bibfnamefont {S.}~\bibnamefont
  {Wehner}}, \bibinfo {author} {\bibfnamefont {M.}~\bibnamefont {Christandl}},\
  and\ \bibinfo {author} {\bibfnamefont {A.~C.}\ \bibnamefont {Doherty}},\
  }\bibfield  {title} {\bibinfo {title} {Lower bound on the dimension of a
  quantum system given measured data},\ }\href
  {https://doi.org/10.1103/physreva.78.062112} {\bibfield  {journal} {\bibinfo
  {journal} {Physical Review A}\ }\textbf {\bibinfo {volume} {78}},\ \bibinfo
  {pages} {062112} (\bibinfo {year} {2008})}\BibitemShut {NoStop}%
\bibitem [{\citenamefont {Gallego}\ \emph {et~al.}(2010)\citenamefont
  {Gallego}, \citenamefont {Brunner}, \citenamefont {Hadley},\ and\
  \citenamefont {Ac{\'{\i}}n}}]{gallego2010device}%
  \BibitemOpen
  \bibfield  {author} {\bibinfo {author} {\bibfnamefont {R.}~\bibnamefont
  {Gallego}}, \bibinfo {author} {\bibfnamefont {N.}~\bibnamefont {Brunner}},
  \bibinfo {author} {\bibfnamefont {C.}~\bibnamefont {Hadley}},\ and\ \bibinfo
  {author} {\bibfnamefont {A.}~\bibnamefont {Ac{\'{\i}}n}},\ }\bibfield
  {title} {\bibinfo {title} {{Device-Independent Tests of Classical and Quantum
  Dimensions}},\ }\href {https://doi.org/10.1103/physrevlett.105.230501}
  {\bibfield  {journal} {\bibinfo  {journal} {Physical Review Letters}\
  }\textbf {\bibinfo {volume} {105}},\ \bibinfo {pages} {230501} (\bibinfo
  {year} {2010})}\BibitemShut {NoStop}%
\bibitem [{\citenamefont {Gühne}\ \emph {et~al.}(2014)\citenamefont {Gühne},
  \citenamefont {Budroni}, \citenamefont {Cabello}, \citenamefont {Kleinmann},\
  and\ \citenamefont {Larsson}}]{guhne2014bounding}%
  \BibitemOpen
  \bibfield  {author} {\bibinfo {author} {\bibfnamefont {O.}~\bibnamefont
  {Gühne}}, \bibinfo {author} {\bibfnamefont {C.}~\bibnamefont {Budroni}},
  \bibinfo {author} {\bibfnamefont {A.}~\bibnamefont {Cabello}}, \bibinfo
  {author} {\bibfnamefont {M.}~\bibnamefont {Kleinmann}},\ and\ \bibinfo
  {author} {\bibfnamefont {J.-{\AA}.}\ \bibnamefont {Larsson}},\ }\bibfield
  {title} {\bibinfo {title} {Bounding the quantum dimension with
  contextuality},\ }\href {https://doi.org/10.1103/physreva.89.062107}
  {\bibfield  {journal} {\bibinfo  {journal} {Physical Review A}\ }\textbf
  {\bibinfo {volume} {89}},\ \bibinfo {pages} {062107} (\bibinfo {year}
  {2014})}\BibitemShut {NoStop}%
\bibitem [{\citenamefont {Ray}\ \emph {et~al.}(2021)\citenamefont {Ray},
  \citenamefont {Boddu}, \citenamefont {Bharti}, \citenamefont {Kwek},\ and\
  \citenamefont {Cabello}}]{ray2021graph}%
  \BibitemOpen
  \bibfield  {author} {\bibinfo {author} {\bibfnamefont {M.}~\bibnamefont
  {Ray}}, \bibinfo {author} {\bibfnamefont {N.~G.}\ \bibnamefont {Boddu}},
  \bibinfo {author} {\bibfnamefont {K.}~\bibnamefont {Bharti}}, \bibinfo
  {author} {\bibfnamefont {L.-C.}\ \bibnamefont {Kwek}},\ and\ \bibinfo
  {author} {\bibfnamefont {A.}~\bibnamefont {Cabello}},\ }\bibfield  {title}
  {\bibinfo {title} {Graph-theoretic approach to dimension witnessing},\ }\href
  {https://doi.org/10.1088/1367-2630/abcacd} {\bibfield  {journal} {\bibinfo
  {journal} {New Journal of Physics}\ }\textbf {\bibinfo {volume} {23}},\
  \bibinfo {pages} {033006} (\bibinfo {year} {2021})}\BibitemShut {NoStop}%
\bibitem [{\citenamefont {Designolle}\ \emph {et~al.}(2021)\citenamefont
  {Designolle}, \citenamefont {Uola}, \citenamefont {Luoma},\ and\
  \citenamefont {Brunner}}]{designolle2021setcoherence}%
  \BibitemOpen
  \bibfield  {author} {\bibinfo {author} {\bibfnamefont {S.}~\bibnamefont
  {Designolle}}, \bibinfo {author} {\bibfnamefont {R.}~\bibnamefont {Uola}},
  \bibinfo {author} {\bibfnamefont {K.}~\bibnamefont {Luoma}},\ and\ \bibinfo
  {author} {\bibfnamefont {N.}~\bibnamefont {Brunner}},\ }\bibfield  {title}
  {\bibinfo {title} {Set coherence: Basis-independent quantification of quantum
  coherence},\ }\href {https://doi.org/10.1103/physrevlett.126.220404}
  {\bibfield  {journal} {\bibinfo  {journal} {Physical Review Letters}\
  }\textbf {\bibinfo {volume} {126}},\ \bibinfo {pages} {220404} (\bibinfo
  {year} {2021})}\BibitemShut {NoStop}%
\bibitem [{\citenamefont {Brunner}\ \emph {et~al.}(2013)\citenamefont
  {Brunner}, \citenamefont {Navascu\'es},\ and\ \citenamefont
  {V\'ertesi}}]{brunner2013dimension}%
  \BibitemOpen
  \bibfield  {author} {\bibinfo {author} {\bibfnamefont {N.}~\bibnamefont
  {Brunner}}, \bibinfo {author} {\bibfnamefont {M.}~\bibnamefont
  {Navascu\'es}},\ and\ \bibinfo {author} {\bibfnamefont {T.}~\bibnamefont
  {V\'ertesi}},\ }\bibfield  {title} {\bibinfo {title} {Dimension witnesses and
  quantum state discrimination},\ }\href
  {https://doi.org/10.1103/PhysRevLett.110.150501} {\bibfield  {journal}
  {\bibinfo  {journal} {Phys. Rev. Lett.}\ }\textbf {\bibinfo {volume} {110}},\
  \bibinfo {pages} {150501} (\bibinfo {year} {2013})}\BibitemShut {NoStop}%
\bibitem [{\citenamefont {Bargmann}(1964)}]{bargmann1964note}%
  \BibitemOpen
  \bibfield  {author} {\bibinfo {author} {\bibfnamefont {V.}~\bibnamefont
  {Bargmann}},\ }\bibfield  {title} {\bibinfo {title} {{Note on Wigner's
  Theorem on Symmetry Operations}},\ }\href {https://doi.org/10.1063/1.1704188}
  {\bibfield  {journal} {\bibinfo  {journal} {Journal of Mathematical Physics}\
  }\textbf {\bibinfo {volume} {5}},\ \bibinfo {pages} {862} (\bibinfo {year}
  {1964})}\BibitemShut {NoStop}%
\bibitem [{\citenamefont {Oszmaniec}\ \emph {et~al.}(2021)\citenamefont
  {Oszmaniec}, \citenamefont {Brod},\ and\ \citenamefont
  {Galvão}}]{oszmaniec2021measuring}%
  \BibitemOpen
  \bibfield  {author} {\bibinfo {author} {\bibfnamefont {M.}~\bibnamefont
  {Oszmaniec}}, \bibinfo {author} {\bibfnamefont {D.~J.}\ \bibnamefont
  {Brod}},\ and\ \bibinfo {author} {\bibfnamefont {E.~F.}\ \bibnamefont
  {Galvão}},\ }\href {https://doi.org/10.48550/ARXIV.2109.10006} {\bibinfo
  {title} {Measuring relational information between quantum states, and
  applications}},\ \bibinfo {howpublished} {arXiv: 2109.10006 [quant-ph]}
  (\bibinfo {year} {2021})\BibitemShut {NoStop}%
\bibitem [{\citenamefont {Navascues}\ \emph {et~al.}(2023)\citenamefont
  {Navascues}, \citenamefont {Pál}, \citenamefont {Vértesi},\ and\
  \citenamefont {Araújo}}]{navascues2023selftesting}%
  \BibitemOpen
  \bibfield  {author} {\bibinfo {author} {\bibfnamefont {M.}~\bibnamefont
  {Navascues}}, \bibinfo {author} {\bibfnamefont {K.~F.}\ \bibnamefont {Pál}},
  \bibinfo {author} {\bibfnamefont {T.}~\bibnamefont {Vértesi}},\ and\
  \bibinfo {author} {\bibfnamefont {M.}~\bibnamefont {Araújo}},\ }\href
  {https://doi.org/https://doi.org/10.48550/arXiv.2306.00730} {\bibinfo {title}
  {Self-testing in prepare-and-measure scenarios and a robust version of
  wigner's theorem}},\ \bibinfo {howpublished} {arXiv: 2306.00730 [quant-ph]}
  (\bibinfo {year} {2023})\BibitemShut {NoStop}%
\bibitem [{\citenamefont {Clements}\ \emph {et~al.}(2016)\citenamefont
  {Clements}, \citenamefont {Humphreys}, \citenamefont {Metcalf}, \citenamefont
  {Kolthammer},\ and\ \citenamefont {Walmsley}}]{Clements2016}%
  \BibitemOpen
  \bibfield  {author} {\bibinfo {author} {\bibfnamefont {W.~R.}\ \bibnamefont
  {Clements}}, \bibinfo {author} {\bibfnamefont {P.~C.}\ \bibnamefont
  {Humphreys}}, \bibinfo {author} {\bibfnamefont {B.~J.}\ \bibnamefont
  {Metcalf}}, \bibinfo {author} {\bibfnamefont {W.~S.}\ \bibnamefont
  {Kolthammer}},\ and\ \bibinfo {author} {\bibfnamefont {I.~A.}\ \bibnamefont
  {Walmsley}},\ }\bibfield  {title} {\bibinfo {title} {Optimal design for
  universal multiport interferometers},\ }\href
  {https://doi.org/10.1364/OPTICA.3.001460} {\bibfield  {journal} {\bibinfo
  {journal} {Optica}\ }\textbf {\bibinfo {volume} {3}},\ \bibinfo {pages}
  {1460} (\bibinfo {year} {2016})}\BibitemShut {NoStop}%
\end{thebibliography}%


\begin{thebibliography}{51}%
\makeatletter
\providecommand \@ifxundefined [1]{%
 \@ifx{#1\undefined}
}%
\providecommand \@ifnum [1]{%
 \ifnum #1\expandafter \@firstoftwo
 \else \expandafter \@secondoftwo
 \fi
}%
\providecommand \@ifx [1]{%
 \ifx #1\expandafter \@firstoftwo
 \else \expandafter \@secondoftwo
 \fi
}%
\providecommand \natexlab [1]{#1}%
\providecommand \enquote  [1]{``#1''}%
\providecommand \bibnamefont  [1]{#1}%
\providecommand \bibfnamefont [1]{#1}%
\providecommand \citenamefont [1]{#1}%
\providecommand \href@noop [0]{\@secondoftwo}%
\providecommand \href [0]{\begingroup \@sanitize@url \@href}%
\providecommand \@href[1]{\@@startlink{#1}\@@href}%
\providecommand \@@href[1]{\endgroup#1\@@endlink}%
\providecommand \@sanitize@url [0]{\catcode `\\12\catcode `\$12\catcode
  `\&12\catcode `\#12\catcode `\^12\catcode `\_12\catcode `\%12\relax}%
\providecommand \@@startlink[1]{}%
\providecommand \@@endlink[0]{}%
\providecommand \url  [0]{\begingroup\@sanitize@url \@url }%
\providecommand \@url [1]{\endgroup\@href {#1}{\urlprefix }}%
\providecommand \urlprefix  [0]{URL }%
\providecommand \Eprint [0]{\href }%
\providecommand \doibase [0]{https://doi.org/}%
\providecommand \selectlanguage [0]{\@gobble}%
\providecommand \bibinfo  [0]{\@secondoftwo}%
\providecommand \bibfield  [0]{\@secondoftwo}%
\providecommand \translation [1]{[#1]}%
\providecommand \BibitemOpen [0]{}%
\providecommand \bibitemStop [0]{}%
\providecommand \bibitemNoStop [0]{.\EOS\space}%
\providecommand \EOS [0]{\spacefactor3000\relax}%
\providecommand \BibitemShut  [1]{\csname bibitem#1\endcsname}%
\let\auto@bib@innerbib\@empty
\bibitem [{\citenamefont {Brunner}\ \emph {et~al.}(2013)\citenamefont
  {Brunner}, \citenamefont {Navascu\'es},\ and\ \citenamefont
  {V\'ertesi}}]{brunner2013dimension}%
  \BibitemOpen
  \bibfield  {author} {\bibinfo {author} {\bibfnamefont {N.}~\bibnamefont
  {Brunner}}, \bibinfo {author} {\bibfnamefont {M.}~\bibnamefont
  {Navascu\'es}},\ and\ \bibinfo {author} {\bibfnamefont {T.}~\bibnamefont
  {V\'ertesi}},\ }\bibfield  {title} {\bibinfo {title} {Dimension witnesses and
  quantum state discrimination},\ }\href
  {https://doi.org/10.1103/PhysRevLett.110.150501} {\bibfield  {journal}
  {\bibinfo  {journal} {Phys. Rev. Lett.}\ }\textbf {\bibinfo {volume} {110}},\
  \bibinfo {pages} {150501} (\bibinfo {year} {2013})}\BibitemShut {NoStop}%
\bibitem [{\citenamefont {Tavakoli}\ \emph {et~al.}(2023)\citenamefont
  {Tavakoli}, \citenamefont {Pozas-Kerstjens}, \citenamefont {Brown},\ and\
  \citenamefont {Araújo}}]{tavakoli2023semidefinite}%
  \BibitemOpen
  \bibfield  {author} {\bibinfo {author} {\bibfnamefont {A.}~\bibnamefont
  {Tavakoli}}, \bibinfo {author} {\bibfnamefont {A.}~\bibnamefont
  {Pozas-Kerstjens}}, \bibinfo {author} {\bibfnamefont {P.}~\bibnamefont
  {Brown}},\ and\ \bibinfo {author} {\bibfnamefont {M.}~\bibnamefont
  {Araújo}},\ }\href
  {https://doi.org/https://doi.org/10.48550/arXiv.2307.02551} {\bibinfo {title}
  {Semidefinite programming relaxations for quantum correlations}},\ \bibinfo
  {howpublished} {arXiv: 2307.02551 [quant-ph]} (\bibinfo {year}
  {2023})\BibitemShut {NoStop}%
\bibitem [{\citenamefont {Li}\ \emph {et~al.}(2018)\citenamefont {Li},
  \citenamefont {Sun},\ and\ \citenamefont {Toh}}]{li2018quadraticSDP}%
  \BibitemOpen
  \bibfield  {author} {\bibinfo {author} {\bibfnamefont {X.}~\bibnamefont
  {Li}}, \bibinfo {author} {\bibfnamefont {D.}~\bibnamefont {Sun}},\ and\
  \bibinfo {author} {\bibfnamefont {K.-C.}\ \bibnamefont {Toh}},\ }\bibfield
  {title} {\bibinfo {title} {{QSDPNAL}: a two-phase augmented lagrangian method
  for convex quadratic semidefinite programming},\ }\href
  {https://doi.org/10.1007/s12532-018-0137-6} {\bibfield  {journal} {\bibinfo
  {journal} {Mathematical Programming Computation}\ }\textbf {\bibinfo {volume}
  {10}},\ \bibinfo {pages} {703} (\bibinfo {year} {2018})}\BibitemShut
  {NoStop}%
\bibitem [{\citenamefont {Giordani}\ \emph {et~al.}(2023)\citenamefont
  {Giordani}, \citenamefont {Wagner}, \citenamefont {Esposito}, \citenamefont
  {Camillini}, \citenamefont {Hoch}, \citenamefont {Carvacho}, \citenamefont
  {Pentangelo}, \citenamefont {Ceccarelli}, \citenamefont {Piacentini},
  \citenamefont {Crespi}, \citenamefont {Spagnolo}, \citenamefont {Osellame},
  \citenamefont {Galvao},\ and\ \citenamefont {Sciarrino}}]{giordani2023code}%
  \BibitemOpen
  \bibfield  {author} {\bibinfo {author} {\bibfnamefont {T.}~\bibnamefont
  {Giordani}}, \bibinfo {author} {\bibfnamefont {R.}~\bibnamefont {Wagner}},
  \bibinfo {author} {\bibfnamefont {C.}~\bibnamefont {Esposito}}, \bibinfo
  {author} {\bibfnamefont {A.}~\bibnamefont {Camillini}}, \bibinfo {author}
  {\bibfnamefont {F.}~\bibnamefont {Hoch}}, \bibinfo {author} {\bibfnamefont
  {G.}~\bibnamefont {Carvacho}}, \bibinfo {author} {\bibfnamefont
  {C.}~\bibnamefont {Pentangelo}}, \bibinfo {author} {\bibfnamefont
  {F.}~\bibnamefont {Ceccarelli}}, \bibinfo {author} {\bibfnamefont
  {S.}~\bibnamefont {Piacentini}}, \bibinfo {author} {\bibfnamefont
  {A.}~\bibnamefont {Crespi}}, \bibinfo {author} {\bibfnamefont
  {N.}~\bibnamefont {Spagnolo}}, \bibinfo {author} {\bibfnamefont
  {R.}~\bibnamefont {Osellame}}, \bibinfo {author} {\bibfnamefont {E.~F.}\
  \bibnamefont {Galvao}},\ and\ \bibinfo {author} {\bibfnamefont
  {F.}~\bibnamefont {Sciarrino}},\ }\bibfield  {title} {\bibinfo {title} {{Code
  for "Experimental certification of contextuality, coherence and dimension in
  a programmable universal photonic processor"}},\ }\href
  {https://doi.org/10.5281/zenodo.8386354} {10.5281/zenodo.8386354} (\bibinfo
  {year} {2023})\BibitemShut {NoStop}%
\bibitem [{\citenamefont {Eisert}\ \emph {et~al.}(2020)\citenamefont {Eisert},
  \citenamefont {Hangleiter}, \citenamefont {Walk}, \citenamefont {Roth},
  \citenamefont {Markham}, \citenamefont {Parekh}, \citenamefont {Chabaud},\
  and\ \citenamefont {Kashefi}}]{eisert2020certification}%
  \BibitemOpen
  \bibfield  {author} {\bibinfo {author} {\bibfnamefont {J.}~\bibnamefont
  {Eisert}}, \bibinfo {author} {\bibfnamefont {D.}~\bibnamefont {Hangleiter}},
  \bibinfo {author} {\bibfnamefont {N.}~\bibnamefont {Walk}}, \bibinfo {author}
  {\bibfnamefont {I.}~\bibnamefont {Roth}}, \bibinfo {author} {\bibfnamefont
  {D.}~\bibnamefont {Markham}}, \bibinfo {author} {\bibfnamefont
  {R.}~\bibnamefont {Parekh}}, \bibinfo {author} {\bibfnamefont
  {U.}~\bibnamefont {Chabaud}},\ and\ \bibinfo {author} {\bibfnamefont
  {E.}~\bibnamefont {Kashefi}},\ }\bibfield  {title} {\bibinfo {title} {Quantum
  certification and benchmarking},\ }\href
  {https://doi.org/10.1038/s42254-020-0186-4} {\bibfield  {journal} {\bibinfo
  {journal} {Nature Reviews Physics}\ }\textbf {\bibinfo {volume} {2}},\
  \bibinfo {pages} {382} (\bibinfo {year} {2020})}\BibitemShut {NoStop}%
\bibitem [{\citenamefont {Wagner}\ \emph
  {et~al.}(2022{\natexlab{a}})\citenamefont {Wagner}, \citenamefont
  {Camillini},\ and\ \citenamefont {Galvão}}]{wagner2022coherence}%
  \BibitemOpen
  \bibfield  {author} {\bibinfo {author} {\bibfnamefont {R.}~\bibnamefont
  {Wagner}}, \bibinfo {author} {\bibfnamefont {A.}~\bibnamefont {Camillini}},\
  and\ \bibinfo {author} {\bibfnamefont {E.~F.}\ \bibnamefont {Galvão}},\
  }\href {https://doi.org/10.48550/ARXIV.2210.05624} {\bibinfo {title}
  {{Coherence and contextuality in a Mach-Zehnder interferometer}}},\ \bibinfo
  {howpublished} {arXiv: 2210.05624 [quant-ph]} (\bibinfo {year}
  {2022}{\natexlab{a}})\BibitemShut {NoStop}%
\bibitem [{\citenamefont {Spekkens}\ \emph {et~al.}(2009)\citenamefont
  {Spekkens}, \citenamefont {Buzacott}, \citenamefont {Keehn}, \citenamefont
  {Toner},\ and\ \citenamefont {Pryde}}]{spekkens2009parity}%
  \BibitemOpen
  \bibfield  {author} {\bibinfo {author} {\bibfnamefont {R.~W.}\ \bibnamefont
  {Spekkens}}, \bibinfo {author} {\bibfnamefont {D.~H.}\ \bibnamefont
  {Buzacott}}, \bibinfo {author} {\bibfnamefont {A.~J.}\ \bibnamefont {Keehn}},
  \bibinfo {author} {\bibfnamefont {B.}~\bibnamefont {Toner}},\ and\ \bibinfo
  {author} {\bibfnamefont {G.~J.}\ \bibnamefont {Pryde}},\ }\bibfield  {title}
  {\bibinfo {title} {{Preparation Contextuality Powers Parity-Oblivious
  Multiplexing}},\ }\href {https://doi.org/10.1103/physrevlett.102.010401}
  {\bibfield  {journal} {\bibinfo  {journal} {Physical Review Letters}\
  }\textbf {\bibinfo {volume} {102}},\ \bibinfo {pages} {010401} (\bibinfo
  {year} {2009})}\BibitemShut {NoStop}%
\bibitem [{\citenamefont {Saha}\ and\ \citenamefont
  {Chaturvedi}(2019)}]{saha2019preparation}%
  \BibitemOpen
  \bibfield  {author} {\bibinfo {author} {\bibfnamefont {D.}~\bibnamefont
  {Saha}}\ and\ \bibinfo {author} {\bibfnamefont {A.}~\bibnamefont
  {Chaturvedi}},\ }\bibfield  {title} {\bibinfo {title} {Preparation
  contextuality as an essential feature underlying quantum communication
  advantage},\ }\href {https://doi.org/10.1103/physreva.100.022108} {\bibfield
  {journal} {\bibinfo  {journal} {Physical Review A}\ }\textbf {\bibinfo
  {volume} {100}},\ \bibinfo {pages} {022108} (\bibinfo {year}
  {2019})}\BibitemShut {NoStop}%
\bibitem [{\citenamefont {Schmid}\ and\ \citenamefont
  {Spekkens}(2018)}]{schmid2018contextual}%
  \BibitemOpen
  \bibfield  {author} {\bibinfo {author} {\bibfnamefont {D.}~\bibnamefont
  {Schmid}}\ and\ \bibinfo {author} {\bibfnamefont {R.~W.}\ \bibnamefont
  {Spekkens}},\ }\bibfield  {title} {\bibinfo {title} {{Contextual Advantage
  for State Discrimination}},\ }\href
  {https://doi.org/10.1103/physrevx.8.011015} {\bibfield  {journal} {\bibinfo
  {journal} {Physical Review X}\ }\textbf {\bibinfo {volume} {8}},\ \bibinfo
  {pages} {011015} (\bibinfo {year} {2018})}\BibitemShut {NoStop}%
\bibitem [{\citenamefont {Ambainis}\ \emph {et~al.}(2019)\citenamefont
  {Ambainis}, \citenamefont {Banik}, \citenamefont {Chaturvedi}, \citenamefont
  {Kravchenko},\ and\ \citenamefont {Rai}}]{ambainis2019parity}%
  \BibitemOpen
  \bibfield  {author} {\bibinfo {author} {\bibfnamefont {A.}~\bibnamefont
  {Ambainis}}, \bibinfo {author} {\bibfnamefont {M.}~\bibnamefont {Banik}},
  \bibinfo {author} {\bibfnamefont {A.}~\bibnamefont {Chaturvedi}}, \bibinfo
  {author} {\bibfnamefont {D.}~\bibnamefont {Kravchenko}},\ and\ \bibinfo
  {author} {\bibfnamefont {A.}~\bibnamefont {Rai}},\ }\bibfield  {title}
  {\bibinfo {title} {Parity oblivious $d$-level random access codes and class
  of noncontextuality inequalities},\ }\href
  {https://doi.org/10.1007/s11128-019-2228-3} {\bibfield  {journal} {\bibinfo
  {journal} {Quantum Information Processing}\ }\textbf {\bibinfo {volume}
  {18}},\ \bibinfo {pages} {1} (\bibinfo {year} {2019})}\BibitemShut {NoStop}%
\bibitem [{\citenamefont {Chailloux}\ \emph {et~al.}(2016)\citenamefont
  {Chailloux}, \citenamefont {Kerenidis}, \citenamefont {Kundu},\ and\
  \citenamefont {Sikora}}]{chailloux2016optimal}%
  \BibitemOpen
  \bibfield  {author} {\bibinfo {author} {\bibfnamefont {A.}~\bibnamefont
  {Chailloux}}, \bibinfo {author} {\bibfnamefont {I.}~\bibnamefont
  {Kerenidis}}, \bibinfo {author} {\bibfnamefont {S.}~\bibnamefont {Kundu}},\
  and\ \bibinfo {author} {\bibfnamefont {J.}~\bibnamefont {Sikora}},\
  }\bibfield  {title} {\bibinfo {title} {Optimal bounds for parity-oblivious
  random access codes},\ }\href {https://doi.org/10.1088/1367-2630/18/4/045003}
  {\bibfield  {journal} {\bibinfo  {journal} {New Journal of Physics}\ }\textbf
  {\bibinfo {volume} {18}},\ \bibinfo {pages} {045003} (\bibinfo {year}
  {2016})}\BibitemShut {NoStop}%
\bibitem [{\citenamefont {Lostaglio}(2020)}]{lostaglio2020certifying}%
  \BibitemOpen
  \bibfield  {author} {\bibinfo {author} {\bibfnamefont {M.}~\bibnamefont
  {Lostaglio}},\ }\bibfield  {title} {\bibinfo {title} {{Certifying Quantum
  Signatures in Thermodynamics and Metrology via Contextuality of Quantum
  Linear Response}},\ }\href {https://doi.org/10.1103/physrevlett.125.230603}
  {\bibfield  {journal} {\bibinfo  {journal} {Physical Review Letters}\
  }\textbf {\bibinfo {volume} {125}},\ \bibinfo {pages} {230603} (\bibinfo
  {year} {2020})}\BibitemShut {NoStop}%
\bibitem [{\citenamefont {Lostaglio}\ and\ \citenamefont
  {Senno}(2020)}]{lostaglio2020contextual}%
  \BibitemOpen
  \bibfield  {author} {\bibinfo {author} {\bibfnamefont {M.}~\bibnamefont
  {Lostaglio}}\ and\ \bibinfo {author} {\bibfnamefont {G.}~\bibnamefont
  {Senno}},\ }\bibfield  {title} {\bibinfo {title} {Contextual advantage for
  state-dependent cloning},\ }\href {https://doi.org/10.22331/q-2020-04-27-258}
  {\bibfield  {journal} {\bibinfo  {journal} {Quantum}\ }\textbf {\bibinfo
  {volume} {4}},\ \bibinfo {pages} {258} (\bibinfo {year} {2020})}\BibitemShut
  {NoStop}%
\bibitem [{\citenamefont {Flatt}\ \emph {et~al.}(2022)\citenamefont {Flatt},
  \citenamefont {Lee}, \citenamefont {Carceller}, \citenamefont {Brask},\ and\
  \citenamefont {Bae}}]{flatt2022contextual}%
  \BibitemOpen
  \bibfield  {author} {\bibinfo {author} {\bibfnamefont {K.}~\bibnamefont
  {Flatt}}, \bibinfo {author} {\bibfnamefont {H.}~\bibnamefont {Lee}}, \bibinfo
  {author} {\bibfnamefont {C.~R.~I.}\ \bibnamefont {Carceller}}, \bibinfo
  {author} {\bibfnamefont {J.~B.}\ \bibnamefont {Brask}},\ and\ \bibinfo
  {author} {\bibfnamefont {J.}~\bibnamefont {Bae}},\ }\bibfield  {title}
  {\bibinfo {title} {{Contextual Advantages and Certification for
  Maximum-Confidence Discrimination}},\ }\href
  {https://doi.org/10.1103/prxquantum.3.030337} {\bibfield  {journal} {\bibinfo
   {journal} {{PRX} Quantum}\ }\textbf {\bibinfo {volume} {3}},\ \bibinfo
  {pages} {030337} (\bibinfo {year} {2022})}\BibitemShut {NoStop}%
\bibitem [{\citenamefont {i~Carceller}\ \emph {et~al.}(2022)\citenamefont
  {i~Carceller}, \citenamefont {Flatt}, \citenamefont {Lee}, \citenamefont
  {Bae},\ and\ \citenamefont {Brask}}]{rochicarceller2022quantumvs}%
  \BibitemOpen
  \bibfield  {author} {\bibinfo {author} {\bibfnamefont {C.~R.}\ \bibnamefont
  {i~Carceller}}, \bibinfo {author} {\bibfnamefont {K.}~\bibnamefont {Flatt}},
  \bibinfo {author} {\bibfnamefont {H.}~\bibnamefont {Lee}}, \bibinfo {author}
  {\bibfnamefont {J.}~\bibnamefont {Bae}},\ and\ \bibinfo {author}
  {\bibfnamefont {J.~B.}\ \bibnamefont {Brask}},\ }\bibfield  {title} {\bibinfo
  {title} {Quantum vs noncontextual semi-device-independent randomness
  certification},\ }\href {https://doi.org/10.1103/physrevlett.129.050501}
  {\bibfield  {journal} {\bibinfo  {journal} {Physical Review Letters}\
  }\textbf {\bibinfo {volume} {129}},\ \bibinfo {pages} {050501} (\bibinfo
  {year} {2022})}\BibitemShut {NoStop}%
\bibitem [{\citenamefont {Budroni}\ \emph {et~al.}(2022)\citenamefont
  {Budroni}, \citenamefont {Cabello}, \citenamefont {G\"uhne}, \citenamefont
  {Kleinmann},\ and\ \citenamefont {Larsson}}]{budroni2021kochen}%
  \BibitemOpen
  \bibfield  {author} {\bibinfo {author} {\bibfnamefont {C.}~\bibnamefont
  {Budroni}}, \bibinfo {author} {\bibfnamefont {A.}~\bibnamefont {Cabello}},
  \bibinfo {author} {\bibfnamefont {O.}~\bibnamefont {G\"uhne}}, \bibinfo
  {author} {\bibfnamefont {M.}~\bibnamefont {Kleinmann}},\ and\ \bibinfo
  {author} {\bibfnamefont {J.-A.}\ \bibnamefont {Larsson}},\ }\bibfield
  {title} {\bibinfo {title} {{Kochen-Specker contextuality}},\ }\href
  {https://doi.org/10.1103/RevModPhys.94.045007} {\bibfield  {journal}
  {\bibinfo  {journal} {Rev. Mod. Phys.}\ }\textbf {\bibinfo {volume} {94}},\
  \bibinfo {pages} {045007} (\bibinfo {year} {2022})}\BibitemShut {NoStop}%
\bibitem [{\citenamefont {Mazurek}\ \emph {et~al.}(2016)\citenamefont
  {Mazurek}, \citenamefont {Pusey}, \citenamefont {Kunjwal}, \citenamefont
  {Resch},\ and\ \citenamefont {Spekkens}}]{mazurek2016experimental}%
  \BibitemOpen
  \bibfield  {author} {\bibinfo {author} {\bibfnamefont {M.~D.}\ \bibnamefont
  {Mazurek}}, \bibinfo {author} {\bibfnamefont {M.~F.}\ \bibnamefont {Pusey}},
  \bibinfo {author} {\bibfnamefont {R.}~\bibnamefont {Kunjwal}}, \bibinfo
  {author} {\bibfnamefont {K.~J.}\ \bibnamefont {Resch}},\ and\ \bibinfo
  {author} {\bibfnamefont {R.~W.}\ \bibnamefont {Spekkens}},\ }\bibfield
  {title} {\bibinfo {title} {An experimental test of noncontextuality without
  unphysical idealizations},\ }\href {https://doi.org/10.1038/ncomms11780}
  {\bibfield  {journal} {\bibinfo  {journal} {Nature Communications}\ }\textbf
  {\bibinfo {volume} {7}},\ \bibinfo {pages} {ncomms11780} (\bibinfo {year}
  {2016})}\BibitemShut {NoStop}%
\bibitem [{\citenamefont {Mazurek}\ \emph {et~al.}(2021)\citenamefont
  {Mazurek}, \citenamefont {Pusey}, \citenamefont {Resch},\ and\ \citenamefont
  {Spekkens}}]{mazurek2021experimentally}%
  \BibitemOpen
  \bibfield  {author} {\bibinfo {author} {\bibfnamefont {M.~D.}\ \bibnamefont
  {Mazurek}}, \bibinfo {author} {\bibfnamefont {M.~F.}\ \bibnamefont {Pusey}},
  \bibinfo {author} {\bibfnamefont {K.~J.}\ \bibnamefont {Resch}},\ and\
  \bibinfo {author} {\bibfnamefont {R.~W.}\ \bibnamefont {Spekkens}},\
  }\bibfield  {title} {\bibinfo {title} {Experimentally bounding deviations
  from quantum theory in the landscape of generalized probabilistic theories},\
  }\href {https://doi.org/10.1103/PRXQuantum.2.020302} {\bibfield  {journal}
  {\bibinfo  {journal} {PRX Quantum}\ }\textbf {\bibinfo {volume} {2}},\
  \bibinfo {pages} {020302} (\bibinfo {year} {2021})}\BibitemShut {NoStop}%
\bibitem [{\citenamefont {Zhan}\ \emph {et~al.}(2017)\citenamefont {Zhan},
  \citenamefont {Cavalcanti}, \citenamefont {Li}, \citenamefont {Bian},
  \citenamefont {Zhang}, \citenamefont {Wiseman},\ and\ \citenamefont
  {Xue}}]{Zhan2017experimental}%
  \BibitemOpen
  \bibfield  {author} {\bibinfo {author} {\bibfnamefont {X.}~\bibnamefont
  {Zhan}}, \bibinfo {author} {\bibfnamefont {E.~G.}\ \bibnamefont
  {Cavalcanti}}, \bibinfo {author} {\bibfnamefont {J.}~\bibnamefont {Li}},
  \bibinfo {author} {\bibfnamefont {Z.}~\bibnamefont {Bian}}, \bibinfo {author}
  {\bibfnamefont {Y.}~\bibnamefont {Zhang}}, \bibinfo {author} {\bibfnamefont
  {H.~M.}\ \bibnamefont {Wiseman}},\ and\ \bibinfo {author} {\bibfnamefont
  {P.}~\bibnamefont {Xue}},\ }\bibfield  {title} {\bibinfo {title}
  {Experimental generalized contextuality with single-photon qubits},\ }\href
  {https://doi.org/10.1364/OPTICA.4.000966} {\bibfield  {journal} {\bibinfo
  {journal} {Optica}\ }\textbf {\bibinfo {volume} {4}},\ \bibinfo {pages} {966}
  (\bibinfo {year} {2017})}\BibitemShut {NoStop}%
\bibitem [{\citenamefont {Zhang}\ \emph {et~al.}(2022)\citenamefont {Zhang},
  \citenamefont {Zhu}, \citenamefont {Wang}, \citenamefont {Ding},
  \citenamefont {Shi}, \citenamefont {Zhang}, \citenamefont {Zhang},\ and\
  \citenamefont {Zhang}}]{Zhang2022experimental}%
  \BibitemOpen
  \bibfield  {author} {\bibinfo {author} {\bibfnamefont {Q.}~\bibnamefont
  {Zhang}}, \bibinfo {author} {\bibfnamefont {C.}~\bibnamefont {Zhu}}, \bibinfo
  {author} {\bibfnamefont {Y.}~\bibnamefont {Wang}}, \bibinfo {author}
  {\bibfnamefont {L.}~\bibnamefont {Ding}}, \bibinfo {author} {\bibfnamefont
  {T.}~\bibnamefont {Shi}}, \bibinfo {author} {\bibfnamefont {X.}~\bibnamefont
  {Zhang}}, \bibinfo {author} {\bibfnamefont {S.}~\bibnamefont {Zhang}},\ and\
  \bibinfo {author} {\bibfnamefont {W.}~\bibnamefont {Zhang}},\ }\bibfield
  {title} {\bibinfo {title} {Experimental test of contextuality based on state
  discrimination with a single qubit},\ }\href
  {https://doi.org/10.1088/0256-307x/39/8/080301} {\bibfield  {journal}
  {\bibinfo  {journal} {Chinese Physics Letters}\ }\textbf {\bibinfo {volume}
  {39}},\ \bibinfo {pages} {080301} (\bibinfo {year} {2022})}\BibitemShut
  {NoStop}%
\bibitem [{\citenamefont {Baldij\~ao}\ \emph {et~al.}(2021)\citenamefont
  {Baldij\~ao}, \citenamefont {Wagner}, \citenamefont {Duarte}, \citenamefont
  {Amaral},\ and\ \citenamefont {Cunha}}]{baldi2021emergence}%
  \BibitemOpen
  \bibfield  {author} {\bibinfo {author} {\bibfnamefont {R.~D.}\ \bibnamefont
  {Baldij\~ao}}, \bibinfo {author} {\bibfnamefont {R.}~\bibnamefont {Wagner}},
  \bibinfo {author} {\bibfnamefont {C.}~\bibnamefont {Duarte}}, \bibinfo
  {author} {\bibfnamefont {B.}~\bibnamefont {Amaral}},\ and\ \bibinfo {author}
  {\bibfnamefont {M.~T.}\ \bibnamefont {Cunha}},\ }\bibfield  {title} {\bibinfo
  {title} {{Emergence of Noncontextuality under Quantum Darwinism}},\ }\href
  {https://doi.org/10.1103/PRXQuantum.2.030351} {\bibfield  {journal} {\bibinfo
   {journal} {PRX Quantum}\ }\textbf {\bibinfo {volume} {2}},\ \bibinfo {pages}
  {030351} (\bibinfo {year} {2021})}\BibitemShut {NoStop}%
\bibitem [{\citenamefont {Werner}(1989)}]{werner1989quantum}%
  \BibitemOpen
  \bibfield  {author} {\bibinfo {author} {\bibfnamefont {R.~F.}\ \bibnamefont
  {Werner}},\ }\bibfield  {title} {\bibinfo {title} {{Quantum states with
  Einstein-Podolsky-Rosen correlations admitting a hidden-variable model}},\
  }\href {https://doi.org/10.1103/physreva.40.4277} {\bibfield  {journal}
  {\bibinfo  {journal} {Physical Review A}\ }\textbf {\bibinfo {volume} {40}},\
  \bibinfo {pages} {4277} (\bibinfo {year} {1989})}\BibitemShut {NoStop}%
\bibitem [{\citenamefont {Wright}\ and\ \citenamefont
  {Farkas}(2022)}]{wright2022invertible}%
  \BibitemOpen
  \bibfield  {author} {\bibinfo {author} {\bibfnamefont {V.~J.}\ \bibnamefont
  {Wright}}\ and\ \bibinfo {author} {\bibfnamefont {M.}~\bibnamefont
  {Farkas}},\ }\href {https://doi.org/10.48550/ARXIV.2211.12550} {\bibinfo
  {title} {An invertible map between {B}ell non-local and contextuality
  scenarios}},\ \bibinfo {howpublished} {arXiv: 2211.12550 [quant-ph]}
  (\bibinfo {year} {2022})\BibitemShut {NoStop}%
\bibitem [{\citenamefont {Selby}\ \emph {et~al.}(2023)\citenamefont {Selby},
  \citenamefont {Schmid}, \citenamefont {Wolfe}, \citenamefont {Sainz},
  \citenamefont {Kunjwal},\ and\ \citenamefont
  {Spekkens}}]{selby2021accessible}%
  \BibitemOpen
  \bibfield  {author} {\bibinfo {author} {\bibfnamefont {J.~H.}\ \bibnamefont
  {Selby}}, \bibinfo {author} {\bibfnamefont {D.}~\bibnamefont {Schmid}},
  \bibinfo {author} {\bibfnamefont {E.}~\bibnamefont {Wolfe}}, \bibinfo
  {author} {\bibfnamefont {A.~B.}\ \bibnamefont {Sainz}}, \bibinfo {author}
  {\bibfnamefont {R.}~\bibnamefont {Kunjwal}},\ and\ \bibinfo {author}
  {\bibfnamefont {R.~W.}\ \bibnamefont {Spekkens}},\ }\bibfield  {title}
  {\bibinfo {title} {Accessible fragments of generalized probabilistic
  theories, cone equivalence, and applications to witnessing nonclassicality},\
  }\href {https://doi.org/10.1103/PhysRevA.107.062203} {\bibfield  {journal}
  {\bibinfo  {journal} {Physical Review A}\ }\textbf {\bibinfo {volume}
  {107}},\ \bibinfo {pages} {062203} (\bibinfo {year} {2023})}\BibitemShut
  {NoStop}%
\bibitem [{\citenamefont {Selby}\ \emph {et~al.}(2022)\citenamefont {Selby},
  \citenamefont {Wolfe}, \citenamefont {Schmid},\ and\ \citenamefont
  {Sainz}}]{selby2022opensource}%
  \BibitemOpen
  \bibfield  {author} {\bibinfo {author} {\bibfnamefont {J.~H.}\ \bibnamefont
  {Selby}}, \bibinfo {author} {\bibfnamefont {E.}~\bibnamefont {Wolfe}},
  \bibinfo {author} {\bibfnamefont {D.}~\bibnamefont {Schmid}},\ and\ \bibinfo
  {author} {\bibfnamefont {A.~B.}\ \bibnamefont {Sainz}},\ }\href
  {https://doi.org/10.48550/ARXIV.2204.11905} {\bibinfo {title} {An open-source
  linear program for testing nonclassicality}},\ \bibinfo {howpublished}
  {arXiv: 2204.11905 [quant-ph]} (\bibinfo {year} {2022})\BibitemShut {NoStop}%
\bibitem [{\citenamefont {Rossi}\ \emph {et~al.}(2023)\citenamefont {Rossi},
  \citenamefont {Schmid}, \citenamefont {Selby},\ and\ \citenamefont
  {Sainz}}]{rossi2022contextuality}%
  \BibitemOpen
  \bibfield  {author} {\bibinfo {author} {\bibfnamefont {V.~P.}\ \bibnamefont
  {Rossi}}, \bibinfo {author} {\bibfnamefont {D.}~\bibnamefont {Schmid}},
  \bibinfo {author} {\bibfnamefont {J.~H.}\ \bibnamefont {Selby}},\ and\
  \bibinfo {author} {\bibfnamefont {A.~B.}\ \bibnamefont {Sainz}},\ }\bibfield
  {title} {\bibinfo {title} {Contextuality with vanishing coherence and maximal
  robustness to dephasing},\ }\href
  {https://doi.org/10.1103/PhysRevA.108.032213} {\bibfield  {journal} {\bibinfo
   {journal} {Physical Review A}\ }\textbf {\bibinfo {volume} {108}},\ \bibinfo
  {pages} {032213} (\bibinfo {year} {2023})}\BibitemShut {NoStop}%
\bibitem [{\citenamefont {Spekkens}(2005)}]{spekkens2005contextuality}%
  \BibitemOpen
  \bibfield  {author} {\bibinfo {author} {\bibfnamefont {R.~W.}\ \bibnamefont
  {Spekkens}},\ }\bibfield  {title} {\bibinfo {title} {Contextuality for
  preparations, transformations, and unsharp measurements},\ }\href
  {https://doi.org/10.1103/PhysRevA.71.052108} {\bibfield  {journal} {\bibinfo
  {journal} {Physical Review A}\ }\textbf {\bibinfo {volume} {71}},\ \bibinfo
  {pages} {052108} (\bibinfo {year} {2005})}\BibitemShut {NoStop}%
\bibitem [{\citenamefont {Harrigan}\ and\ \citenamefont
  {Spekkens}(2010)}]{harrigan2010einstein}%
  \BibitemOpen
  \bibfield  {author} {\bibinfo {author} {\bibfnamefont {N.}~\bibnamefont
  {Harrigan}}\ and\ \bibinfo {author} {\bibfnamefont {R.~W.}\ \bibnamefont
  {Spekkens}},\ }\bibfield  {title} {\bibinfo {title} {Einstein,
  incompleteness, and the epistemic view of quantum states},\ }\href
  {https://doi.org/https://doi.org/10.1007/s10701-009-9347-0} {\bibfield
  {journal} {\bibinfo  {journal} {Foundations of Physics}\ }\textbf {\bibinfo
  {volume} {40}},\ \bibinfo {pages} {125} (\bibinfo {year} {2010})}\BibitemShut
  {NoStop}%
\bibitem [{\citenamefont {Schmid}\ \emph {et~al.}(2023)\citenamefont {Schmid},
  \citenamefont {Selby},\ and\ \citenamefont
  {Spekkens}}]{schmid2023addressing}%
  \BibitemOpen
  \bibfield  {author} {\bibinfo {author} {\bibfnamefont {D.}~\bibnamefont
  {Schmid}}, \bibinfo {author} {\bibfnamefont {J.~H.}\ \bibnamefont {Selby}},\
  and\ \bibinfo {author} {\bibfnamefont {R.~W.}\ \bibnamefont {Spekkens}},\
  }\href {https://doi.org/https://doi.org/10.48550/arXiv.2302.07282} {\bibinfo
  {title} {Addressing some common objections to generalized
  noncontextuality}},\ \bibinfo {howpublished} {arXiv: 2302.07282 [quant-ph]}
  (\bibinfo {year} {2023})\BibitemShut {NoStop}%
\bibitem [{\citenamefont {Schmid}\ \emph {et~al.}(2018)\citenamefont {Schmid},
  \citenamefont {Spekkens},\ and\ \citenamefont {Wolfe}}]{schmid2018all}%
  \BibitemOpen
  \bibfield  {author} {\bibinfo {author} {\bibfnamefont {D.}~\bibnamefont
  {Schmid}}, \bibinfo {author} {\bibfnamefont {R.~W.}\ \bibnamefont
  {Spekkens}},\ and\ \bibinfo {author} {\bibfnamefont {E.}~\bibnamefont
  {Wolfe}},\ }\bibfield  {title} {\bibinfo {title} {All the noncontextuality
  inequalities for arbitrary prepare-and-measure experiments with respect to
  any fixed set of operational equivalences},\ }\href
  {https://doi.org/10.1103/PhysRevA.97.062103} {\bibfield  {journal} {\bibinfo
  {journal} {Phys. Rev. A}\ }\textbf {\bibinfo {volume} {97}},\ \bibinfo
  {pages} {062103} (\bibinfo {year} {2018})}\BibitemShut {NoStop}%
\bibitem [{\citenamefont {Tavakoli}\ \emph {et~al.}(2021)\citenamefont
  {Tavakoli}, \citenamefont {Cruzeiro}, \citenamefont {Uola},\ and\
  \citenamefont {Abbott}}]{tavakoli2021bounding}%
  \BibitemOpen
  \bibfield  {author} {\bibinfo {author} {\bibfnamefont {A.}~\bibnamefont
  {Tavakoli}}, \bibinfo {author} {\bibfnamefont {E.~Z.}\ \bibnamefont
  {Cruzeiro}}, \bibinfo {author} {\bibfnamefont {R.}~\bibnamefont {Uola}},\
  and\ \bibinfo {author} {\bibfnamefont {A.~A.}\ \bibnamefont {Abbott}},\
  }\bibfield  {title} {\bibinfo {title} {Bounding and simulating contextual
  correlations in quantum theory},\ }\href
  {https://doi.org/10.1103/PRXQuantum.2.020334} {\bibfield  {journal} {\bibinfo
   {journal} {PRX Quantum}\ }\textbf {\bibinfo {volume} {2}},\ \bibinfo {pages}
  {020334} (\bibinfo {year} {2021})}\BibitemShut {NoStop}%
\bibitem [{\citenamefont {Chaturvedi}\ \emph {et~al.}(2021)\citenamefont
  {Chaturvedi}, \citenamefont {Farkas},\ and\ \citenamefont
  {Wright}}]{chaturvedi2021characterising}%
  \BibitemOpen
  \bibfield  {author} {\bibinfo {author} {\bibfnamefont {A.}~\bibnamefont
  {Chaturvedi}}, \bibinfo {author} {\bibfnamefont {M.}~\bibnamefont {Farkas}},\
  and\ \bibinfo {author} {\bibfnamefont {V.~J.}\ \bibnamefont {Wright}},\
  }\bibfield  {title} {\bibinfo {title} {Characterising and bounding the set of
  quantum behaviours in contextuality scenarios},\ }\href
  {https://doi.org/10.22331/q-2021-06-29-484} {\bibfield  {journal} {\bibinfo
  {journal} {{Quantum}}\ }\textbf {\bibinfo {volume} {5}},\ \bibinfo {pages}
  {484} (\bibinfo {year} {2021})}\BibitemShut {NoStop}%
\bibitem [{\citenamefont {Wagner}\ \emph
  {et~al.}(2022{\natexlab{b}})\citenamefont {Wagner}, \citenamefont {Barbosa},\
  and\ \citenamefont {Galvão}}]{Wagner2022}%
  \BibitemOpen
  \bibfield  {author} {\bibinfo {author} {\bibfnamefont {R.}~\bibnamefont
  {Wagner}}, \bibinfo {author} {\bibfnamefont {R.~S.}\ \bibnamefont
  {Barbosa}},\ and\ \bibinfo {author} {\bibfnamefont {E.~F.}\ \bibnamefont
  {Galvão}},\ }\href {https://doi.org/10.48550/ARXIV.2209.02670} {\bibinfo
  {title} {Inequalities witnessing coherence, nonlocality, and
  contextuality}},\ \bibinfo {howpublished} {arXiv: 2209.02670 [quant-ph]}
  (\bibinfo {year} {2022}{\natexlab{b}})\BibitemShut {NoStop}%
\bibitem [{\citenamefont {Catani}\ \emph {et~al.}(2023)\citenamefont {Catani},
  \citenamefont {Leifer}, \citenamefont {Schmid},\ and\ \citenamefont
  {Spekkens}}]{catani2021whyinterference}%
  \BibitemOpen
  \bibfield  {author} {\bibinfo {author} {\bibfnamefont {L.}~\bibnamefont
  {Catani}}, \bibinfo {author} {\bibfnamefont {M.}~\bibnamefont {Leifer}},
  \bibinfo {author} {\bibfnamefont {D.}~\bibnamefont {Schmid}},\ and\ \bibinfo
  {author} {\bibfnamefont {R.~W.}\ \bibnamefont {Spekkens}},\ }\bibfield
  {title} {\bibinfo {title} {Why interference phenomena do not capture the
  essence of quantum theory},\ }\href
  {https://doi.org/10.22331/q-2023-09-25-1119} {\bibfield  {journal} {\bibinfo
  {journal} {{Quantum}}\ }\textbf {\bibinfo {volume} {7}},\ \bibinfo {pages}
  {1119} (\bibinfo {year} {2023})}\BibitemShut {NoStop}%
\bibitem [{\citenamefont {Schmid}\ \emph {et~al.}(2021)\citenamefont {Schmid},
  \citenamefont {Selby}, \citenamefont {Wolfe}, \citenamefont {Kunjwal},\ and\
  \citenamefont {Spekkens}}]{schmid2021characterization}%
  \BibitemOpen
  \bibfield  {author} {\bibinfo {author} {\bibfnamefont {D.}~\bibnamefont
  {Schmid}}, \bibinfo {author} {\bibfnamefont {J.~H.}\ \bibnamefont {Selby}},
  \bibinfo {author} {\bibfnamefont {E.}~\bibnamefont {Wolfe}}, \bibinfo
  {author} {\bibfnamefont {R.}~\bibnamefont {Kunjwal}},\ and\ \bibinfo {author}
  {\bibfnamefont {R.~W.}\ \bibnamefont {Spekkens}},\ }\bibfield  {title}
  {\bibinfo {title} {Characterization of noncontextuality in the framework of
  generalized probabilistic theories},\ }\href
  {https://doi.org/10.1103/PRXQuantum.2.010331} {\bibfield  {journal} {\bibinfo
   {journal} {PRX Quantum}\ }\textbf {\bibinfo {volume} {2}},\ \bibinfo {pages}
  {010331} (\bibinfo {year} {2021})}\BibitemShut {NoStop}%
\bibitem [{\citenamefont {Shahandeh}(2021)}]{shahandeh2021contextuality}%
  \BibitemOpen
  \bibfield  {author} {\bibinfo {author} {\bibfnamefont {F.}~\bibnamefont
  {Shahandeh}},\ }\bibfield  {title} {\bibinfo {title} {Contextuality of
  general probabilistic theories},\ }\href
  {https://doi.org/10.1103/PRXQuantum.2.010330} {\bibfield  {journal} {\bibinfo
   {journal} {PRX Quantum}\ }\textbf {\bibinfo {volume} {2}},\ \bibinfo {pages}
  {010330} (\bibinfo {year} {2021})}\BibitemShut {NoStop}%
\bibitem [{\citenamefont {Hardy}(2001)}]{hardy2001quantum}%
  \BibitemOpen
  \bibfield  {author} {\bibinfo {author} {\bibfnamefont {L.}~\bibnamefont
  {Hardy}},\ }\href {https://doi.org/10.48550/ARXIV.QUANT-PH/0101012} {\bibinfo
  {title} {Quantum theory from five reasonable axioms}},\ \bibinfo
  {howpublished} {quant-ph/0101012 [quant-ph]} (\bibinfo {year}
  {2001})\BibitemShut {NoStop}%
\bibitem [{\citenamefont {Barrett}(2007)}]{barrett2007information}%
  \BibitemOpen
  \bibfield  {author} {\bibinfo {author} {\bibfnamefont {J.}~\bibnamefont
  {Barrett}},\ }\bibfield  {title} {\bibinfo {title} {Information processing in
  generalized probabilistic theories},\ }\href
  {https://doi.org/10.1103/PhysRevA.75.032304} {\bibfield  {journal} {\bibinfo
  {journal} {Phys. Rev. A}\ }\textbf {\bibinfo {volume} {75}},\ \bibinfo
  {pages} {032304} (\bibinfo {year} {2007})}\BibitemShut {NoStop}%
\bibitem [{\citenamefont {M{\"u}ller}(2021)}]{muller2021probabilistic}%
  \BibitemOpen
  \bibfield  {author} {\bibinfo {author} {\bibfnamefont {M.}~\bibnamefont
  {M{\"u}ller}},\ }\bibfield  {title} {\bibinfo {title} {Probabilistic theories
  and reconstructions of quantum theory},\ }\href
  {https://doi.org/10.21468/SciPostPhysLectNotes.28} {\bibfield  {journal}
  {\bibinfo  {journal} {SciPost Physics Lecture Notes}\ ,\ \bibinfo {pages}
  {028}} (\bibinfo {year} {2021})}\BibitemShut {NoStop}%
\bibitem [{\citenamefont {Pl{\'{a}}vala}(2023)}]{plavala2021general}%
  \BibitemOpen
  \bibfield  {author} {\bibinfo {author} {\bibfnamefont {M.}~\bibnamefont
  {Pl{\'{a}}vala}},\ }\bibfield  {title} {\bibinfo {title} {General
  probabilistic theories: An introduction},\ }\href
  {https://doi.org/10.1016/j.physrep.2023.09.001} {\bibfield  {journal}
  {\bibinfo  {journal} {Physics Reports}\ }\textbf {\bibinfo {volume} {1033}},\
  \bibinfo {pages} {1} (\bibinfo {year} {2023})}\BibitemShut {NoStop}%
\bibitem [{\citenamefont {Gitton}\ and\ \citenamefont
  {Woods}(2022)}]{Gitton2022solvablecriterion}%
  \BibitemOpen
  \bibfield  {author} {\bibinfo {author} {\bibfnamefont {V.}~\bibnamefont
  {Gitton}}\ and\ \bibinfo {author} {\bibfnamefont {M.~P.}\ \bibnamefont
  {Woods}},\ }\bibfield  {title} {\bibinfo {title} {Solvable {C}riterion for
  the {C}ontextuality of any {P}repare-and-{M}easure {S}cenario},\ }\href
  {https://doi.org/10.22331/q-2022-06-07-732} {\bibfield  {journal} {\bibinfo
  {journal} {{Quantum}}\ }\textbf {\bibinfo {volume} {6}},\ \bibinfo {pages}
  {732} (\bibinfo {year} {2022})}\BibitemShut {NoStop}%
\bibitem [{\citenamefont {Clements}\ \emph {et~al.}(2016)\citenamefont
  {Clements}, \citenamefont {Humphreys}, \citenamefont {Metcalf}, \citenamefont
  {Kolthammer},\ and\ \citenamefont {Walmsley}}]{Clements2016}%
  \BibitemOpen
  \bibfield  {author} {\bibinfo {author} {\bibfnamefont {W.~R.}\ \bibnamefont
  {Clements}}, \bibinfo {author} {\bibfnamefont {P.~C.}\ \bibnamefont
  {Humphreys}}, \bibinfo {author} {\bibfnamefont {B.~J.}\ \bibnamefont
  {Metcalf}}, \bibinfo {author} {\bibfnamefont {W.~S.}\ \bibnamefont
  {Kolthammer}},\ and\ \bibinfo {author} {\bibfnamefont {I.~A.}\ \bibnamefont
  {Walmsley}},\ }\bibfield  {title} {\bibinfo {title} {Optimal design for
  universal multiport interferometers},\ }\href
  {https://doi.org/10.1364/OPTICA.3.001460} {\bibfield  {journal} {\bibinfo
  {journal} {Optica}\ }\textbf {\bibinfo {volume} {3}},\ \bibinfo {pages}
  {1460} (\bibinfo {year} {2016})}\BibitemShut {NoStop}%
\bibitem [{\citenamefont {Arriola}\ \emph {et~al.}(2013)\citenamefont
  {Arriola}, \citenamefont {Gross}, \citenamefont {Jovanovic}, \citenamefont
  {Charles}, \citenamefont {Tuthill}, \citenamefont {Olaizola}, \citenamefont
  {Fuerbach},\ and\ \citenamefont {Withford}}]{arriola2013low}%
  \BibitemOpen
  \bibfield  {author} {\bibinfo {author} {\bibfnamefont {A.}~\bibnamefont
  {Arriola}}, \bibinfo {author} {\bibfnamefont {S.}~\bibnamefont {Gross}},
  \bibinfo {author} {\bibfnamefont {N.}~\bibnamefont {Jovanovic}}, \bibinfo
  {author} {\bibfnamefont {N.}~\bibnamefont {Charles}}, \bibinfo {author}
  {\bibfnamefont {P.~G.}\ \bibnamefont {Tuthill}}, \bibinfo {author}
  {\bibfnamefont {S.~M.}\ \bibnamefont {Olaizola}}, \bibinfo {author}
  {\bibfnamefont {A.}~\bibnamefont {Fuerbach}},\ and\ \bibinfo {author}
  {\bibfnamefont {M.~J.}\ \bibnamefont {Withford}},\ }\bibfield  {title}
  {\bibinfo {title} {Low bend loss waveguides enable compact, efficient {3D}
  photonic chips},\ }\href
  {https://opg.optica.org/oe/fulltext.cfm?uri=oe-21-3-2978&id=248856}
  {\bibfield  {journal} {\bibinfo  {journal} {Optics Express}\ }\textbf
  {\bibinfo {volume} {21}},\ \bibinfo {pages} {2978} (\bibinfo {year}
  {2013})}\BibitemShut {NoStop}%
\bibitem [{\citenamefont {Corrielli}\ \emph {et~al.}(2018)\citenamefont
  {Corrielli}, \citenamefont {Atzeni}, \citenamefont {Piacentini},
  \citenamefont {Pitsios}, \citenamefont {Crespi},\ and\ \citenamefont
  {Osellame}}]{corrielli2018symmetric}%
  \BibitemOpen
  \bibfield  {author} {\bibinfo {author} {\bibfnamefont {G.}~\bibnamefont
  {Corrielli}}, \bibinfo {author} {\bibfnamefont {S.}~\bibnamefont {Atzeni}},
  \bibinfo {author} {\bibfnamefont {S.}~\bibnamefont {Piacentini}}, \bibinfo
  {author} {\bibfnamefont {I.}~\bibnamefont {Pitsios}}, \bibinfo {author}
  {\bibfnamefont {A.}~\bibnamefont {Crespi}},\ and\ \bibinfo {author}
  {\bibfnamefont {R.}~\bibnamefont {Osellame}},\ }\bibfield  {title} {\bibinfo
  {title} {Symmetric polarization-insensitive directional couplers fabricated
  by femtosecond laser writing},\ }\href
  {https://opg.optica.org/oe/fulltext.cfm?uri=oe-26-12-15101&id=390097}
  {\bibfield  {journal} {\bibinfo  {journal} {Optics Express}\ }\textbf
  {\bibinfo {volume} {26}},\ \bibinfo {pages} {15101} (\bibinfo {year}
  {2018})}\BibitemShut {NoStop}%
\bibitem [{\citenamefont {Ceccarelli}\ \emph {et~al.}(2019)\citenamefont
  {Ceccarelli}, \citenamefont {Atzeni}, \citenamefont {Prencipe}, \citenamefont
  {Farinaro},\ and\ \citenamefont {Osellame}}]{ceccarelli2019thermal}%
  \BibitemOpen
  \bibfield  {author} {\bibinfo {author} {\bibfnamefont {F.}~\bibnamefont
  {Ceccarelli}}, \bibinfo {author} {\bibfnamefont {S.}~\bibnamefont {Atzeni}},
  \bibinfo {author} {\bibfnamefont {A.}~\bibnamefont {Prencipe}}, \bibinfo
  {author} {\bibfnamefont {R.}~\bibnamefont {Farinaro}},\ and\ \bibinfo
  {author} {\bibfnamefont {R.}~\bibnamefont {Osellame}},\ }\bibfield  {title}
  {\bibinfo {title} {Thermal phase shifters for femtosecond laser written
  photonic integrated circuits},\ }\href
  {https://ieeexplore.ieee.org/document/8736817} {\bibfield  {journal}
  {\bibinfo  {journal} {Journal of Lightwave Technology}\ }\textbf {\bibinfo
  {volume} {37}},\ \bibinfo {pages} {4275} (\bibinfo {year}
  {2019})}\BibitemShut {NoStop}%
\bibitem [{\citenamefont {Ceccarelli}\ \emph {et~al.}(2020)\citenamefont
  {Ceccarelli}, \citenamefont {Atzeni}, \citenamefont {Pentangelo},
  \citenamefont {Pellegatta}, \citenamefont {Crespi},\ and\ \citenamefont
  {Osellame}}]{ceccarelli2020low}%
  \BibitemOpen
  \bibfield  {author} {\bibinfo {author} {\bibfnamefont {F.}~\bibnamefont
  {Ceccarelli}}, \bibinfo {author} {\bibfnamefont {S.}~\bibnamefont {Atzeni}},
  \bibinfo {author} {\bibfnamefont {C.}~\bibnamefont {Pentangelo}}, \bibinfo
  {author} {\bibfnamefont {F.}~\bibnamefont {Pellegatta}}, \bibinfo {author}
  {\bibfnamefont {A.}~\bibnamefont {Crespi}},\ and\ \bibinfo {author}
  {\bibfnamefont {R.}~\bibnamefont {Osellame}},\ }\bibfield  {title} {\bibinfo
  {title} {Low power reconfigurability and reduced crosstalk in integrated
  photonic circuits fabricated by femtosecond laser micromachining},\ }\href
  {https://onlinelibrary.wiley.com/doi/10.1002/lpor.202000024} {\bibfield
  {journal} {\bibinfo  {journal} {Laser \& Photonics Reviews}\ }\textbf
  {\bibinfo {volume} {14}},\ \bibinfo {pages} {2000024} (\bibinfo {year}
  {2020})}\BibitemShut {NoStop}%
\bibitem [{\citenamefont {Alexiev}\ \emph {et~al.}(2021)\citenamefont
  {Alexiev}, \citenamefont {Mak}, \citenamefont {Sacher},\ and\ \citenamefont
  {Poon}}]{alexiev21calibration}%
  \BibitemOpen
  \bibfield  {author} {\bibinfo {author} {\bibfnamefont {C.}~\bibnamefont
  {Alexiev}}, \bibinfo {author} {\bibfnamefont {J.~C.~C.}\ \bibnamefont {Mak}},
  \bibinfo {author} {\bibfnamefont {W.~D.}\ \bibnamefont {Sacher}},\ and\
  \bibinfo {author} {\bibfnamefont {J.~K.~S.}\ \bibnamefont {Poon}},\
  }\bibfield  {title} {\bibinfo {title} {Calibrating rectangular interferometer
  meshes with external photodetectors},\ }\href
  {https://doi.org/10.1364/OSAC.437918} {\bibfield  {journal} {\bibinfo
  {journal} {OSA Continuum}\ }\textbf {\bibinfo {volume} {4}},\ \bibinfo
  {pages} {2892} (\bibinfo {year} {2021})}\BibitemShut {NoStop}%
\bibitem [{\citenamefont {Carolan}\ \emph {et~al.}(2015)\citenamefont
  {Carolan}, \citenamefont {Harrold}, \citenamefont {Sparrow}, \citenamefont
  {Martín-López}, \citenamefont {Russell}, \citenamefont {Silverstone},
  \citenamefont {Shadbolt}, \citenamefont {Matsuda}, \citenamefont {Oguma},
  \citenamefont {Itoh}, \citenamefont {Marshall}, \citenamefont {Thompson},
  \citenamefont {Matthews}, \citenamefont {Hashimoto}, \citenamefont
  {O’Brien},\ and\ \citenamefont {Laing}}]{carolan2015universal}%
  \BibitemOpen
  \bibfield  {author} {\bibinfo {author} {\bibfnamefont {J.}~\bibnamefont
  {Carolan}}, \bibinfo {author} {\bibfnamefont {C.}~\bibnamefont {Harrold}},
  \bibinfo {author} {\bibfnamefont {C.}~\bibnamefont {Sparrow}}, \bibinfo
  {author} {\bibfnamefont {E.}~\bibnamefont {Martín-López}}, \bibinfo
  {author} {\bibfnamefont {N.~J.}\ \bibnamefont {Russell}}, \bibinfo {author}
  {\bibfnamefont {J.~W.}\ \bibnamefont {Silverstone}}, \bibinfo {author}
  {\bibfnamefont {P.~J.}\ \bibnamefont {Shadbolt}}, \bibinfo {author}
  {\bibfnamefont {N.}~\bibnamefont {Matsuda}}, \bibinfo {author} {\bibfnamefont
  {M.}~\bibnamefont {Oguma}}, \bibinfo {author} {\bibfnamefont
  {M.}~\bibnamefont {Itoh}}, \bibinfo {author} {\bibfnamefont {G.~D.}\
  \bibnamefont {Marshall}}, \bibinfo {author} {\bibfnamefont {M.~G.}\
  \bibnamefont {Thompson}}, \bibinfo {author} {\bibfnamefont {J.~C.~F.}\
  \bibnamefont {Matthews}}, \bibinfo {author} {\bibfnamefont {T.}~\bibnamefont
  {Hashimoto}}, \bibinfo {author} {\bibfnamefont {J.~L.}\ \bibnamefont
  {O’Brien}},\ and\ \bibinfo {author} {\bibfnamefont {A.}~\bibnamefont
  {Laing}},\ }\bibfield  {title} {\bibinfo {title} {Universal linear optics},\
  }\href {https://doi.org/10.1126/science.aab3642} {\bibfield  {journal}
  {\bibinfo  {journal} {Science}\ }\textbf {\bibinfo {volume} {349}},\ \bibinfo
  {pages} {711} (\bibinfo {year} {2015})}\BibitemShut {NoStop}%
\bibitem [{\citenamefont {Harris}\ \emph {et~al.}(2017)\citenamefont {Harris},
  \citenamefont {Steinbrecher}, \citenamefont {Prabhu}, \citenamefont {Lahini},
  \citenamefont {Mower}, \citenamefont {Bunandar}, \citenamefont {Chen},
  \citenamefont {Wong}, \citenamefont {Baehr-Jones}, \citenamefont {Hochberg}
  \emph {et~al.}}]{harris2017quantum}%
  \BibitemOpen
  \bibfield  {author} {\bibinfo {author} {\bibfnamefont {N.~C.}\ \bibnamefont
  {Harris}}, \bibinfo {author} {\bibfnamefont {G.~R.}\ \bibnamefont
  {Steinbrecher}}, \bibinfo {author} {\bibfnamefont {M.}~\bibnamefont
  {Prabhu}}, \bibinfo {author} {\bibfnamefont {Y.}~\bibnamefont {Lahini}},
  \bibinfo {author} {\bibfnamefont {J.}~\bibnamefont {Mower}}, \bibinfo
  {author} {\bibfnamefont {D.}~\bibnamefont {Bunandar}}, \bibinfo {author}
  {\bibfnamefont {C.}~\bibnamefont {Chen}}, \bibinfo {author} {\bibfnamefont
  {F.~N.}\ \bibnamefont {Wong}}, \bibinfo {author} {\bibfnamefont
  {T.}~\bibnamefont {Baehr-Jones}}, \bibinfo {author} {\bibfnamefont
  {M.}~\bibnamefont {Hochberg}}, \emph {et~al.},\ }\bibfield  {title} {\bibinfo
  {title} {Quantum transport simulations in a programmable nanophotonic
  processor},\ }\href {https://www.nature.com/articles/nphoton.2017.95}
  {\bibfield  {journal} {\bibinfo  {journal} {Nature Photonics}\ }\textbf
  {\bibinfo {volume} {11}},\ \bibinfo {pages} {447} (\bibinfo {year}
  {2017})}\BibitemShut {NoStop}%
\bibitem [{\citenamefont {Pont}\ \emph {et~al.}(2022)\citenamefont {Pont},
  \citenamefont {Albiero}, \citenamefont {Thomas}, \citenamefont {Spagnolo},
  \citenamefont {Ceccarelli}, \citenamefont {Corrielli}, \citenamefont
  {Brieussel}, \citenamefont {Somaschi}, \citenamefont {Huet}, \citenamefont
  {Harouri}, \citenamefont {Lema\^{\i}tre}, \citenamefont {Sagnes},
  \citenamefont {Belabas}, \citenamefont {Sciarrino}, \citenamefont {Osellame},
  \citenamefont {Senellart},\ and\ \citenamefont
  {Crespi}}]{crespi2022indistinguishability}%
  \BibitemOpen
  \bibfield  {author} {\bibinfo {author} {\bibfnamefont {M.}~\bibnamefont
  {Pont}}, \bibinfo {author} {\bibfnamefont {R.}~\bibnamefont {Albiero}},
  \bibinfo {author} {\bibfnamefont {S.~E.}\ \bibnamefont {Thomas}}, \bibinfo
  {author} {\bibfnamefont {N.}~\bibnamefont {Spagnolo}}, \bibinfo {author}
  {\bibfnamefont {F.}~\bibnamefont {Ceccarelli}}, \bibinfo {author}
  {\bibfnamefont {G.}~\bibnamefont {Corrielli}}, \bibinfo {author}
  {\bibfnamefont {A.}~\bibnamefont {Brieussel}}, \bibinfo {author}
  {\bibfnamefont {N.}~\bibnamefont {Somaschi}}, \bibinfo {author}
  {\bibfnamefont {H.}~\bibnamefont {Huet}}, \bibinfo {author} {\bibfnamefont
  {A.}~\bibnamefont {Harouri}}, \bibinfo {author} {\bibfnamefont
  {A.}~\bibnamefont {Lema\^{\i}tre}}, \bibinfo {author} {\bibfnamefont
  {I.}~\bibnamefont {Sagnes}}, \bibinfo {author} {\bibfnamefont
  {N.}~\bibnamefont {Belabas}}, \bibinfo {author} {\bibfnamefont
  {F.}~\bibnamefont {Sciarrino}}, \bibinfo {author} {\bibfnamefont
  {R.}~\bibnamefont {Osellame}}, \bibinfo {author} {\bibfnamefont
  {P.}~\bibnamefont {Senellart}},\ and\ \bibinfo {author} {\bibfnamefont
  {A.}~\bibnamefont {Crespi}},\ }\bibfield  {title} {\bibinfo {title}
  {Quantifying $n$-photon indistinguishability with a cyclic integrated
  interferometer},\ }\href {https://doi.org/10.1103/PhysRevX.12.031033}
  {\bibfield  {journal} {\bibinfo  {journal} {Phys. Rev. X}\ }\textbf {\bibinfo
  {volume} {12}},\ \bibinfo {pages} {031033} (\bibinfo {year}
  {2022})}\BibitemShut {NoStop}%
\bibitem [{\citenamefont {Hoch}\ \emph {et~al.}(2023)\citenamefont {Hoch},
  \citenamefont {Giordani}, \citenamefont {Spagnolo}, \citenamefont {Crespi},
  \citenamefont {Osellame},\ and\ \citenamefont
  {Sciarrino}}]{hoch2023characterization}%
  \BibitemOpen
  \bibfield  {author} {\bibinfo {author} {\bibfnamefont {F.}~\bibnamefont
  {Hoch}}, \bibinfo {author} {\bibfnamefont {T.}~\bibnamefont {Giordani}},
  \bibinfo {author} {\bibfnamefont {N.}~\bibnamefont {Spagnolo}}, \bibinfo
  {author} {\bibfnamefont {A.}~\bibnamefont {Crespi}}, \bibinfo {author}
  {\bibfnamefont {R.}~\bibnamefont {Osellame}},\ and\ \bibinfo {author}
  {\bibfnamefont {F.}~\bibnamefont {Sciarrino}},\ }\bibfield  {title} {\bibinfo
  {title} {{Characterization of multimode linear optical networks}},\ }\href
  {https://doi.org/10.1117/1.APN.2.1.016007} {\bibfield  {journal} {\bibinfo
  {journal} {Advanced Photonics Nexus}\ }\textbf {\bibinfo {volume} {2}},\
  \bibinfo {pages} {016007} (\bibinfo {year} {2023})}\BibitemShut {NoStop}%
\end{thebibliography}%

\begin{acknowledgments}
We would like to thank Emmanuel Zambrini, Vinicius P. Rossi{, Amit Te'eni} and Antonio Ruiz-Molero for fruitful discussions. {The programmable UPP was partially fabricated at PoliFAB, the micro- and nanofabrication facility of Politecnico di Milano (\href{https://www.polifab.polimi.it/}{https://www.polifab.polimi.it/}). C.P., F.C. and R.O. wish to thank the PoliFAB staff for the valuable technical support.}

{\textbf{Funding:}} R.W. and A.Ca. acknowledge support from FCT -- Fundação para a Ciência e a Tecnologia (Portugal) through PhD Grants SFRH/BD/151199/2021 and SFRH/BD/151190/2021. E.F.G. acknowledges support from FCT – Fundação para a Ciência e a Tecnologia (Portugal) via project CEECINST/00062/2018. {We acknowledge
support from the ERC Advanced Grant QU-BOSS (QUantum advantage via nonlinear BOSon Sampling, grant agreement no. 884676) and from PNRR MUR project PE0000023-NQSTI (Spoke 4 and Spoke 7)}. This work is also supported by Horizon Europe project FoQaCiA, GA no.101070558. 

\textbf{Author contributions:} 
T. G., C. E., F. H., G. C., N. S., F.S., R. W,   A. Ca., E. F. G. conceived the concept and experiment.  R. W., A. Ca., and E. F. G. developed the theory of the method. 
C. P., F. C, S.P., A. Cr., and R.O. fabricated the photonic chips and characterized the integrated device using classical optics.  C. E., T. G., F. H., G.C., N.S., and F.S. carried out the quantum experiments and performed the data analysis. 
All the authors discussed the results and contributed to the writing of the paper.

\textbf{Competing Interests:}
{F.C. and R.O. are co-founders of the company Ephos. The authors declare no other competing interests.}

\textbf{Data and materials availability:}
{All data needed to evaluate the conclusions in the paper are present in the paper and/or the Supplementary Materials or in the repository \href{https://zenodo.org/record/8386354}{https://zenodo.org/record/8386354}.}

\end{acknowledgments}

\end{document}


\title{Supplemental Material for: Experimental certification of contextuality, coherence and dimension in a programmable {universal photonic processor}                                                                                                                                                                             }

\author{Taira Giordani} 
\thanks{These two authors contributed equally}
\affiliation{Dipartimento di Fisica, Sapienza Universit\`{a} di Roma, Piazzale Aldo Moro 5, I-00185 Roma, Italy}

\author{Rafael Wagner}
\thanks{These two authors contributed equally}
\affiliation{International Iberian Nanotechnology Laboratory (INL)
 Av. Mestre José Veiga s/n, 4715-330 Braga, Portugal
}
\affiliation{Centro de F\'{i}sica, Universidade do Minho, Campus de Gualtar, 4710-057 Braga, Portugal}

\author{Chiara Esposito} 
\affiliation{Dipartimento di Fisica, Sapienza Universit\`{a} di Roma, Piazzale Aldo Moro 5, I-00185 Roma, Italy} 

\author{Anita Camillini}
\affiliation{International Iberian Nanotechnology Laboratory (INL)
 Av. Mestre José Veiga s/n, 4715-330 Braga, Portugal
}
\affiliation{Centro de F\'{i}sica, Universidade do Minho, Campus de Gualtar, 4710-057 Braga, Portugal}

\author{Francesco Hoch} 
\affiliation{Dipartimento di Fisica, Sapienza Universit\`{a} di Roma, Piazzale Aldo Moro 5, I-00185 Roma, Italy} 

\author{Gonzalo Carvacho} \affiliation{Dipartimento di Fisica, Sapienza Universit\`{a} di Roma, Piazzale Aldo Moro 5, I-00185 Roma, Italy}

\author{Ciro Pentangelo}
\affiliation{Dipartimento di Fisica, Politecnico di Milano, Piazza Leonardo da Vinci, 32, I-20133 Milano, Italy}
\affiliation{Istituto di Fotonica e Nanotecnologie, Consiglio Nazionale delle Ricerche (IFN-CNR), 
Piazza Leonardo da Vinci, 32, I-20133 Milano, Italy}

\author{Francesco Ceccarelli}
\affiliation{Istituto di Fotonica e Nanotecnologie, Consiglio Nazionale delle Ricerche (IFN-CNR), 
Piazza Leonardo da Vinci, 32, I-20133 Milano, Italy}

\author{Simone Piacentini}
\affiliation{Istituto di Fotonica e Nanotecnologie, Consiglio Nazionale delle Ricerche (IFN-CNR), 
Piazza Leonardo da Vinci, 32, I-20133 Milano, Italy}

\author{Andrea Crespi}
\affiliation{Dipartimento di Fisica, Politecnico di Milano, Piazza Leonardo da Vinci, 32, I-20133 Milano, Italy}
\affiliation{Istituto di Fotonica e Nanotecnologie, Consiglio Nazionale delle Ricerche (IFN-CNR), 
Piazza Leonardo da Vinci, 32, I-20133 Milano, Italy}

\author{Nicol\`o Spagnolo} \affiliation{Dipartimento di Fisica, Sapienza Universit\`{a} di Roma, Piazzale Aldo Moro 5, I-00185 Roma, Italy}
\author{Roberto Osellame}
\affiliation{Istituto di Fotonica e Nanotecnologie, Consiglio Nazionale delle Ricerche (IFN-CNR), 
Piazza Leonardo da Vinci, 32, I-20133 Milano, Italy}


\author{Ernesto F. Galv\~ao}
\email[Corresponding author: ]{ernesto.galvao@inl.int}
\affiliation{International Iberian Nanotechnology Laboratory (INL)
 Av. Mestre José Veiga s/n, 4715-330 Braga, Portugal
}

\affiliation{Instituto de F\'isica, Universidade Federal Fluminense, Av. Gal. Milton Tavares de Souza s/n, Niter\'oi, RJ, 24210-340, Brazil}

\author{Fabio Sciarrino}
\email[Corresponding author: ]{fabio.sciarrino@uniroma1.it}
\affiliation{Dipartimento di Fisica, Sapienza Universit\`{a} di Roma, Piazzale Aldo Moro 5, I-00185 Roma, Italy}

\maketitle

\section{Simultaneous witness of dimension and coherence}\label{appendix: dim_cohe}
\begin{figure}[b]
    \centering
    \includegraphics[width = 0.7 \textwidth]{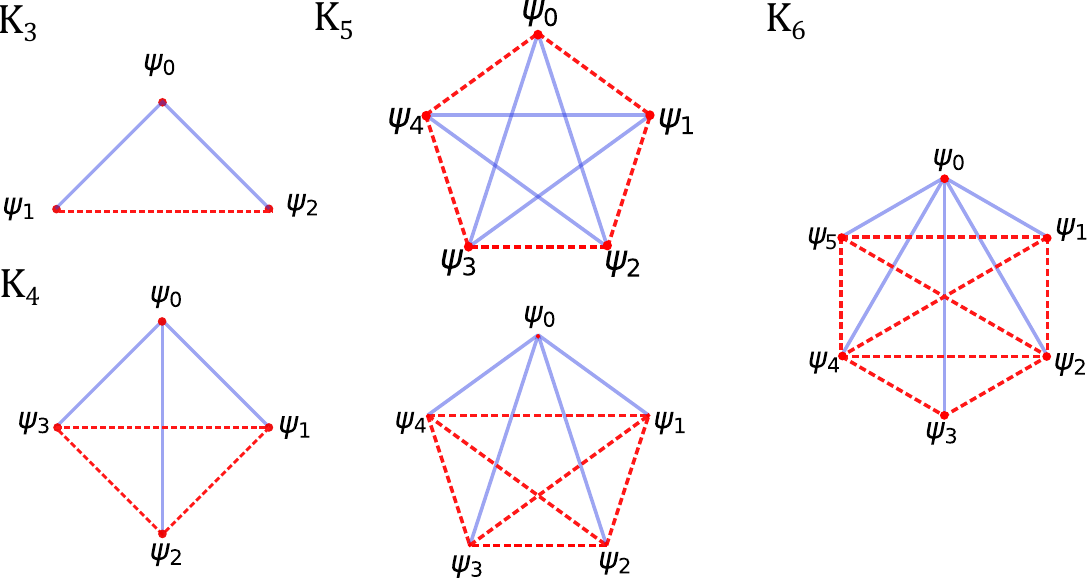}
    \caption{\textbf{Examples of the graphs $K_n$ representation of the inequalities, where $n$ are the number of vertex.} The vertex represents the quantum states and the edges the pairwise overlaps. The inequalities are expressed through the number of edges  weighted with $+1$ (blue lines) and $-1$ (red dotted lines). For example, the $K_5$ graph represents the $h_{MZI}$ in the configuration on the top or the $h_5$ in the bottom one.}
    \label{fig: graphs}
\end{figure}

Here we argue theoretically and numerically that the family of inequalities described in the main text on Eq.~(4), for $3<n<2^{12}$ correspond to witnesses of both coherence and dimension for sets of quantum states. From now on, we will count states representing nodes of  $K_n$ as $1,\dots,n$ instead of $0,\dots,n-1$. See also Fig. \ref{fig: graphs}. We first prove rigorously that pure state qubits cannot violate the $h_4$ inequality; showing that this inequality is both a dimension and coherence witness. 

\begin{theorem}
Let $\{\vert \psi_i \rangle \}_{i=1}^4$ be any 4-tuple of pure quantum states of dimension $2$. Then, letting $r_{i,j} \equiv \vert \langle \psi_i \vert \psi_j \rangle \vert^2$ it is impossible to violate the inequality $h_4(r) = r_{1,2}+r_{1,3}+r_{1,4}-r_{2,3}-r_{2,4}-r_{3,4} \leq 1$. 
\end{theorem}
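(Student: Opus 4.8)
The plan is to pass to the Bloch-sphere representation of pure qubit states. Writing the density matrix of $\vert\psi_i\rangle$ as $\tfrac12(\mathbbm{1} + \vec a_i\cdot\vec\sigma)$ with a unit vector $\vec a_i \in \mathbb{R}^3$, one has the exact identity $r_{i,j} = |\langle\psi_i|\psi_j\rangle|^2 = \mathrm{Tr}(\rho_i\rho_j) = \tfrac12(1 + \vec a_i\cdot\vec a_j)$. Substituting into $h_4$, the six constant $\tfrac12$ contributions cancel (three entering with $+$, three with $-$), leaving
\[
h_4(r) = \tfrac12\Big[\vec a_1\cdot(\vec a_2+\vec a_3+\vec a_4) - \big(\vec a_2\cdot\vec a_3+\vec a_2\cdot\vec a_4+\vec a_3\cdot\vec a_4\big)\Big].
\]

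Next I would exploit the asymmetry of the graph $K_4$ (vertex $1$ carries all the $+1$ edges) by setting $\vec b \equiv \vec a_2+\vec a_3+\vec a_4$. From $|\vec b|^2 = 3 + 2(\vec a_2\cdot\vec a_3+\vec a_2\cdot\vec a_4+\vec a_3\cdot\vec a_4)$ the three ``negative'' overlaps collapse into $|\vec b|^2$, and the expression becomes a function of only $\vec a_1\cdot\vec b$ and $|\vec b|^2$:
\[
h_4(r) = \tfrac12\,\vec a_1\cdot\vec b \;-\; \tfrac14\,|\vec b|^2 \;+\; \tfrac34 .
\]
By the Cauchy--Schwarz inequality and $|\vec a_1| = 1$ we have $\vec a_1\cdot\vec b \le |\vec b|$, so writing $x \equiv |\vec b| \in [0,3]$ gives $h_4(r) \le f(x)$ with $f(x) = -\tfrac14 x^2 + \tfrac12 x + \tfrac34$.

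Finally I would maximize the concave parabola $f$: $f'(x)= \tfrac12 - \tfrac12 x$ vanishes at $x=1$, where $f(1)=1$, which is therefore the global maximum. Hence $h_4(r)\le 1$ for every $4$-tuple of pure qubit states, so any violation requires either higher dimension or mixedness, i.e. $h_4$ is simultaneously a dimension and a coherence witness. I do not anticipate a real obstacle here: the only nontrivial idea is recognizing that folding the three negative edges into $|\vec b|^2$ reduces the whole problem to a one-variable quadratic optimization. As an optional addition, tracking the equality conditions ($x=1$ together with $\vec a_1 = \vec b$, i.e. $\vec a_1 = \vec a_2+\vec a_3+\vec a_4$ being itself a unit vector) pins down exactly which qubit configurations saturate the bound, e.g. $\vec a_2=\vec a_3=\vec a_1$, $\vec a_4=-\vec a_1$.
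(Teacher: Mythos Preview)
Your proof is correct and considerably more direct than the paper's. The paper first maximizes over $\rho_1$ using the operator-norm bound $\text{Tr}\big(\rho_1(\rho_2+\rho_3+\rho_4)\big) \le \Vert \rho_2+\rho_3+\rho_4\Vert$, then parametrizes the remaining three pure qubits by three angles $(\theta,\alpha,\varphi)$, computes the largest eigenvalue of the $2\times 2$ sum explicitly, and finally performs a change of variables to show the resulting three-variable function is nonpositive. Your Bloch-vector argument sidesteps all of this: by folding the three negative overlaps into $|\vec b|^2$ you reduce the problem to a single Cauchy--Schwarz step and a one-variable quadratic. Both approaches share the same opening move (optimizing out the distinguished vertex against the sum of the other three), but the paper works at the operator level while you work at the vector level, which is what makes your version so short. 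Your route also exposes the saturating configurations transparently, whereas the paper's substitution obscures them; conversely, the paper's operator-norm reduction generalizes to higher dimensions (it is reused in their SDP analysis of $h_n$), which your $\mathbb{R}^3$ Bloch picture does not.

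One minor wording caveat: your closing remark that ``any violation requires either higher dimension or mixedness'' is the literal contrapositive of what you proved, but is slightly misleading since mixed qubits cannot help either---the paper shows separately that pure states always achieve the maximum of $h_n$. This does not affect the proof itself.
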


\begin{proof}
    We want to maximize $h_4$ for qubit  states. The general form for this maximization procedure for any possible state is given by,
    \begin{align*}    \max_{\rho_1,\rho_2,\rho_3,\rho_4 \in \mathcal{D}(\mathcal{H})}\left(\text{Tr}(\rho_1\rho_2)+\text{Tr}(\rho_1\rho_3)+\text{Tr}(\rho_1\rho_4)-\text{Tr}(\rho_2\rho_3)-\text{Tr}(\rho_2\rho_4)-\text{Tr}(\rho_3\rho_4)\right),
    \end{align*}
where $\mathcal{D}(\mathcal{H})$ is the set of all density matrices over $\mathcal{H}\simeq \mathbb{C}^2$. In fact, if we consider only the maximization with respect to $\rho_1$ first, we see that $\rho_1$ appears only in the first 3 overlaps. This allows us to use the following relation,

\begin{equation}
    \text{Tr}(\rho_1\rho_2)+\text{Tr}(\rho_1\rho_3)+\text{Tr}(\rho_1\rho_4) = \text{Tr}\left(\rho_1\left( \sum_{i=2}^4 \rho_i \right)\right) \leq \left \Vert \sum_{i=2}^4 \rho_i \right \Vert
\end{equation}
    where in the first equality we have used linearity of the trace and for the second inequality, we define $\Vert \cdot \Vert$ is the operator norm,
    \begin{equation}
        \Vert A\Vert := \sup_{v \in \mathcal{H}, v\neq 0} \frac{\Vert A v \Vert_{\mathcal{H}} }{\Vert v \Vert_{\mathcal{H}} }
    \end{equation}
    where $\Vert \cdot \Vert_{\mathcal{H}}$ is the norm that makes $\mathcal{H}$ a Hilbert space, i.e., the norm arising from the inner product.   In particular, because the sum of positive semidefinite matrices is again positive semidefinite $\sum_i \rho_i$ is positive semidefinite, and the inequality is tight meaning that there is a state $\rho_1$ for any given $\rho_2,\rho_3,\rho_4$ such that the equality holds. Therefore we end up translating our problem into  

    \begin{equation}
        \max_{\rho_2,\rho_3,\rho_3\in \mathcal{D}(\mathcal{H})}\left(\Vert \rho_2+\rho_3+\rho_4\Vert-\text{Tr}(\rho_2\rho_3)-\text{Tr}(\rho_2\rho_4)-\text{Tr}(\rho_3\rho_4))\right),
    \end{equation}
and we proceed to study the quantity $\Vert \rho_2+\rho_3+\rho_4\Vert$. Due to the invariant nature of the overlap scenarios we might choose $\rho_2,\rho_3,\rho_4$ to be the density matrices related to the pure states $\vert 0 \rangle, \vert \theta \rangle, \vert \alpha,\varphi \rangle$ defined by,
\begin{equation}
    \vert \theta \rangle = \cos\theta\vert 0\rangle + \sin\theta \vert 1 \rangle, 
\end{equation}
and
\begin{equation}
    \vert \alpha, \varphi \rangle = \cos\alpha \vert 0 \rangle + e^{i\varphi}\sin\alpha \vert 1\rangle, 
\end{equation}
with no loss of generality. This implies that we will have a relation dependent only on $3$ parameters to investigate $\theta, \alpha \in [0,\pi/2]$ and $\varphi \in [0,2\pi]$.

Recall the following theorem,

\begin{theorem}
    Let $\mathcal{H}$ be a Hilbert space. Then, for any operator $A \in \mathcal{B}(\mathcal{H})$, where $\mathcal{B}(\mathcal{H})$ is the set of bounded operators with domain and codomain in the space, then we have that,
    \begin{equation}
        \Vert A \Vert = \sqrt{\max \sigma(A^*A)}
    \end{equation}
    where $\Vert \cdot \Vert$ is the operator norm, and $\sigma(X)$ is the spectrum of operator $X$.

\end{theorem}

The above result implies that to calculate the operator norm we simply need to find the maximal eigenvalue of the operator sum $\vert 0 \rangle \langle 0 \vert + \vert \theta \rangle \langle \theta \vert + \vert \alpha, \varphi \rangle \langle \alpha, \varphi \vert$. This sum describes the matrix,
\begin{equation}
    \left(\begin{matrix}
    1+\cos^2\theta+\cos^2\alpha & \frac{1}{2}(\sin(2\theta)+e^{i\varphi}\sin(2\alpha))\\
    \frac{1}{2}(\sin(2\theta)+e^{-i\varphi}\sin(2\alpha)) & \sin^2\theta + \sin^2\alpha 
    \end{matrix}\right)
\end{equation}

that has the following eigenvalues,



\begin{equation}\label{eq: maximum eigenvalue}
    \lambda_{\pm} = \frac{3}{2}\pm\frac{1}{2} \sqrt{2\sin (2 \alpha ) \sin (2 \theta
   )\cos(\varphi)+4 \cos (2 \alpha )
   \cos ^2(\theta )+2  \cos (2
   \theta )+3}
\end{equation}

The maximum eigenvalue is  given by  $\lambda_+$. After straightforward calculations, asking if $h_4(r)>1$ corresponds to asking if the following quantity can be larger than 0,

\begin{eqnarray*}
    &g(\theta,\alpha,\varphi) = \frac{3}{2}+\frac{1}{2} \sqrt{2\sin (2 \alpha ) \sin (2 \theta
   )\cos(\varphi)+4 \cos (2 \alpha )
   \cos ^2(\theta )+2  \cos (2
   \theta )+3} -1 - \cos^2(\theta)-\cos^2(\alpha)\\
   &-\cos^2(\theta)\cos^2(\alpha) -\sin^2(\theta)\sin^2(\alpha)-\frac{1}{2}\sin(2\theta)\sin(2\alpha)\cos(\varphi).
\end{eqnarray*}
The above function $g(\theta,\alpha,\varphi)$ corresponds to the optimization of general $\rho_1$ as done before for the functional $h_4(r)-1$ that we want to show to be less than or equal to zero. We may note that the above equation can be understood as a specific parametrization of the following function over $3$ variables, that we call $x,y,z$;

\begin{equation}
    f(x,y,z) = \frac{3}{2}+\frac{1}{2}\sqrt{2x+4y(2z-1)+2(2y-1)+3}-1-y-z-yz-(1-y)(1-z)-x
\end{equation}
where the specific parametrization arises from letting the variables become dependent from one-another in the specific form,
\begin{align*}
    x &= \frac{1}{2}\sin(2\theta)\sin(2\alpha)\cos(\varphi)\\
    y &= \cos^2(\theta)\\
    z &= \cos^2(\alpha).
\end{align*}

This implies that the maximum of the function $f(x,y,z)$, over the domain $[0,1]^3$, is, at most, larger than the maximum of the function $g(\theta,\alpha,\varphi) = f(x,y,z)\vert_{x=x(\theta,\alpha,\varphi), y = y(\theta), z = z(\alpha)} $. Hence, we may study the maximum value of $f$ directly to upper-bound $g$. We can re-write $f$ simplifying its form,

\begin{equation*}
    f(x,y,z) = -\frac{1}{2}+\frac{1}{2}\sqrt{2x+8yz+1}-2yz-x= \frac{1}{2}\sqrt{1+8yz+2x}-\frac{1}{2}(1+4yz+2x),
\end{equation*}

Now, we see that this function is always $\leq 0$ for any $x,y,z \in [0,1]$. We may re-write the function $f$ as,

\begin{equation}
    f(x,y,z) = \frac{1}{2}\sqrt{1+8yz+2x}-\frac{1}{2}(1+8yz+2x)+2yz,
\end{equation}
and now simply note that letting $c = 1+2x$ and $yz=\omega$ we have that the function $\frac{1}{2}(\sqrt{8\omega + c} - 8\omega - c)+2\omega$ is always less than, or equal to zero for $1\leq c \leq 3 $ and $0 \leq \omega \leq 1$. Equality holds when $c=1$ and $\omega = 0$. Therefore $f(x,y,z) \leq 0, \forall x,y,z \in [0,1]^3$ meaning that $g(\theta,\alpha,\varphi)\leq 0, \forall (\theta,\alpha,\varphi) \in [0,\pi/2]^2\times [0,2\pi)$ implying that $h_4(r)-1 \leq 0$ for any pure qubit realization of $r\equiv(r_{12},r_{13},r_{14},r_{23},r_{24},r_{34})$.

We conclude that pure qubit states cannot violate the $h_4$ inequality. 
\end{proof}

The above theorem shows that the $h_4$ inequality is a witness of both coherence and dimension. We now proceed to numerically investigate if the same property holds for the family of inequalities $h_n$ described in the main text. The results are shown in table \ref{tab:kn} below.

\begin{table}[ht]
    \centering
    \begin{tabular}{|*{10}{c|}}
    \cline{2-10}
    \multicolumn{1}{c|}{} & \multicolumn{9}{c|}{Dimension} \\
     \cline{2-10}
     \multicolumn{1}{c|}{} & \textbf{2} & \textbf{3} & \textbf{4} & \textbf{5} & \textbf{6} & \textbf{7} & \textbf{8} & \textbf{9} & \textbf{10} \\ 
     \hline
     $h_3$ & 1.250 & 1.250 & & & & & & &  \\ 
     $h_4$ & 1.000 & 1.333 & 1.333 & & & & & & \\ 
     $h_5$ & 0.250 & 1.000 & 1.375 & 1.375 & & & & & \\ 
    $h_6$ & -0.999 & 0.333 & 1.000 & 1.400 & 1.400 & & & & \\
    $h_7$ & -2.750 & -0.667 & 0.375 & 1.000 & 1.417 & 1.417 & & & \\
     $h_8$ & -5.000 & -2.000 & -0.500 & 0.400 & 1.000 & 1.429 & 1.417 & & \\
     $h_9$ & -7.750 & -3.667 & -1.625 & -0.400 & 0.417 & 1.000 & 1.428 & 1.429 & \\
     $h_{10}$ & -11.000 & -5.667 & -3.000 & -1.400 & -0.333 & 0.429 & 1.000 & 1.443 & 1.437\\ 
     \hline
\end{tabular}
 \caption{\textbf{Maximal values for $h_n$ inequality functionals from $K_n$ graphs.} Letting $h_n(r)$ be the functionals, bounding incoherent models for $h_n(r)\leq 1$, we numerically investigate the maximal values of $h_n(r)$ for $n=3,\dots,10$ the number of states  and $d=2,\dots,10$ the dimension of the Hilbert space of all the states.  The violations obtained for each inequality should not decrease as we increase the dimension, some examples of this in the table result from inefficiencies of the NMaximize \textit{Mathematica} function, e.g. in violations of $h_8$ for $d=7,8$.}
    \label{tab:kn}
\end{table}

This numerical investigation was performed by maximising the value $h_n(r)$ achieved for $r = (\vert \langle \psi_i \vert \psi_j \rangle \vert^2)_{i,j}$ generic pure state overlaps, using maximisation functions built in the \textit{Mathematica} language, specifically, NMaximize. We also tested if $(n-2)$-dimensional states could violate the inequalities $h_n(r)\leq 1$ by randomly generating sets of quantum states from the uniform, Haar measure. For $h_6(r)\leq 1$ we tested $10^{10}$ sets of 6 samples of ququarts and never violated the inequality. Both sampling sets of states and numerical tools finding local maxima for the functions $h_n(r)$ this property holds, i.e., no set of $n$ quantum states over an $(n-2)$-dimensional Hilbert space could violate the $h_n(r)\leq 1$ inequalities. 

\subsection{Probing the property of dimension witnessing with quadratic semidefinite programming}

{
Let us start by showing that maximizing the inequalities over pure states is sufficient, showing that our tests are robust, in the sense that the assumption of state purity is not needed for the dimension witnesses we obtained. Let us start by introducing some notation. We say that a given overlap tuple $r = (r_e)_e$, with respect to some graph $G$, has a quantum realization if there exists some Hilbert space $\mathcal{H}$ and some $\{\rho_i\}_i$ such that $r_e \equiv r_{ij} = \text{Tr}(\rho_i\rho_j)$. We denote a specific realization of $r$ as $r(\{\rho_i\}_i)$. Note that overlap tuples may have many realizations, such as $(1,1,1)$ in $G=C_3$, or none, such as $(1,0,1)$ in the same graph. For each realization $r(\{\rho_i\}_i)$, we denote $h_n(r(\{\rho_i\}_i))$ as the value that the functional $h_n$ has with respect to the overlap values $r$ with realization $r(\{\rho_i\}_i)$. Once more, a given value of $h_n(r)$ can be achieved with various quantum realizations of $r(\{\rho_i\}_i)$. 

With that, we can state the following result:

\begin{lemma}
        Let us consider graphs $K_n$, for some fixed value of $n \geq 3$. For any mixed state realization $r(\{\rho_i\}_i)$ from an ensemble $\{\rho_i\}_i$ there exists another pure state realization $r(\{\psi_i\}_i)$ from an ensemble $\{\psi_i\}_i$ such that $h_n(r(\{\rho_i\}_i)) \leq h_n(r(\{\psi_i\}_i))$.
\end{lemma}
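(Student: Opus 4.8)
The plan is to exploit that $h_n$, viewed as a function of the tuple $(\rho_1,\dots,\rho_n)$, is affine (linear plus a constant) in each individual argument when the others are held fixed, together with the fact that an affine functional on the compact convex set $\mathcal{D}(\mathcal{H})$ attains its maximum at an extreme point, and that the extreme points of $\mathcal{D}(\mathcal{H})$ are exactly the pure states. First I would record the explicit form of the functional for $K_n$, namely $h_n(r(\{\rho_i\}_i)) = \sum_{j=2}^n \text{Tr}(\rho_1\rho_j) - \sum_{2\leq i<j\leq n}\text{Tr}(\rho_i\rho_j)$, or more generally $\sum_e w_e\, \text{Tr}(\rho_{i(e)}\rho_{j(e)})$ with $w_e\in\{+1,-1\}$; the argument below will not use anything about the sign pattern.

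The key step is the following single-slot reduction. Fix an index $k$ and fix all $\rho_j$ with $j\neq k$. Then, by linearity of the trace, $\rho_k \mapsto h_n(r(\{\rho_i\}_i)) = \text{Tr}(\rho_k M_k) + c_k$, where $M_k := \sum_{j\neq k} w_{kj}\,\rho_j$ is a fixed self-adjoint operator and $c_k$ collects the terms not involving $\rho_k$. This map is affine on $\mathcal{D}(\mathcal{H})$, which (taking $\mathcal{H}$ finite-dimensional, the relevant case for a dimension witness) is convex and compact with extreme points precisely the rank-one projectors. Hence the maximum over $\rho_k$ is attained at some pure state $\vert\psi_k\rangle\langle\psi_k\vert$, so that replacing $\rho_k$ by $\vert\psi_k\rangle\langle\psi_k\vert$ leaves all other states untouched and does not decrease $h_n$. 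It is worth emphasizing here that this works regardless of the signs $w_e$, since we are maximizing an affine — not a convex or concave — functional, for which extreme points are optimal in either direction.

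Finally I would iterate: starting from $(\rho_1,\dots,\rho_n)$, apply the single-slot reduction with $k=1$ to obtain a pure $\vert\psi_1\rangle$; then with $k=2$, holding the now-pure $\vert\psi_1\rangle\langle\psi_1\vert$ and the still-mixed $\rho_3,\dots,\rho_n$ fixed; and so on through $k=n$. Each of the $n$ replacements keeps $h_n$ non-decreasing, and after the last one every state is pure, yielding an ensemble $\{\psi_i\}_i$ with $h_n(r(\{\rho_i\}_i)) \leq h_n(r(\{\psi_i\}_i))$. Since no step ever enlarges $\mathcal{H}$, the pure realization lives in the same (or smaller) dimension — which is exactly what makes the Lemma useful downstream: a violation by mixed states of dimension $d$ forces a violation by pure states of dimension at most $d$.

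\textbf{Main obstacle.} There is no substantial difficulty: the only points that need a line of care are (i) the affineness in each slot, which is immediate from bilinearity of $(\rho,\sigma)\mapsto\text{Tr}(\rho\sigma)$, and (ii) the legitimacy of the iterated replacement — which holds because the single-slot argument never uses whether the remaining states are pure or mixed, so the $n$ reductions compose cleanly. The only modeling remark to make explicit is the restriction to finite-dimensional $\mathcal{H}$ when invoking compactness and the characterization of extreme points.
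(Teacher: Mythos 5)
Your proof is correct and rests on essentially the same idea as the paper's: both exploit that $h_n$ is affine (multilinear) in each state argument separately. The paper carries the full convex decomposition of every $\rho_i$ into pure states and then selects the maximizing pure ensemble $s^\star$ from the resulting convex combination, whereas you replace each slot sequentially by the optimal extreme point of $\mathcal{D}(\mathcal{H})$; the two executions are interchangeable, and your explicit remark that the Hilbert space dimension never grows under the replacement is a useful point the paper leaves implicit.
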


    \begin{proof}
        Let any set of quantum states $\{\rho_i\}_i$ be associated with nodes on the event graph $G$. Each $\rho_i$ is a convex combination of pure states $\{\psi^{(i)}_{\omega_i}\}_{\omega_i \in \Omega_i}$ for some set $\Omega_i$ of pure states. Note that $h_n(r)$, for all fixed $n$, are by construction multilinear functionals satisfying,
    \begin{equation}\label{eq: multilinearity}
        \forall i,\rho_i = \sum_\omega \lambda_\omega^{(i)}\vert \psi_\omega^{(i)} \rangle \langle \psi_\omega^{(i)} \vert \implies
        h_n(r(\{\rho_i\}_i)) = \sum_{\omega_1,\dots,\omega_n} \lambda_{\omega_1}^{(1)}\dots \lambda_{\omega_n}^{(n)}h_n(r(\{\psi_{\omega_i}^{(i)}\}_i)). 
    \end{equation}
    To conclude the above, one needs to introduce some redundant values of $1 = \sum_{\omega_i}\lambda_{\omega_i}^{(i)}$. For concreteness, take $\{\rho_1,\psi_2,\psi_3\}$ and the inequality functional $h_3(r)$, 
    \begin{align*}
        &h_3(r(\{\rho_1,\psi_2,\psi_3\})) = \sum_{\omega_1}\lambda_{\omega_1}^{(1)} \vert \langle \psi_{\omega_1}^{(1)}| \psi_2 \rangle \vert^2+\sum_{\omega_1}\lambda_{\omega_1}^{(1)} \vert \langle \psi_{\omega_1}^{(1)}| \psi_3 \rangle \vert^2- \vert\langle \psi_2|\psi_3\rangle \vert^2\\
        &= \sum_{\omega_1}\lambda_{\omega_1}^{(1)}(\vert \langle \psi_{\omega_1}^{(1)}| \psi_2 \rangle \vert^2+ \vert \langle \psi_{\omega_1}^{(1)}| \psi_3 \rangle \vert^2-\vert\langle \psi_2|\psi_3\rangle \vert^2)\\
        &= \sum_{\omega_1}\lambda_{\omega_1}^{(1)} h_3(r(\{\psi_{\omega_1}^1,\psi_2,\psi_3\})),
    \end{align*}
    where we assumed that $\rho_1 = \sum_{\omega_1 \in \Omega_1}\lambda_{\omega_1}^{(1)}\vert \psi_{\omega_1}^{(1)} \rangle \langle \psi_{\omega_1}^{(1)} \vert $. One can then note that the same holds for all $n$ and for generic mixed ensembles $\{\rho_i\}_i$ when writing $h_n(r(\{\rho_i\}_i))$, implying Eq.~\eqref{eq: multilinearity}. One way to see this is simply iterating, for each element in the ensemble, the procedure exemplified by the $h_3$ case;
    \begin{align*}
        h_n(r(\{\rho_1,\rho_2,\dots,\rho_n\})) = \sum_{\omega_1}\lambda_{\omega_1}^{(1)}h_n(r(\{\psi_{\omega_1}^{(1)}, \rho_2, \dots, \rho_n\})) = \sum_{\omega_1,\omega_2}\lambda_{\omega_1}^{(1)}\lambda_{\omega_2}^{(2)}h_n(r(\{\psi_{\omega_1}^{(1)}, \psi_{\omega_2}^{(2)}, \dots, \rho_n\})) = \dots \\= \sum_{\omega_1,\dots,\omega_n}\lambda_{\omega_1}^{(1)}\dots \lambda_{\omega_n}^{(1)}h_n(r(\{\psi_{\omega_1}^{(1)},\psi_{\omega_2}^{(2)},\dots,\psi_{\omega_n}^{(n)}\})).
    \end{align*}
    We can collectively write $s = (\omega_1,\dots,\omega_m)$ and define $q_s = \lambda_{\omega_1}^{(1)}\dots \lambda_{\omega_m}^{(m)}$. Because each set of weights $\{\lambda_{\omega_i}^{(i)}\}_{\omega_i \in \Omega_i}$ corresponds to convex weights, i.e., $\sum_{\omega_i}\lambda_{\omega_i}^{(i)} =1 $ with $0\leq \lambda_{\omega_i} \leq 1$ we get that $\{q_s\}_s$ is also a set of convex weights. With this simplified notation we have that Eq.~\eqref{eq: multilinearity} becomes 
    $$h_n(r(\{\rho_i\}_i)) = \sum_sq_s h_n(r(\{\psi^{(i)}_s\}_i))$$ with $\sum_sq_s=1$ and $0\leq q_s\leq 1$. In words, the linear-functional $h_n$ realized by overlaps between general quantum states can be written as the convex combination of the same functional realized by overlaps between pure states. Choosing now a particular ensemble $s^\star$ such that $\forall s, h_n(r(\{\psi_{s^\star}^{(i)}\}_i)) \geq h_n(r(\{\psi_s^{(i)}\}_i))$ we see that $$h_n(r(\{\rho_i\}_i)) = \sum_sq_s h_n(r(\{\psi^{(i)}_s\}_i))\leq \sum_sq_s h_n(r(\{\psi_{s^\star}^{(i)}\}_i)).$$
    Since $\sum_sq_s=1$ we have that $h_n(r(\{\rho_i\}_i)) \leq h_n(r(\{\psi_{s^\star}^{(i)}\}_i))$.
    \end{proof}
}

{Our goal now is to} push the numerical results of Table \ref{tab:kn} for instances of $K_n$ for larger $n$ using techniques of semidefinite programming in order to see if there is the possibility of probing the dimension witness nature of the family of inequalities for larger complete graphs. The major drawback with respect to the maximization over parameters, as we will see, is that we cannot \textit{read} the states that realize the given violation from the final result. In order to show that $h_n$ are dimension witnesses for higher values of $n$, we show how to re-write the expression defining this family of inequalities into an SDP. We start with two simple lemmas, the first taken from Ref.~\cite{brunner2013dimension}. 

\begin{lemma}\label{lemma: hn+=trX2}
Let $h_n^+(r)$  be the \textit{sum} of all overlaps for a graph $K_n$. Assume also that the quantum realization of the graph is given by the set of states $\{\vert \psi_i \rangle\}_{i=1}^n$.    Then, we have that
\begin{equation}
    h_n^+(r) = \frac{n^2}{2}\mathrm{Tr}(X^2)- \frac{n}{2}
\end{equation}
where $X = \frac{1}{n}\sum_{i=1}^n \vert \psi_i \rangle \langle \psi_i \vert$. 
\end{lemma}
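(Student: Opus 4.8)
The plan is to expand $\mathrm{Tr}(X^2)$ directly using the definition $X = \frac{1}{n}\sum_{i=1}^n \vert \psi_i \rangle \langle \psi_i \vert$ and collect terms. First I would write
\[
\mathrm{Tr}(X^2) = \mathrm{Tr}\!\left(\frac{1}{n^2}\sum_{i=1}^n \sum_{j=1}^n \vert \psi_i\rangle\langle\psi_i\vert\psi_j\rangle\langle\psi_j\vert\right) = \frac{1}{n^2}\sum_{i=1}^n\sum_{j=1}^n \langle\psi_i\vert\psi_j\rangle\langle\psi_j\vert\psi_i\rangle = \frac{1}{n^2}\sum_{i,j=1}^n \vert\langle\psi_i\vert\psi_j\rangle\vert^2,
\]
using linearity and cyclicity of the trace. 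Then I would split the double sum into its diagonal part $i = j$, which contributes $\sum_{i=1}^n \vert\langle\psi_i\vert\psi_i\rangle\vert^2 = n$ since the states are normalized, and its off-diagonal part $i \neq j$. By symmetry $\vert\langle\psi_i\vert\psi_j\rangle\vert^2 = \vert\langle\psi_j\vert\psi_i\vert\rangle\vert^2$, so the off-diagonal part equals $2\sum_{i<j}\vert\langle\psi_i\vert\psi_j\rangle\vert^2 = 2 h_n^+(r)$, since $h_n^+(r)$ is by definition the sum of all $\binom{n}{2}$ pairwise overlaps $r_{ij} = \vert\langle\psi_i\vert\psi_j\rangle\vert^2$ of the complete graph $K_n$.

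Putting these together gives $\mathrm{Tr}(X^2) = \frac{1}{n^2}\big(n + 2 h_n^+(r)\big)$, and solving for $h_n^+(r)$ yields $h_n^+(r) = \frac{n^2}{2}\mathrm{Tr}(X^2) - \frac{n}{2}$, which is the claimed identity. There is essentially no obstacle here: the only things to be careful about are (i) the factor-of-two bookkeeping between the ordered double sum $\sum_{i\neq j}$ and the unordered edge sum $\sum_{i<j}$ defining $h_n^+$, and (ii) recording the use of normalization $\langle\psi_i\vert\psi_i\rangle = 1$ for the diagonal contribution. This is a purely computational lemma serving to set up the SDP reformulation, so the proof is short and direct.
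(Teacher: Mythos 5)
Your proof is correct and follows essentially the same route as the paper: expand $\mathrm{Tr}(X^2)$ as a double sum of overlaps, separate the diagonal contribution $n$ (from normalization) from the off-diagonal contribution $2\sum_{i<j}r_{i,j}=2h_n^+(r)$, and invert. No gaps.
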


\begin{proof}
    Noting that $X^2$ is given by, 
\begin{equation*}
    X^2 = \frac{1}{n^2} \sum_{i,j=1}^n \vert \psi_i \rangle \langle \psi_i \vert \psi_j \rangle \langle \psi_j \vert  = \frac{1}{n^2}\sum_{i\neq j}\vert \psi_i \rangle \langle \psi_i \vert \psi_j \rangle \langle \psi_j \vert  + \frac{1}{n^2}\sum_{i=j}\vert \psi_i \rangle \langle \psi_i \vert 
\end{equation*}
\begin{equation*}
    \text{Tr}(X^2) = \frac{1}{n^2}\sum_{i\neq j}r_{i,j}+ \frac{1}{n^2}\sum_{i=1}^n \vert \langle \psi_i \vert \psi_i \rangle \vert^2 = \frac{1}{n^2}\sum_{i\neq j}r_{i,j} + \frac{1}{n} = \frac{2}{n^2}\sum_{i<j}r_{i,j}+\frac{1}{n}
\end{equation*}
where we have used the fact that $r_{i,j} = r_{j,i}$. Now, note that in fact $h_n^+(r) = \sum_{i < j}r_{i,j}$. From this we have our relation by inverting the last equation,
\begin{equation*}
    \text{Tr}(X^2) = \frac{2}{n^2}h_n^+(r) + \frac{1}{n} \implies h_n^+(r) = \frac{n^2}{2}\left(\text{Tr}(X^2) - \frac{1}{n}\right) = \frac{n^2}{2}\text{Tr}(X^2)-\frac{n}{2}
\end{equation*}
as we wanted to show.
\end{proof}

We may also note that, due to the recurrence description of $h_n(r)$ it is possible to use only $h_n^+(r)$ to describe the inequality.

\begin{lemma}\label{lemma: hn = hn+-2hn+}
 The inequality functional $h_n(r)$ is equal to $h_n(r) = h_n^+(r)-2h_{n-1}^+(r)$.
\end{lemma}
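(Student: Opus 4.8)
The plan is to reduce the identity to a purely combinatorial statement about the edge set of $K_n$, and then verify it by a one‑line splitting of that edge set. First I would recall from the main text (and Fig.~\ref{fig: graphs}) the explicit weighting that defines $h_n(r)$: among the $\binom{n}{2}$ edges of $K_n$, the ones incident to a single distinguished vertex — say vertex $1$ — carry weight $+1$, while the remaining $\binom{n-1}{2}$ edges, which are precisely the edges of the complete subgraph on vertices $\{2,\dots,n\}$, carry weight $-1$. In formulas,
\begin{equation*}
  h_n(r) \;=\; \sum_{j=2}^{n} r_{1,j} \;-\; \sum_{2\le i<j\le n} r_{i,j}.
\end{equation*}
Likewise, by the definition used in Lemma~\ref{lemma: hn+=trX2}, $h_m^+(r)$ is the sum of all overlaps of a complete graph on $m$ vertices, so $h_n^+(r)=\sum_{1\le i<j\le n} r_{i,j}$ and, applying the same definition to the complete subgraph on $\{2,\dots,n\}$, $h_{n-1}^+(r)=\sum_{2\le i<j\le n} r_{i,j}$.

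With this in hand the proof is immediate: split the edges of $K_n$ into the star centered at vertex $1$ and the complete graph on the other $n-1$ vertices,
\begin{equation*}
  h_n^+(r) \;=\; \sum_{j=2}^{n} r_{1,j} \;+\; \sum_{2\le i<j\le n} r_{i,j} \;=\; \sum_{j=2}^{n} r_{1,j} \;+\; h_{n-1}^+(r),
\end{equation*}
so that
\begin{equation*}
  h_n^+(r) - 2\,h_{n-1}^+(r) \;=\; \sum_{j=2}^{n} r_{1,j} - h_{n-1}^+(r) \;=\; \sum_{j=2}^{n} r_{1,j} - \sum_{2\le i<j\le n} r_{i,j} \;=\; h_n(r),
\end{equation*}
which is the claim. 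If one prefers to take the recursive construction of $h_n$ as primitive rather than the star/clique picture, the identical identity follows by induction on $n$ using the same splitting as the inductive step, with base case $h_3(r)=r_{1,2}+r_{1,3}-r_{2,3}=h_3^+(r)-2h_2^+(r)$ (where $h_2^+(r)=r_{2,3}$).

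I do not expect a genuine obstacle here; the only point requiring care is bookkeeping of conventions — namely confirming that the $+1$‑weighted edges of $h_n$ form a star and that "$h_{n-1}^+$" is evaluated on the $n-1$ vertices not equal to that star's center. The permutation invariance of the overlap scenario makes the choice of distinguished vertex immaterial, so once the star/clique decomposition of $h_n$ is pinned down, the statement is just a rearrangement of a finite sum.
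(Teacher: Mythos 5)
Your proof is correct and is essentially the same argument as the paper's: both decompose the edge set of $K_n$ into the star at a distinguished vertex plus the clique on the remaining $n-1$ vertices, and identify the star sum with $h_n^+(r)-h_{n-1}^+(r)$. The only difference is that you center the star at vertex $1$ while the paper centers it at vertex $n$ and then remarks that this is a relabeling — a distinction you already address via permutation invariance.
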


\begin{proof}
As we have that 
\begin{equation*}
    \sum_{i=1}^n r_{i,n} = h_n^+(r)-h_{n-1}^+(r)
\end{equation*}
We can see that,
\begin{equation*}
    h_n(r) = \sum_{i=1}^nr_{i,n} - h_{n-1}^+(r) = h_n^+(r)-h_{n-1}^+(r)- h_{n-1}^+(r)
\end{equation*}
as we wanted to show. 
\end{proof}

For the sake of clarity, let's consider an example of the above description for the lemma. Recalling that $h_3^+(r) = r_{1,2}+r_{1,3}+r_{2,3}$ then, 
\begin{align*}
    h_4^+(r)-2h_3^+(r) &= r_{1,2}+r_{1,3}+r_{1,4}+r_{2,3}+r_{2,4}+r_{3,4}-2(r_{1,2}+r_{1,3}+r_{2,3})\\
    &=r_{1,4}+r_{2,4}+r_{3,4}-r_{1,2}-r_{1,3}-r_{2,3} = h_4(r)
\end{align*}

This is equivalent to $h_4(r)$ up to the relabel $4\to 1$ and $1 \to 4$. We can use the lemmas above to write $h_n(r)$ only in terms of $X$, which eventually lead a semidefinite programming description of the problem of maximizing $h_n(r)$ for finding quantum violations.

\begin{theorem}
    The quantum realizations of the inequality functional $h_n(r)$ can be expressed as a quadratic semidefinite program (SDP) optimized over  quantum states $X$. The resulting optimized values provide upper bounds for the values of $h_n(r)$ that can be reached with quantum theory for { any $d$.}. 
\end{theorem}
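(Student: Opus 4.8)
The plan is to use Lemmas~\ref{lemma: hn+=trX2} and~\ref{lemma: hn = hn+-2hn+} to rewrite $\max_r h_n(r)$, taken over all quantum realizations, as an optimization whose objective is a quadratic form in one or two density matrices and whose constraints are linear matrix inequalities. Concretely, put $Z = \sum_{i=1}^{n-1}\vert\psi_i\rangle\langle\psi_i\vert$ and $P = \vert\psi_n\rangle\langle\psi_n\vert$, so that $X_{n-1} = Z/(n-1)$ is a state; then Lemma~\ref{lemma: hn+=trX2} gives $h_{n-1}^+(r) = \tfrac12\mathrm{Tr}(Z^2) - \tfrac{n-1}{2}$ and $h_n^+(r) = \tfrac12\mathrm{Tr}((Z+P)^2) - \tfrac n2$, and Lemma~\ref{lemma: hn = hn+-2hn+} turns $h_n(r) = h_n^+(r) - 2h_{n-1}^+(r)$ into the purely quadratic expression $h_n(r) = -\tfrac12\mathrm{Tr}(Z^2) + \mathrm{Tr}(ZP) + \tfrac{n-1}{2}$. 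Completing the square (using $\mathrm{Tr}(P^2)=1$) gives the compact form $h_n(r) = \tfrac n2 - \tfrac12\Vert Z - P\Vert_F^2$, so that maximizing $h_n$ is the same as minimizing the Hilbert--Schmidt distance between $Z$ and the single projector $P$. (Alternatively, keeping both averaged states, one gets $h_n(r) = \tfrac{n^2}{2}\mathrm{Tr}(X_n^2) - (n-1)^2\mathrm{Tr}(X_{n-1}^2) + \tfrac{n-2}{2}$ with the linear link $P := nX_n - (n-1)X_{n-1} \succeq 0$.)

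The next step is to relax the feasibility conditions that are \emph{not} expressible with positive-semidefinite and affine constraints. A genuine realization requires $P$ to be rank one and $Z$ to be a sum of exactly $n-1$ rank-one projectors; I would simply drop these and impose only $Z \succeq 0$, $\mathrm{Tr}(Z) = n-1$, $P \succeq 0$, $\mathrm{Tr}(P) = 1$ (the operator-norm bounds $Z \preceq (n-1)\mathbb 1$, $P \preceq \mathbb 1$ are then automatic). Since this enlarges the feasible set, the minimum of $\Vert Z-P\Vert_F^2$ can only decrease, so $\tfrac n2 - \tfrac12(\text{relaxed minimum})$ is a valid \emph{upper bound} on the quantum value of $h_n(r)$; in the two-state form the argument is identical, with the maximum of the relaxed quadratic being an upper bound on the true one. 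For dimension-independence: $n$ pure states always span a subspace of dimension at most $n$, so every realization in an arbitrary $\mathbb C^d$ restricts, without changing any overlap, to a feasible point of the program with $n\times n$ matrix variables; since none of the relaxed constraints mentions $d$, the resulting bound applies uniformly over all $d$, which is exactly the claim.

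It then only remains to package the above as a semidefinite program: the objective is a quadratic form in the matrix variables and all constraints are linear matrix inequalities together with trace equalities, so this is a (quadratic) SDP. In the distance formulation the quadratic objective is convex, so it is a convex QSDP and can, if one prefers a standard-form SDP, be linearized by a Schur-complement epigraph constraint bounding $\Vert Z-P\Vert_F^2$; in the two-state formulation the $+\mathrm{Tr}(X_n^2)$ term makes the maximization non-convex and it is left as a genuine quadratic SDP. The main obstacle I anticipate is the second step rather than the algebra: one has to be sure that the spectrahedral relaxation of ``$Z$ is a sum of $n-1$ projectors, $P$ is rank one'' is not only valid as an upper bound --- which is automatic --- but also tight enough to reproduce the exact quantum maxima of Table~\ref{tab:kn}, so that the SDP is actually informative for the larger graphs $K_n$; the only other care needed is tracking the additive constants ($\tfrac n2$, $\tfrac{n-2}{2}$, $\tfrac{n-1}{2}$) correctly through the two lemmas.
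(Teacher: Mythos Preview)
Your proposal is correct and follows essentially the same route as the paper: combine Lemmas~\ref{lemma: hn+=trX2} and~\ref{lemma: hn = hn+-2hn+} to express $h_n$ quadratically in the averaged states, then relax the rank constraints to obtain a quadratic SDP whose optimum upper-bounds the quantum value; the dimension argument (restrict to the span of the $n$ states) is also the same.

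The one substantive difference is that the paper exploits the projective-unitary invariance of $h_n$ to fix $\vert\psi_n\rangle = \vert 0\rangle$ from the outset, so that $P=\vert 0\rangle\langle 0\vert$ becomes a \emph{fixed} data matrix and the SDP has a single variable $X_\star$ (your $Z/(n-1)$). You instead keep $P$ as a second matrix variable and relax it to a generic density matrix. This does not loosen the bound: for fixed $Z$ the linear term $\mathrm{Tr}(ZP)$ is maximized by the rank-one projector onto the top eigenvector of $Z$, so the optimal $P$ in your relaxation is automatically pure and your two-variable program coincides with the paper's one-variable program. Your completed-square identity $h_n(r)=\tfrac{n}{2}-\tfrac12\Vert Z-P\Vert_F^2$ is a clean geometric reformulation that the paper does not state; conversely, the paper's gauge-fixing makes the concavity of the objective in the single remaining variable immediate and slightly reduces the numerical size of the problem.
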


\begin{proof}
    From lemma \ref{lemma: hn = hn+-2hn+} and lemma \ref{lemma: hn+=trX2} together we can write $h_n(r)$ as,
    \begin{equation}
        h_n(r) = \left(\frac{n^2}{2}\text{Tr}(X^2)-\frac{n}{2}\right) - 2\left(\frac{(n-1)^2}{2}\text{Tr}(X_\star^2)-\frac{n-1}{2}\right)
    \end{equation}
where we have that $X_\star = \frac{1}{n-1}\sum_{i=1}^{n-1}\vert \psi_i \rangle \langle \psi_i \vert$. In other words, the second term of $h_n$ lacks $\vert \psi_n\rangle \langle \psi_n \vert$. It is clear that we may write $X$ in terms of $X_\star$ provided that we fix $\vert \psi_n \rangle$. As $h_n(r)$ is a projective-unitarily invariant functional of a set of states we can unitarily transform any set of states such that  $\vert \psi_n\rangle = \vert 0 \rangle \in \mathcal{H} \simeq \mathbb{C}^d$ is the reference $d$-dimensional canonical basis vector of $\mathbb{C}^d$. In this case, we have that
\begin{equation*}
    X = \frac{1}{n}\sum_{i=1}^n\vert \psi_i \rangle \langle \psi_i \vert = \frac{1}{n}\left(\sum_{i=1}^{n-1}\vert \psi_i\rangle\langle\psi_i \vert +\vert \psi_n \rangle \langle \psi_n \vert \right) = \frac{1}{n}((n-1)X_\star+\vert 0\rangle\langle 0 \vert )
\end{equation*}
which implies that,
\begin{equation*}
    X^2 = \frac{1}{n^2}((n-1)X_\star+\vert 0\rangle\langle 0 \vert )((n-1)X_\star+\vert 0\rangle\langle 0 \vert ) = \frac{1}{n^2}\left( (n-1)^2X_\star^2 + (n-1)X_\star \vert 0\rangle\langle 0\vert + (n-1)\vert 0\rangle\langle 0 \vert X_\star + \vert 0\rangle\langle 0\vert   \right)
\end{equation*}
from which we infer that,
\begin{equation*}
    \text{Tr}(X^2) = \frac{1}{n^2}\left((n-1)^2\text{Tr}(X_\star^2) + 2(n-1)\langle 0 \vert X_\star \vert 0\rangle + 1\right).
\end{equation*}
This last expression allows us to write $h_n(r)$ in terms of $X_\star$ and $\vert 0\rangle \langle 0 \vert$ only.

\begin{align*}
    h_n(r) &= \frac{n^2}{2}\left( \frac{1}{n^2}\left((n-1)^2\text{Tr}(X_\star^2) + 2(n-1)\langle 0 \vert X_\star \vert 0\rangle + 1\right) \right) - \frac{n}{2}-2\left(\frac{(n-1)^2}{2}\text{Tr}(X_\star^2)-\frac{n-1}{2}\right)\\
    &=\frac{(n-1)^2}{2}\text{Tr}(X_\star^2)+(n-1)\text{Tr}(\vert 0\rangle\langle 0\vert X_\star)+\frac{1}{2}-\frac{n}{2}-(n-1)^2\text{Tr}(X_\star^2)+n-1\\
    &=-\frac{(n-1)^2}{2}\text{Tr}(X_\star^2)+(n-1)\text{Tr}(\vert 0\rangle\langle 0\vert X_\star)+\frac{n-1}{2}
\end{align*}

If we fix $C := \vert 0\rangle\langle 0 \vert$ for $\vert 0\rangle \in \mathbb{C}^d$ the above form of $h_n$ is quadratic in $X_\star$. We can then write the following quadratic SDP problem,
\begin{align}
    \max_{X_\star \in \mathbb{C}_{d,d}} & A_n \langle X_\star,X_\star \rangle + B_n \langle C,X_\star \rangle + C_n \\
    \text{subject to } & X_\star \geq 0\\
    &\text{Tr}(X_\star)=1
\end{align}
for any $2 \leq d \leq n-1 $ as $X_\star$ lives in the span of $\{\vert \psi_i \rangle\langle \psi_i \vert \}_{i=1}^n$. Above we simply set, $A_n := -(n-1)^2/2$, $B_n := (n-1)$ and $C_n=(n-1)/2$. We also use the common description of the trace inner product $\text{Tr}(A^\dagger B) = \langle A,B\rangle$.  
\end{proof}

This theorem immediately shows that we may find upper bounds, for any dimension {$d$}, on the maximum values of $h_n(r)$ from a set of states $\{\vert \psi_i \rangle\}_{i=1}^n$, as the set of all matrices of the form $X_\star = \frac{1}{n-1}\sum_{i=1}^{n-1}\vert \psi_i \rangle \langle \psi_i \vert $ is only a subset of all possible density matrices. The theorem also shows that it is unnecessary to search for dimensions $d>n-1$ as the problem is defined for matrices $X_\star$ that are inside the span of $\{\vert \psi_i \rangle \}_{i=1}^{n-1}$, which is at most $(n-1)$ dimensional. 

The interesting aspect of transposing the problem into an SDP is that the problem becomes computationally efficient, {polynomial in the memory and computational resources~\cite{tavakoli2023semidefinite,li2018quadraticSDP}}. In figure \ref{fig: sdp} we implement the above problem and find solutions for $n$ up to $800$. Such a gain allows us to study upper bounds of quantum violations for all integers  $4 \leq n \leq 800$. The main idea for these simulations is to provide strong numerical evidence that the family $K_n$ is a dimension witness for all possible values of $n\geq 4$. As {this} might be interesting for when the Hilbert space dimension grows exponentially, we have also considered the case of dimension $d=2^q$ where $q$ represents number of qubits. We show numerically that $2^q-2$ dimensional states cannot violate inequalities $h_{2^q}(r)\leq 1$, up to $q=12$. The results are shown in Fig.~\ref{fig: sdp qubits}. { As our quadratic SDP is well behaved, we have used interior point methods, using solvers available in $\mathrm{CVXPY}$. 

We used the Splitting Cone Solver (SCS) and
Python Software for Convex Optimization (CVXOPT), and both converged to the same values up to numerically instabilities. Results presented are those for SCS only. We have used these solvers since they are open-source. While not devoted to quadratic SDP optimization, they have converged fast for all points considered. As these are convex optimization tools applied to a quadratic optimization problem, we are not guaranteed that the results are tight; still we managed to find (local) optimal values that agreed with other methods (\textrm{NMaximize} from \textit{Mathematica}) and our experimental implementations, providing further evidence that our dimension witnesses hold significantly beyond the regimes we have tested experimentally. The SDP code may be found in Ref.~\cite{giordani2023code}.}

\begin{figure}[t]
    \centering
    \includegraphics[width=\textwidth]{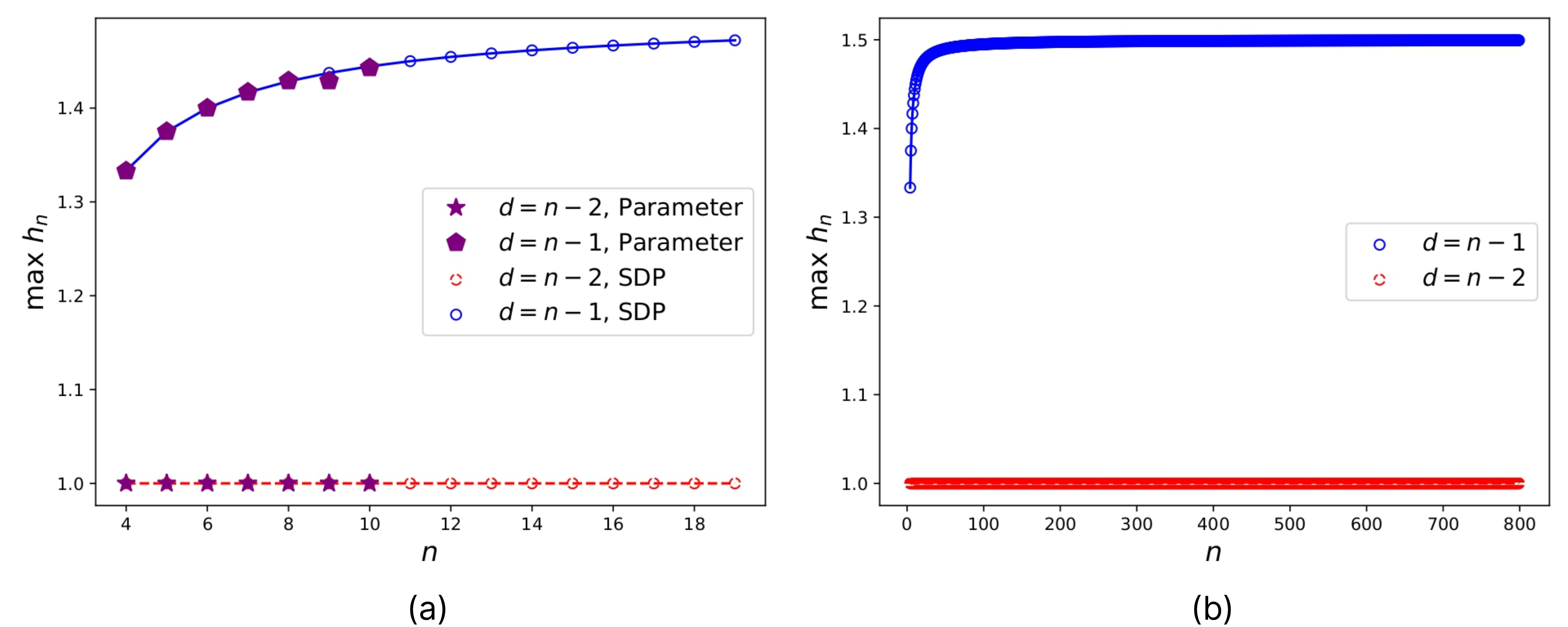}
    \caption{\textbf{Numerically testing dimension and coherence witness using semidefinite programming.} (Color online) We use a quadratic semidefinite programm (SDP) to generate (tight) upper bounds on the maximum value of $h_n(r)$ for any value $n\geq 4$, for any dimension $d\leq n-1$. Plot (a) shows the result for $n$ up to 19. We compare the results obtained using SDP with the results obtained by maximizing $h_n$ over all parameters of all quantum states present in a given complete graph $K_n$. Purple stars and pentagons corresponds to points from table \ref{tab:kn}, and we use those to benchmark the SDP results. Cases $n=4,5,6$ were also investigated experimentally. We see that for all those values of $n$ the inequality $h_n(r)\leq 1$ is, at the same time, bounding coherence and dimension. Plot (b) we use the fact that the quadratic SDP can be used to estimate the validity of the conjecture that only states with dimension $d\geq n-1$ can violate the inequality $h_n(r)\leq 1$ for $4\leq n\leq 800$. We see that the maximum quantum violation as $n\to \infty$ is given by $1/2$ and that the upper bound in the value for $d=n-2$ is always 1. In both plots (a) and (b) open circles correspond to maximum values of $h_n(r)$ found using the SDP. Blue full lines mark violations of the $K_n$ inequality, and have $d=n-1$ while red dashed lines correspond to points that do not violate the inequality and have $d=n-2$.}
    \label{fig: sdp}
\end{figure}

\begin{figure}[t]
    \centering
    \includegraphics[width=\textwidth]{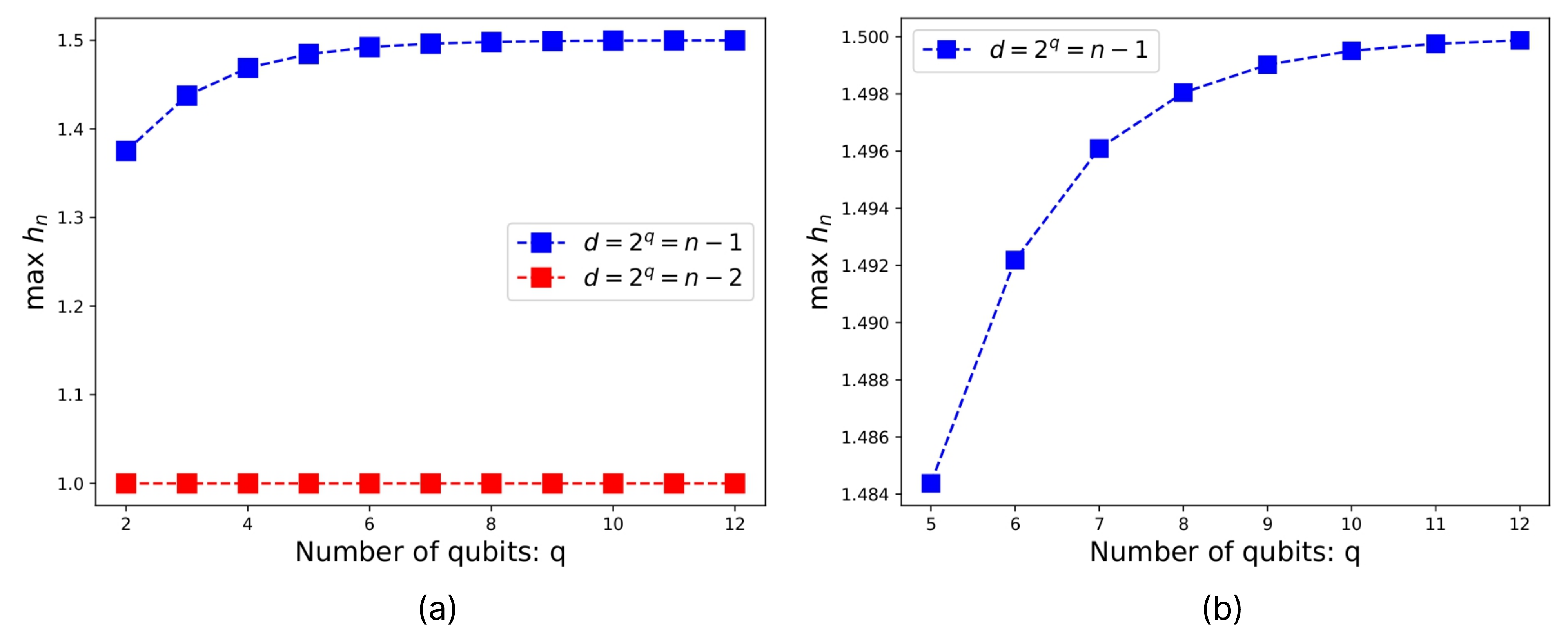}
    \caption{\textbf{Numerically testing dimension and coherence witness using semidefinite programming for qubits}. We consider the same simulations from Fig.~\ref{fig: sdp} but using qubits, and therefore going to exponentially large Hilbert space dimensions. We see that $h_n$ remains a dimension witness for $d$ up to $2^{12}$. In (a) we see that there is a saturation of $1.5$, and in (b) we highlight the changes with a higher precision.}
    \label{fig: sdp qubits}
\end{figure}

{
\subsection{Assumptions for the task of dimension witnessing with overlap inequalities}

Certification tasks can depend on the underlying assumptions made on the data generated by quantum devices, and on some promises on the experimental implementation or on the device itself. For a review on different certification protocols, and the most well-known assumptions we refer the reader to Ref.~\cite{eisert2020certification}. 

For the certification of coherent qudits we rely on the violation of overlap-based inequalities. The assumptions made are as follows:

\begin{enumerate}
    \item We assume that the data corresponds to  two-state overlaps. In other words, we assume that we are given values $r_{ij} = \text{Tr}(\rho_i\rho_j)$, for some $\rho_i,\rho_j$ with respect to some $\mathcal{H}$. 
    \item We assume that states are prepared independently and identically distributed, in order to generate the two-state overlap data. 
\end{enumerate}

These are the only assumptions made; drawing a parallel with a black-box certification task, we use only the statistics generated by the device -- under a promise over the possible measurements, i.e., those generating two-state overlap data -- to witness a lower bound on the dimensionality of the Hilbert space that has generated the data. We \textit{do not} assume: (i) upper bound in Hilbert space dimension, (ii) restriction on possible states being prepared, (iii) a specific system structure (multiqubits, multiqudits), (iv) a specific hardware implementation, (v) purity of the states. Our test is not device-independent. We also do not make the claim that our test is semi-device independent, leaving such a discussion for future work. 

From the first point above, it is assumed implicitly that the data satisfy the symmetry $r_{ij} = r_{ji}$. While this is certainly not going to hold for a generic prepare-and-measure scenario, it is possible to experimentally bound the degree of deviation of such a property, to estimate how far one is to the overlap assumption. In Supplementary Note 7 we have bounded such deviations in the estimation of $h_5(r)$. 

}
\section{Brief overview of generalized contextuality}
\label{appendix: overview_cont}

In this Supplementary Note we briefly review generalized contextuality, and provide a robust version for the proof of contextual advantage from Ref.~\cite{wagner2022coherence}. The notion of generalized contextuality has been connected with various quantum advantage results~\cite{spekkens2009parity,saha2019preparation,schmid2018contextual,ambainis2019parity,chailloux2016optimal,lostaglio2020certifying,lostaglio2020contextual,flatt2022contextual,rochicarceller2022quantumvs}. See Refs.~\cite{budroni2021kochen,mazurek2016experimental,mazurek2021experimentally,Zhan2017experimental,Zhang2022experimental} for known experimental tests that have probed contextuality. Noncontextuality is a constraint over probabilistic data, considered as a notion of classicality because it is connected either with classical Hamiltonian mechanics~\cite{lostaglio2020certifying}, or with models that have undergone decoherence or that verify the emergence of objectivity~\cite{baldi2021emergence}. Whenever said here, we consider \textit{advantage} to account for the difference between success rates in information tasks achieved with contextuality, as opposed to what can be achieved using noncontextual models. As mentioned in the main text, noncontextual models can be  intuitively viewed as a comprehensive class of   classical explanations, similar to local hidden variable models for Bell nonlocality, that permit classical interpretations while, e.g., explaining data generated by 
 some  entangled states~\cite{werner1989quantum}. Noncontextual models correspond to the generalization of local models for single systems, and there are known ways to map correlations between the two~\cite{wright2022invertible}. In the next sections we will also use recently introduced numerical tools~\cite{selby2021accessible,selby2022opensource,rossi2022contextuality} that simplify tests of generalized contextuality.

Starting with a description of generalized contextuality as introduced in Ref.~\cite{spekkens2005contextuality}, we need two concepts: operational theories and ontological models, and we start by  presenting the first. A generic way of describing laboratory operations is by prescribing finite lists of instructions. \textit{Preparation procedures} $P\in\mathcal{P}$, e.g.,  might represent instructions as ``generate a single vertically polarized photon''. Similarly, outcomes $k\in K$ of measurement procedures $M\in \mathcal{M}$ (we call the pair $k \vert M$ a \textit{measurement event}) also represent instructions, e.g., ``upon passing a photon through a polarizing {beam splitter} the detector corresponding to horizontal polarization clicks''.  The collection of all such operations, together with some \textit{composition rules}, and \textit{convex combinations of procedures}, define an operational theory. If we also associate the composition of a preparation $P$ followed by measurements $M$ with outcomes $k$ to some probabilistic interpretation rule $p$, for which we read $p(k \vert M, P)$ we have an operational-probabilistic theory. 

Quantum theory can be viewed as one possible theory in this landscape, where $p$ is given by the Born rule, operations $P$ represent the preparation of quantum states $\rho$ and measurement events $k \vert M$ correspond to positive operator-valued measures (POVM) elements $E_k$.

Two different procedures $P_1,P_2 \in \mathcal{P}$ can be indistinguishable with respect to $p$, in the sense that for every conceivable measurement effect $k \vert M $ the two generate equal statistics $p(k \vert M, P_1) = p(k \vert M, P_2)$. When this happens, we say that $P_1$ and $P_2$ are operationally equivalent and write $P_1 \simeq P_2$. Similarly, two effects $k_1\vert M_1, k_2 \vert M_2$ are said to be operationally equivalent when for any conceivable $P \in \mathcal{P}$ one has $p(k_1 \vert M_1 , P) = p(k_2 \vert M_2, P)$. Note that, if two procedures, for instance $P_1,P_2 \in \mathcal{P}$  are equivalent, the distinction described by the labels $1$ and $2$ merely capture the context in which they were performed~\cite{spekkens2005contextuality}. 

Generalized noncontextuality is a restriction over ontological models. In such models~\cite{harrigan2010einstein}, one assumes that physical systems can be completely characterized by sets of variables $\lambda$ pertaining to a larger space of such elements $\lambda \in \Lambda$. The model prescribes a causal account of the operational-probabilistic theory; to any preparation $P \in \mathcal{P}$ the model explains that, with probability $\mu(\lambda \vert P)$ over $\Lambda$,  some variable $\lambda$ was prepared via procedure $P$. Similarly, for any measurement effect $k\vert M$, the model explains outcomes $k$, resulting from some $\lambda \in\Lambda$ and $M$ through functions $\xi(k \vert M, \lambda)$. The model then connects theory with experimental data using the rule

\begin{equation}
    p(k \vert M, P) = \sum_{\lambda \in \Lambda} \mu(\lambda \vert P) \xi(k \vert M, \lambda).
\end{equation}

The restriction that noncontextuality~\cite{spekkens2005contextuality} imposes over such models is that operationally equivalent procedures $P_1 \simeq P_2$ must be represented equally by the model: $\mu(\lambda \vert P_1) = \mu(\lambda \vert P_2), \forall \lambda \in \Lambda$. Similarly for the measurement effects, $k_1 \vert M_1 \simeq k_2 \vert M_2$ implies $\xi(k_1 \vert M_1 , \lambda) = \xi(k_2 \vert M_2 , \lambda ), \forall \lambda \in \Lambda$. Intuitively, the model \textit{explains} that one cannot distinguish operationally equivalent procedures $P_1 \simeq P_2$ simply because they correspond to the exact same ontological counterparts $\mu(\cdot | P_1) = \mu(\cdot | P_2)$.

In order to test generalized contextuality, we first consider the approach that characterizes prepare-and-measure scenarios described by fragments of finite sets of operational elements in a theory. We shall refer to it as the algebraic approach~\cite{schmid2023addressing}, or equivalently the inequalities approach. For a given finite set of operational elements, it is possible (but generally hard) to find the complete set of generalized noncontextuality inequalities bounding noncontextual explanations for such scenarios~\cite{schmid2018all}. Using semidefinite program (SDP) techniques it is also possible to bound the set of quantum correlations~\cite{tavakoli2021bounding,chaturvedi2021characterising}. 

As was shown in Refs.~\cite{wagner2022coherence,Wagner2022}, the inequalities associated with graph $K_3$, constituting the  class $r_{1,2}+r_{1,3}-r_{2,3}\leq 1$, can be mapped into robust generalized noncontextuality inequalities of  specific prepare-and-measure scenarios. Formally, the prepare and measure scenario, in its robust form, is described as one in which there are $6$ preparation procedures satisfying 
\begin{equation}\label{eq: op equivalences}
    \frac{1}{2}P_1 + \frac{1}{2}P_{1^\perp} \simeq \frac{1}{2}P_2 + \frac{1}{2}P_{2^\perp} \simeq \frac{1}{2}P_3 + \frac{1}{2}P_{3^\perp}
\end{equation}
and three other measurement procedures, for which we assume no operational equivalences, but that satisfy

\begin{equation}
p(M_i\vert P_i)\geq 1-\varepsilon_i, p(M_i\vert P_{i^\perp}) \leq \varepsilon_i, \forall i \in \{1,2,3\}, \,\varepsilon_i \geq 0.    
\end{equation}
Above we use the notation $p(0\vert M_i , P_j) \equiv p(M_i \vert P_j)$. For our experiment, $p(M_i \vert P_j) = r_{i,j}$. For such a scenario it is possible to map the $K_3$ inequality into a robust noncontextuality inequality of the form 
\begin{equation*}
    -p(M_1\vert P_2)+p(M_1 \vert P_3)+p(M_2\vert P_3) \leq 1 + \varepsilon_1 + \varepsilon_2+\varepsilon_3.
\end{equation*}

The quantities $\varepsilon_i$ theoretically capture the fact that, in a real experiment, the  measurements $M_i$ do not perfectly discriminate between $P_i$ and $P_{i^\perp}$. Such imperfections might occur operationally in case $M_i$, $P_i$ or $P_{i^\perp}$ do not correspond to the ideal intended procedures. This last inequality allows us to analyse the claim of contextual advantage for quantum interrogation, from the algebraic perspective. We will do this in the following, showing that our scenario satisfies the operational constraints considered \textit{and} violates the noncontextuality inequality above.

\section{Robust account of contextuality in quantum interrogation}\label{appendix: robust_inter}

Let us start by analysing the robustness of the effects of contextuality in the interrogation task made in Ref.~\cite{wagner2022coherence} due to depolarizing noise. Quantum interrogation assumes the possibility of preparing six different quantum states $\{\vert 0\rangle, \vert 1\rangle, \vert \theta \rangle, \vert \theta^\perp \rangle, \vert -\theta \rangle, \vert -\theta^\perp \rangle\}$, $\theta \in [0,\pi]$, and measuring over all of these states as well, where the states relevant for the quantum interrogation correspond to those generated by the {beam splitter},
\begin{align*}
    \vert \theta \rangle = \cos(\theta)\vert 0\rangle + \sin(\theta)\vert 1\rangle.
\end{align*}

States $\vert \theta \rangle$ result from unitary operations implemented by the first {beam splitter} in Fig.~3(a) in the main text, when {single photons} are input in mode 0. The states $\vert - \theta \rangle$ correspond to the rotation that perfectly destroys the interference generated in the first {beam splitter}, described as unitary rotations generated by the second {beam splitter} in Fig.~3(a), while the third {beam splitter} in this figure represents the bomb in the interrogation scheme. The {programmability} of the device allows for such a freedom in the preparation of those states. Antipodal states can be equally prepared by inputting a photon in the second arm of the MZI. The {programmability} of the device allows that each of those states can be prepared and all the measurements $M_0 := \{\vert 0 \rangle \langle 0 \vert , \vert 1 \rangle \langle 1 \vert \}$, $M_\theta := \{\vert \theta \rangle \langle \theta \vert, \vert \theta^\perp \rangle \langle \theta^\perp \vert ,\}$ can also be performed by programming the {beam splitters} accordingly and gathering statistics from counts in the {APDs}. 

The trade-off between what can be achieved with quantum theory and any noncontextual model is described by the noncontextuality inequality
\begin{equation}\label{eq: noncontextuality inequality}
    p(0\vert M_0,P_\theta) + p(1 \vert M_0, P_{-\theta})-p(\theta \vert M_{\theta},P_{-\theta}) \leq 1 + \varepsilon_0+\varepsilon_\theta+\varepsilon_{-\theta},
\end{equation} related to the prepare-and-measure scenario described before. Noise affects the violations of the inequalities in two different ways: first, it \textit{increases} the noncontextual bound on the right, and second, it decreases the values of the statistics. These two factors suggest that noise is very detrimental to proofs of contextuality that consider specific (target) equivalences. This is common in proofs of contextual advantage, c.f Refs.~\cite[Fig. 3, pg. 7]{lostaglio2020contextual}, and \cite[Fig. 7, pg. 11]{schmid2018contextual}.

Recalling that, operationally, the efficiency of the task is characterized by the quantity 
\begin{equation*}
    \eta = \frac{p(1\vert M_0 , P_{-\theta})p(0 \vert M_0 , P_\theta)}{p(1\vert M_0 , P_{-\theta})p(0 \vert M_0 , P_\theta)+p(1\vert M_0, P_\theta)}
\end{equation*}
where $p(0 \vert M_0, P_{\theta})$ corresponds to the probability that, upon the measurement $M_0$ ('which-way measurement' deciding the path that the {single photon} took) the photon is found in the mode corresponding to state $\vert 0\rangle$, similarly $p(1\vert M_0, P_\theta)$ to the one in which the photon is found in the complementary mode (and is interpreted as the 'bomb' exploding in the Elitzur-Vaidman experiment) while $p(1\vert M_0, P_{-\theta})$ corresponds to the probability that a which way measurement finds a photon in mode $\vert 1\rangle$ upon the preparation $P_{-\theta}$. Note that imposing the symmetry $p(1\vert M_0, P_{-\theta})=p(1\vert M_0, P_{\theta})$ simplifies the equation into
\begin{equation*}
    \eta = \frac{p(0\vert M_0, P_\theta)}{p(0\vert M_0, P_\theta) + 1}.
\end{equation*}

The strategy from now on will be to use the theoretical description of the quantum experiment in terms of noisy quantum states affected by the channel $\mathcal{D}_\nu$. We therefore define, for each state $\vert \psi_i \rangle \langle \psi_i \vert$, a state described by the action of a depolarizing channel $\mathcal{D}_\nu$
\begin{equation}\label{eq: depo channel}
    \mathcal{D}_\nu(X) := (1-\nu)X + \nu \frac{I_d}{d}\text{Tr}(X),
\end{equation}
with $I_d$ a $d\times d$ identity matrix, implying that $I_2/2$ corresponds to the maximally mixed qubit state. The factor $\nu$ represents the amount of noise.  We map  measurement effects $0\vert M_0 \mapsto \mathcal{D}_\nu(\vert 0 \rangle \langle 0 \vert), 1 \vert M_0 \mapsto \mathcal{D}_\nu(\vert 1\rangle \langle 1\vert )$ and similarly for all $\theta$,  $0|M_{\theta} \mapsto \mathcal{D}_\nu(\vert \theta \rangle \langle \theta \vert)$\footnote{Technically, each measurement effect and each quantum state correspond to an equivalence class of all possible procedures that prepare/measure them. We abuse notation here, by simply saying that we map operational measurement effects into specific instances of quantum effects.}. The same  for the preparations. Therefore, we have that the optimal quantum strategy obtained under the effect of a channel $\mathcal{D}_\nu$ now reads

\begin{equation}\label{eq: robust efficiency}
    \eta = \frac{\text{Tr}(\mathcal{D}_\nu(\vert 0 \rangle \langle 0 \vert ) \mathcal{D}_\nu(\vert\theta \rangle \langle \theta \vert))}{\text{Tr}(\mathcal{D}_\nu(\vert 0 \rangle \langle 0 \vert ) \mathcal{D}_\nu(\vert\theta \rangle \langle \theta \vert)) + 1} = \frac{\text{Tr}(\rho_0\rho_\theta)}{\text{Tr}(\rho_0\rho_\theta)+1},
\end{equation}
where for simplicity we write $\rho_s \equiv \mathcal{D}_\nu(\vert s \rangle \langle s \vert) $. For the noncontextuality bound, we can see that $\varepsilon_0$,  $\varepsilon_\theta$ and $\varepsilon_{-\theta}$ will be functions of $\nu$,
\begin{equation}
    1-\varepsilon_0 = p(0\vert M_0 , P_0) = \text{Tr}(\rho_0\rho_0) = 1+\frac{\nu^2}{2}-\nu
\end{equation}
and it is easy to see that $\varepsilon_0 = \varepsilon_\theta = \varepsilon_{-\theta}$. Therefore, $\varepsilon_i = \nu - \frac{\nu^2}{2}$ for $i \in \{0,\theta,-\theta\}$. In such a way, we have that the robust noncontextual bound provided by the prepare-and-measure noncontextuality inequality will be 
\begin{equation}\label{eq: noncontextual efficiency}
    \eta^{NC}(\theta,\nu) \equiv \frac{1 + \text{Tr}(\rho_\theta\rho_{-\theta}) - \text{Tr}(\rho_1\rho_{-\theta}) + 3\left(\nu - \frac{\nu^2}{2}\right)}{\text{Tr}(\rho_0 \rho_\theta) + 1}.
\end{equation}

Whenever $\eta > \eta^{NC}$, we observe an advantage provided by quantum contextuality in the interrogation task. The function $\eta^{NC}(\theta,\nu)$ robustly characterizes the validity of the no-go result, and not necessarily the actual existence of a noncontextual model that reproduces the data.  In Fig.~\ref{fig:robust}(a) we plot both the efficiency that can be achieved with quantum theory (blue) $\eta$ and the noncontextual bound (pink) $\eta^{NC}$ as a function of the parameter $\theta$ that characterizes the transmissivity of the {beam splitters} in the MZI, and the amount of noise $\nu$ captured by the channel from Eq.~\eqref{eq: depo channel}. The curve in which the two meet characterizes the degree of noise $\nu$ for which no advantage can be claimed. In Fig.\ref{fig:robust}(b) we plot the robustness to the noise parameter $\nu$ as a function of $\theta$ as described by the operational prepare-and-measure scenario considered. In this case, for $\nu>0.057$ we would lose the gains guaranteed by contextuality in the protocol.

\begin{figure}[ht]
    \centering
    \includegraphics[width=1\textwidth
]{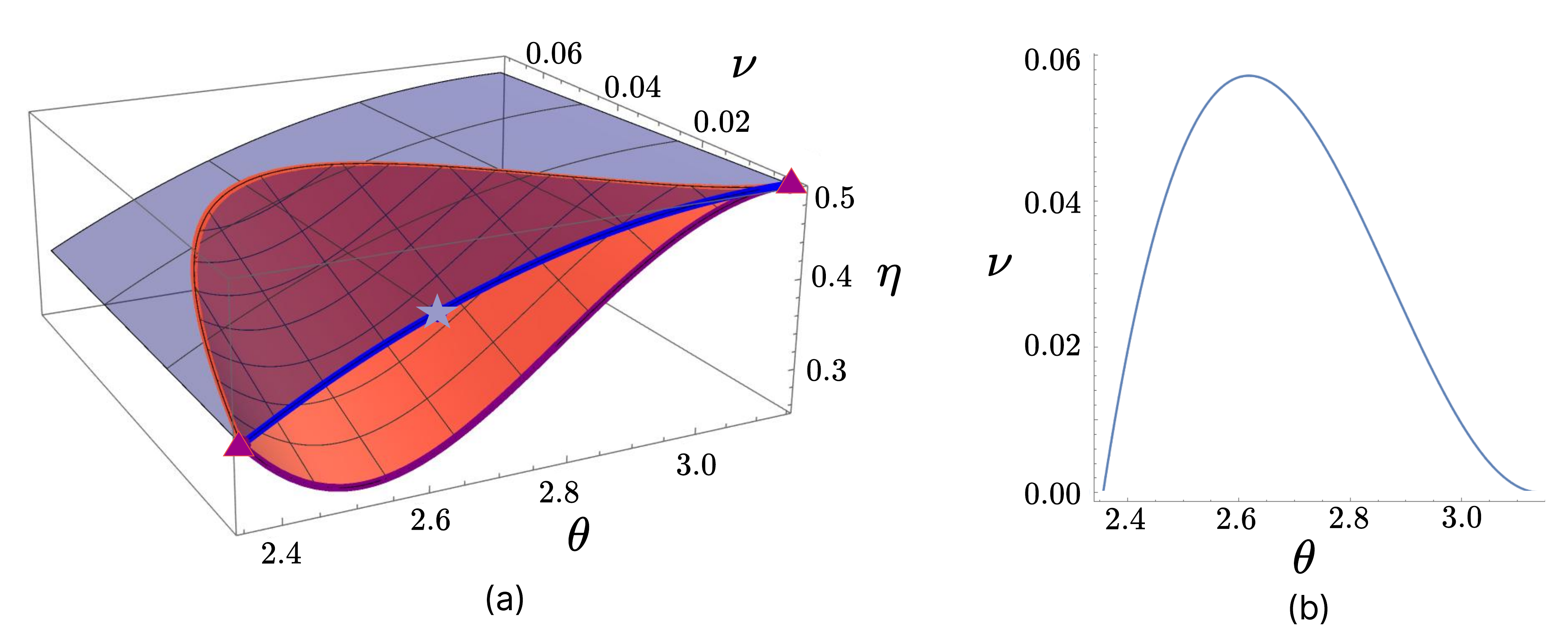}
    \caption{\textbf{Efficiency of quantum interrogation.} (a) Efficiency $\eta$ operationally described in terms of successfully detecting the object as a function of a single parameter $\theta$ encoding the difference in transmission/reflectivity rates in the {beam splitters}. In the upper curve (blue) we have what is achieved with quantum theory and in the lower curve (pink/purple) the upper bound on the efficiency of noncontextual models. Robustness is captured by the degree of depolarizing noise captured by a parameter $\nu$ from the channel $(1-\nu)\rho +\nu I/2$, for any state $\rho$ prepared inside the interferometer. The purple triangles corresponds to the efficiencies $\eta = 1/3$ and $\eta=1/2$ achieved using the toy model from Ref.~\cite{catani2021whyinterference}. The blue star corresponds to a choice of $\theta=5\pi/6$ that can be translated into a proof of contextuality as originally presented in Ref.~\cite{spekkens2005contextuality}, and that achieves $\eta(5\pi/6,0)=0.428$. If $\nu>0.057$ the gap is lost and it is impossible to conclude contextual advantage for the task. (b) Curve marking, for each value of $\theta$, the noise level for which any claim of contextual advantage is lost. Visually, this curve marks the intersection of the two curved regions present in (a).}
    \label{fig:robust}
\end{figure}

From the analysis in Fig.~\ref{fig:robust}, it is clear that for the operational prepare-and-measure (PM) scenario considered, the efficiency of quantum interrogation is fairly sensitive to depolarising noise. Each fixed value of $\theta$ characterizes a specific PM scenario. We can analyse if our results corroborate the hypothesis that the experimentally observed efficiency is higher than those achieved by noncontextual models, considering noise in the device. In order to do so, we study contextuality in the device for the case of values $r=0.75$, where $\theta = 5\pi/6$. In what follows, we will first experimentally test if the operational equivalences that the MZI states should satisfy are reproduced by the device within experimental error, and then use these states to calculate the value of the noncontextuality inequality Eq.~\eqref{eq: noncontextuality inequality}. A violation will robustly indicate that the device is witnessing generalized contextuality within the so-called algebraic/inequalities approach. We will later use a different approach, the so called geometric/general probabilistic theories (GPT) approach~\cite{schmid2021characterization,shahandeh2021contextuality}, to obtain improved results for robustness to depolarizing noise, and obtain experimental evidence of contextuality also from the GPT perspective. We finish the discussion by showing that the efficiency experimentally achieved in the quantum interrogation task \textit{cannot} be achieved with noncontextual models, and comment on open loopholes for our test.

\section{Loophole analysis and benchmarking contextuality with linear programming tools}\label{appendix: loop_bench}

\begin{figure}[t]
    \centering
    \includegraphics[width = 0.95\textwidth]{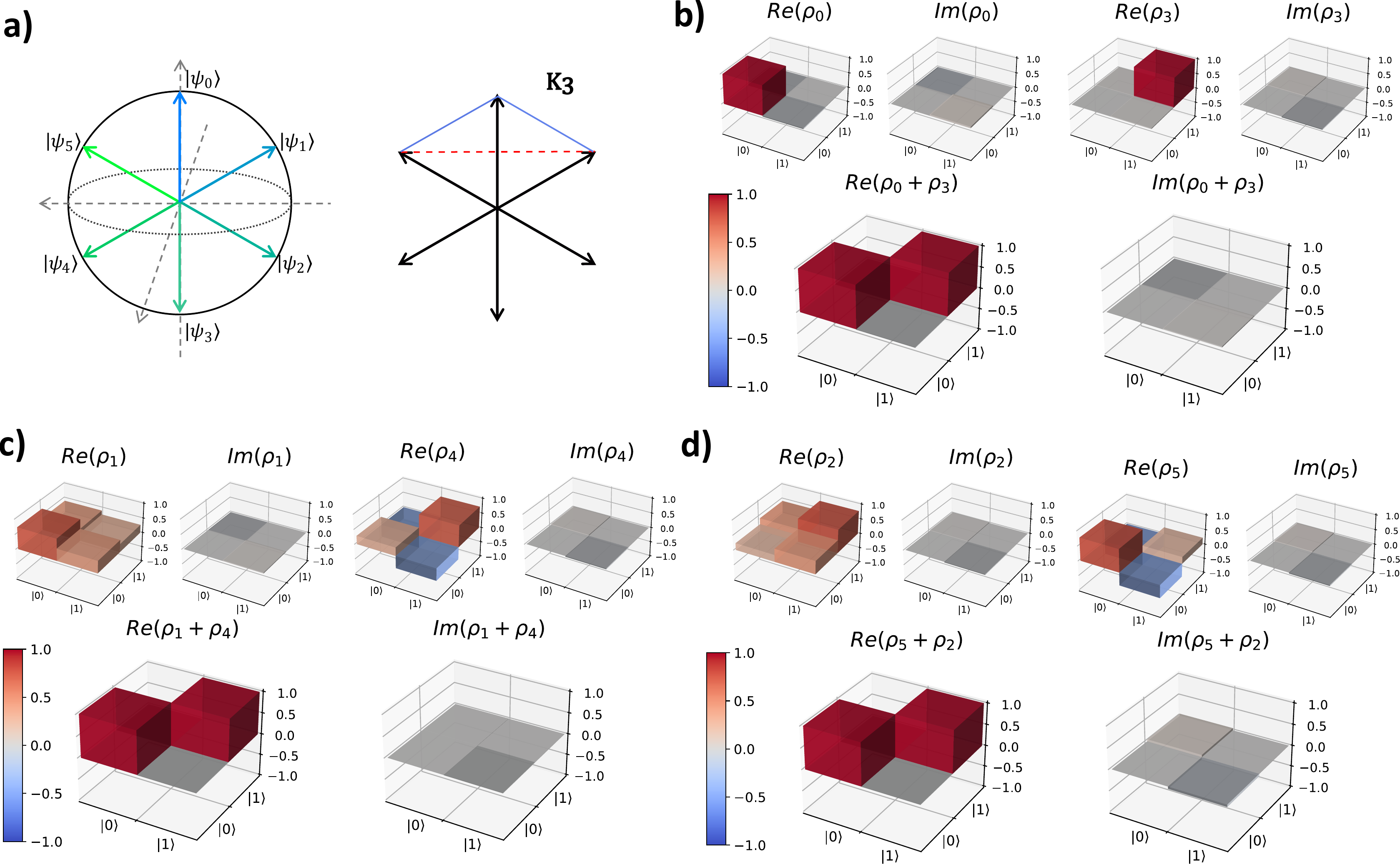}
    \caption{\textbf{Quantum state tomography for states of the noncontextuality inequality.} In (a) we consider the states that are present in the prepare-and-measure scenario, and highlight those used for obtaining a violation of the (robust) $h_3$ inequality from the $K_3$ graph. From (b)-(d) we present the results of the quantum state tomography, as well as confirmation that those states satisfy the operational equivalences Eq.~\eqref{eq: op equivalences} within experimental error.}
    \label{fig:tomo}
\end{figure}

\subsection{Generalized contextuality from the algebraic approach: inequality violation}

We have performed quantum tomography for the states that are generated by both {beam splitters} inside the MZI used in the quantum interrogation test, namely $\mathcal{S}_{ideal} := \{\vert 0\rangle, \vert \theta \rangle, \vert -\theta \rangle,\vert 1 \rangle, \vert \theta^\perp \rangle,\vert -\theta^\perp \rangle\}$ that correspond to density matrices $\mathcal{S} := \{\rho_0,\rho_1,\rho_2,\rho_3,\rho_4,\rho_5\}$, for the value $\theta=5\pi/6$ resulting in states in an hexagon of the rebit subspace of the Bloch sphere. The resulting states are depicted in Fig~\ref{fig:tomo}. As one of the assumptions present in the description of the scenario, and relevant for the derivation of inequality \eqref{eq: noncontextuality inequality}, is the assumption that operationally the states satisfy the operational equivalences $\rho_0 + \rho_3 = \rho_1+\rho_4=\rho_2+\rho_5$. We probe these equivalences by performing quantum state tomography and show that these are correct up to experimental error. In this case, the noncontextuality inequality takes the form
\begin{equation}
    h_3^{(robust)} := \text{Tr}(\rho_1\rho_0)+\text{Tr}(\rho_2\rho_0)-\text{Tr}(\rho_2\rho_1)-\text{Tr}(\rho_0\rho_3)-\text{Tr}(\rho_1\rho_4)-\text{Tr}(\rho_2\rho_5)\leq 1. \label{eq: quantum inequality violation}
\end{equation}

Using the MZI to probe this specific inequality we obtain {$h_3^{(robust)} = 1.178\pm 0.012>1$. The violation of this inequality is a robust witness of quantum generalized contextuality.

Despite the fact that we are able to witness generalized contextuality using the above inequality, the no-go result for the gain in the efficiency $\eta$ compared to noncontextual models is more sensitive to noise. In order to improve this gain, we consider a state-of-the-art approach to witnessing contextuality that uses general probabilistic theories (GPTs), as we now discuss.

\subsection{Generalized contextuality from the geometrical approach: Lack of simplex embeddability}

In order to improve the robustness of depolarizing noise for the specific PM scenario we can consider the GPT/geometric framework. The proof of contextual advantage from Ref.~\cite{wagner2022coherence} relies on a specific characterization of PM scenarios using specific operational equivalences. In our experiment, these equivalences correspond to those present in Eq.~\eqref{eq: op equivalences}, and we have probed their validity within experimental uncertainties. Due to the specification of target equivalences, the robustness to noise in our scenario is \textit{smaller} than what would be possible in case one considers \textit{all} possible operational equivalences satisfied by quantum theory viewed as a GPT, instead of an operational-probabilistic theory. 

We now use the framework studied in Refs.~\cite{selby2021accessible,selby2022opensource,rossi2022contextuality} to specify the accessible GPT fragment of quantum theory and probe its robustness to noise. We will consider that the reader is familiar with the GPT framework, and we refer to the introductory Refs.~\cite{hardy2001quantum,barrett2007information,muller2021probabilistic,plavala2021general}. In this perspective, we consider quantum theory as a GPT, in which case quantum states and measurement effects are considered GPT states and GPT effects.

Remarkably, the geometry approach greatly simplifies the minimal requirements to test contextuality, which corresponds to probing the operational equivalences as it considers, by default, all the possible equivalences present in the \textit{accessible} fragment described by a GPT. This fragment is described simply by: 1) a set of quantum states, 2) a set of measurement effects, 3) a unit effect of the accessible GPT, that in quantum theory simply corresponds to the identity and 4) a maximally mixed state. When one assumes quantum theory, 3 and 4 are assumed to be both the identity and the maximally mixed state $I_d/d$. In order to probe generalized contextuality in such a framework, one must test if the accessible fragment just described allows for being embedded in \textit{some} classical GPT, that is a simplicial GPT defined in a possibly higher dimensional GPT space. It was shown that this problem can be written as a linear program~\cite{Gitton2022solvablecriterion,selby2021accessible,selby2022opensource}. 

For the set of effects, we choose the set of tomographically complete measurements over the qubit space $\mathcal{M} := \{X,Y,Z \}$. The pair $\mathcal{A} \equiv (\mathcal{S},\mathcal{M})$ completely characterizes the accessible GPT we investigate. The linear program from Ref.~\cite{selby2022opensource} returns robustness to depolarizing noise for the fragment $\mathcal{A}$ and it is given by $\nu(\mathcal{A}) = 0.1121$. Two comments are necessary: first, this robustness is considered \textit{with respect to} $\mathcal{A}$, and hence with respect to the noisy states, as opposed to the robustness considered before with respect to the ideal states. If we consider the robustness with respect to the ideal states using $\mathcal{A}_{ideal} = (\mathcal{S}_{ideal},\mathcal{S}^\dagger_{ideal})$ we have that $\nu(\mathcal{A}_{ideal}) = 0.333$, where $\mathcal{S}^\dagger_{ideal}$ are the same elements of $\mathcal{S}_{ideal}$, now viewed as accessible effects. Second, for the quantum interrogation, we assume that the errors in the generation of coherence from the {beam splitter} are significantly higher than the errors in the detection apparatus. Finally, we see that robustness to depolarizing noise is \textit{much larger} when one considers the GPT perspective as many more equivalences are used, instead of simply considering those from Eq.~\ref{eq: op equivalences}, as expected. 

Because we have obtained a value $\nu(\mathcal{A}) = 0.1121$ for the fragment using the states  considered in Fig.~\ref{fig:tomo}, i.e., experimentally probed, this is once more a  witness of generalized contextuality in our device, provided now by the geometric approach.


\subsection{The role of contextuality in the efficiency of $\eta$}

Let us compare the efficiency found experimentally with the smallest possible noisy efficiency $\eta$ for the quantum interrogation that still allows for a claim of quantum advantage.  We consider the specific value of efficiency $\eta$ reached using the states from Fig.~\ref{fig:tomo}. The results are presented in Table~\ref{tab: efficiency contextuality}.

\begin{table}[ht]
    \centering
    \begin{tabular}{c|c|c|c}
        & Robustness to depolarization ($\nu$) & Quantum Efficiency ($\eta$) & Worse case efficiency ($\eta(\nu)$)\\
        \hline
        PM Ideal & 0 & 0.428571 & 0.285714\\
        \hline 
        PM Robust & 0.057 & 0.428571 & 0.419385 \\
        \hline 
        GPT Robust & 0.333 & 0.428571 & 0.379353\\
        \hline 
        Experiment & 0.112 & 0.428$\pm$ 0.006 & 0.410757
    \end{tabular}
    \caption{\textbf{Efficiency of the quantum interrogation task.} Analysis of the efficiency $\eta$ of quantum interrogation with respect to the states tomographically characterized in Fig.~\ref{fig:tomo}. We consider the ideal quantum efficiency obtainable with the ideal set of states/effects for the PM scenario, and its robust counterpart, in the two first rows. Third row corresponds to the analysis for the GPT approach.  Robustness to depolarization is calculated, as well as its effect on the efficiency $\eta$ as described with Eq.~\eqref{eq: robust efficiency}. Efficiencies higher than those in  the third column cannot be explained with noncontextual models.}
    \label{tab: efficiency contextuality}
\end{table}

We note that due to sensitivity of $\eta$ to noise $\nu$, the algebraic approach allows   noncontextual models to reproduce the  efficiency close to the one experimentally found, even though smaller within experimental error. 
Nevertheless, to better observe that the quantum efficiency found cannot be reproduced by noncontextual models for the fragment of quantum theory considered, we can use the GPT/geometrical  approach. The last row in Table~\ref{tab: efficiency contextuality} presents the efficiency $\eta$ found experimentally, the robustness $\nu$ obtained using the linear program from Ref.~\cite{selby2022opensource} for the quantum states characterized by tomography (see Fig.~\ref{fig:tomo}) and the smallest value for which any quantum efficiency would still incur into quantum advantage for that particular noise level. The other rows in Table~\ref{tab: efficiency contextuality} correspond to numerical bounds comparing optimal quantum results, the noise level that would destroy the quantum advantage argument and the corresponding quantum efficiency for that noise.
 
A noncontextual model would be able to reproduce the data, and hence the efficiency of the task, only in the case that: (i) the effective noise model cannot be mapped as a depolarizing channel, neither be considered as a worse-case noise model through robustnesss to depolarization, implying that we cannot use the program from Ref.~\cite{selby2022opensource}, (ii) the noncontextual model is somewhat conspiratorial, using precisely some non-trivial aspects of the noise to pass the test of non-embeddability characterized by robustness to depolarization, while being simplex-embeddable, for the states in Fig.~\ref{fig:tomo} (iii) the noncontextual model is somewhat conspiratorial in a different sense, using the fact that we do not \textit{perfectly match} the operational equivalences, but only respect them within small experimental errors, and uses this fact to pass the tests considered, i.e., inequality violation and linear-program, and reproducing the high efficiency found, (iv) the states found in Fig.~\ref{fig:tomo} and those used to perform the interrogation task are not the same, even though they were probed within the exact same system. Such situations are highly discredited. In summary, Table~\ref{tab: efficiency contextuality} corroborates the hypothesis that we have  experimentally witnessed efficiencies for the quantum interrogation task that could never be reached by noncontextual ontological models. The last row of the table shows the actual experimental results, for the efficiency obtained experimentally and for the robustness to depolarization obtained from quantum state tomography. 

\subsection{Loopholes}

We make a small note on the loopholes associated with our test of contextuality. It is well established that generalized contextuality resolves various issues regarding common loopholes in experimental tests of no-go results. Among experimental aspects that constitute loopholes for testing the standard notion of Kochen-Specker contextuality or Bell's notion of local causality are: 1) sharpness of measurements, 2) statistical independence, 3) photon detection and 4) freedom of choice. \textit{None} of those constitute loopholes for testing generalized contextuality. 

However, some loopholes persist in our experiment, these are:

\begin{enumerate}
    \item[(a)] Assumption of quantum theory. General tests of contextuality should, in principle, be theory-independent. Some might think that it is necessary to probe contextuality without assuming that quantum theory corresponds to the underlying GPT.
    \item[(b)] Robustness analysis beyond depolarizing noise. It is likely that, considering resource theoretic arguments, depolarizing noise corresponds to the most detrimental type of noise to test generalized contextuality. For instance, Ref.~\cite{rossi2022contextuality} showed proofs of contextuality with arbitrarily large dephasing noise. The tools developed in Ref.~\cite{selby2022opensource} do not allow for considering generic experimental imprecision's, i.e., uncertainty in the states and effects. 
    \item[(c)] Tomographic completeness. Tests of generalized contextuality need the assumption that the fragment (or the operational equivalences) has tomographically complete sets of operations. We do not discuss this loophole here. We have assumed a tomographically complete set of measurements.
\end{enumerate}

It is worth mentioning the following. We do not use the methodology of secondary procedures, as described in Refs.~\cite{mazurek2016experimental,schmid2018contextual} for imposing the symmetries of the operational PM scenario. However, the only symmetry used for violating Eq.~\eqref{eq: quantum inequality violation} was that $r_{i,j}=r_{j,i}$. We study this in depth for 4-dimensional states in Appendix~\ref{app: rij=rji}. For qubit violations, this source of error is smaller than what would be necessary to lose the violation. Note that this is \textit{not} a loophole for the analysis using the GPT approach. Therefore, this is not a loophole in our test for contextuality overall, and only of our considerations using the operational PM scenarios.

\section{Coherence witnesses tailored for two-level systems}
\label{appendix:MZI}

\begin{figure}
    \centering
    \includegraphics[width=0.6\textwidth]{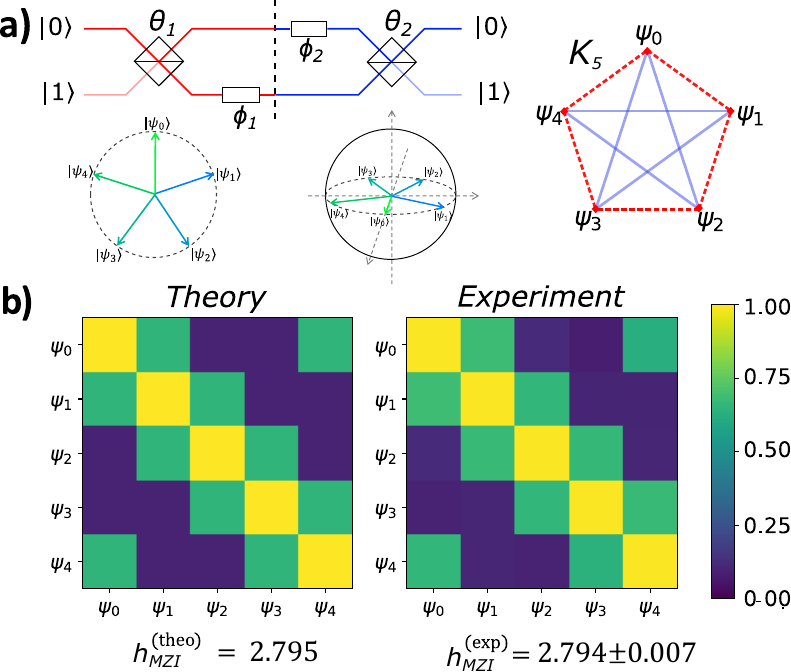}
    \caption{\textbf{Symmetric quantum states witnessing coherence in the Mach-Zehnder from high inequality violation.} a) Scheme for qubit encoding and measurement. The first tunable {beam splitter} $\theta_1$ and phase shift $\phi_1$ prepare the qubit $\ket{\psi_i}$. The second pair $(\theta_2$, $\phi_2)$ performs the projection on $\bra{\psi_j}$. The violation is maximized by qubit states equally spaced on a great circle of the Bloch sphere. The inequality is schematized by the graph in the figure. Blue edges are overlaps that must be summed and in red the ones to be subtracted. b) Overlaps $r_{ij}$ for the maximum violation set on the Bloch sphere equator. }
    \label{fig:expbomb}
\end{figure}
An extensive study of all the possible inequalities bounding the coherence-free sets of overlaps of up to 6 states was done in Ref.~\cite{Wagner2022}. Among these, some inequalities stand out  due to their conceptual relevance and robustness to noise. For instance, as was proposed in Ref.~\cite{wagner2022coherence}, the following inequality witnesses coherence inside Mach-Zehnder interferometers {(MZIs) 
, in a robust and efficient way (see Fig.~\ref{fig:expbomb}a for a description of this set-up)}:
\begin{align}
&h_{\mathrm{MZI}}(r) = r_{0,1}+r_{0,4}+r_{1,2}+r_{2,3}+r_{3,4}\nonumber\\
    &\hspace{3em}-r_{0,2}-r_{0,3}-r_{1,3}-r_{1,4}-r_{2,4}\leq 2. \label{ineq: pentagon violation inequality}
\end{align}
It was shown {that this inequality has} a large quantum violation, $\sim 0.795$, and it applies to the common scenario of an MZI with balanced {beam splitters}, using 10 different choices of internal phase shifters.

Our experimental investigation regards such a coherence witness tailored for two-dimensional systems (qubits), reported in the inequality \eqref{ineq: pentagon violation inequality}. We remind that the test 
requires the estimation of overlaps $r_{i,j}=\vert \langle \psi_j \vert \psi_i \rangle \vert^2$. Hence, we prepare the {programmable UPP} to implement the scheme depicted in Fig.~\ref{fig:expbomb} (a). We consider only two modes 
that individuate the MZI sketched in the figure. In the preparation stage, highlighted in red, we inject a photon in the modes 0 
to prepare qubit state $\ket{\psi_i} = \cos\theta_i \vert 0 \rangle + e^{i\phi_i}\sin\theta_i\vert 1 \rangle$ by programming the two angles $\phi_1=\phi_i$ and $\theta_1=\theta_i$.
Conversely, in the measurement stage, highlighted in blue, the second phase shifter $\phi_2$ is set to $-\phi_j$ and 
$\theta_2 = \theta_j$ to realize the projection onto $\bra{\psi_j}$. The overlap $r_{i,j}$ between the two states is given by the probability to detect a photon in mode 0. 
The inequality functional reported in the l.h.s of  Eq.~\eqref{ineq: pentagon violation inequality} is represented by the graph in Fig.~\ref{fig:expbomb} (a). Red (dashed) edges correspond to overlaps weighted by a $-1$ phase, while light blue (full) edges correspond to overlaps weighted by a $+1$ phase in the inequality. 
The maximum 
violation 
is 
$5\sqrt{5}/4 \approx 2.795 $. This bound is saturated, for example, by 
states distributed on the vertices of a regular pentagon that lies on the equator of the Bloch sphere as shown in Fig.~\ref{fig:expbomb} (a). The bound for coherence-free states is the value on the r.h.s. of the inequality, that is, 2.
We experimentally measure all the possible overlaps $r_{i,k}$  with $i,k \in \{0,\dots,4\}$ of the states in a set with maximum violation, by tuning the parameters $\theta_i = \pi/4$ 
and $\phi_k=k\cdot 2\pi/5$, for all $k$. In Fig.~\ref{fig:expbomb} (b) we report the matrix of the overlaps compared to the theoretical one. The corresponding 
violation is equal to $2.794 \pm 0.007$ and it is consistent with the theoretical one within one standard deviation. We estimated the violations by considering only the upper triangular part of the overlaps matrix, i.e $r_{i,j}$ with $i>j$. These results prove the generation of (basis-independent) coherence inside the MZI.

\section{{UPP}'s settings for preparing and measuring qudits}
\label{appendix:qudit_settings}
In the following, we report how qudit states are generated and projected in the {programmable UPP}. We also show the circuit settings to obtain the maximum violation of the coherence witnesses.

A generic qutrit state can be encoded from a {single photon} in mode 0 by setting the parameters of the circuit $\{\theta_1,\theta_2, \phi_1, \phi_2\}$ in Fig. 4a to generate the following state 
\begin{equation}
    \ket{\psi} =  \cos{\theta_{1}}\cos{\theta_{2}}\ket{0}+\sin{\theta_{1}}\cos{\theta_{2}}e^{i\phi_{1}}\ket{1}
    +\sin{\theta_{1}}\sin{\theta_{2}}e^{i\phi_{2}}\ket{2}
\end{equation}
The maximum violation of $h_4$ has been reached by the qutrits with the following amplitudes on the $\{\ket{0}, \ket{1}, \ket{2}\}$ basis:

 \begin{equation}
 \left\{
   \begin{aligned}
   \ket{\psi_0}&=\{1, 0, 0\}\\
   \ket{\psi_1}&=\Big\{\frac{\sqrt{5}}{3},\frac{2}{3}, 0 \Big\}\\
    \ket{\psi_2}&= \Big\{\frac{\sqrt{5}}{3},\frac{1}{3}, i\frac{1}{3} \Big\}\\
    \ket{\psi_3}&=\Big\{\frac{\sqrt{5}}{3},-\frac{1}{3}, -i\frac{1}{3} \Big\}
    \end{aligned}\right.
 \end{equation}

Fig. 4b of the main text reports the circuit to encode a generic ququart in a {single photon} that enters from mode 1 of the device. The qudit expressed through the parameters $\{\theta_1, \theta_2, \theta_3, \phi_1, \phi_2, \phi_3\}$ is

\begin{equation}
    \ket{\psi} = \cos{\theta_{2}}\cos{\theta_{1}}\ket{0}+\sin{\theta_{2}}\cos{\theta_{1}}e^{i\phi_{1}}\ket{1}+\sin{\theta_{1}}\cos{\theta_{3}}e^{i\phi_{2}}\ket{2}+\sin{\theta_{1}}\sin{\theta_{3}}e^{i\phi_{3}}\ket{3}
\end{equation}
and the amplitudes of the set of states for maximizing the $h_5$ violation expressed in the computational basis are
 \begin{equation}
 \left\{
   \begin{aligned}
   \ket{\psi_0}&=\{0.61, 0.16, 0.41, -0.65\}\\
   \ket{\psi_1}&=\{0.13,0.12, 0.95, -0.26 \}\\
    \ket{\psi_2}&= \{-0.99,0.01, 0.12, -0.01 \}\\
    \ket{\psi_3}&=\{-0.23, 0.43, 0.05, -0.87 \}\\
     \ket{\psi_4}&=\{0.26, 0.76, -0.03, -0.59 \}
    \end{aligned}\right.
 \end{equation}

The circuit employed for the 5-mode qudits is not universal in the sense that it cannot generate  generic states. The limitation was imposed by the number of layers of MZIs in the 6-mode {UPP} that was not enough to individuate two separated and independent preparation and measurement stages. The 5-mode states that we could prepare and measure are parameterized by 
\begin{equation}
    \ket{\psi} =  \sin{\theta_1} \cos{\theta_2} \sin{\theta_4}\ket{0}
     +\sin{\theta_1}\cos{\theta_{2}}\cos{\theta_{4}}\ket{1}+
    \sin{\theta_{1}}\sin{\theta_{2}}e^{i\phi_{1}}\ket{2}
    +\cos{\theta_{1}}\sin{\theta_{3}}e^{i\phi_{2}}\ket{3}+
    \cos{\theta_{1}}\cos{\theta_{3}}e^{i\phi_{3}}\ket{4}
\end{equation}
where $\{\theta_1, \theta_2, \theta_3, \theta_4, \phi_1, \phi_2, \phi_3 \}$ are the angles of the tunable {beam splitters and phase shifters} reported in Fig. 4c and the input mode is $\ket{2}$. The maximum violation of $h_6$ is reached by the following states, described by the parameters:
 \begin{equation}
 \left\{
   \begin{aligned}
   \ket{\psi_0}&=\{0.60, -0.38, 0.43, 0.42, 0.36\}\\
     \ket{\psi_0}&=\{1, 0, 0, 0, 0\}\\
   \ket{\psi_1}&=\{0.20,-0.002, 0.16, 0.13,  0.96 \}\\
    \ket{\psi_2}&= \{0.20 ,-0.16, 0.97, 0, 0 \}\\
    \ket{\psi_3}&=\{0.20, -0.96, 0.006, 0.16, 0.12 \}\\
     \ket{\psi_4}&=\{0.20, -0.006, 0.16, 0.97, 0.003 \}
    \end{aligned}\right.
 \end{equation}

\section{Effects of experimental noise in the inequalities estimation}\label{app: rij=rji}
\label{appendix:exp_noise}
In the main text we reported the measurements of various inequalities aiming at witnessing coherence and the dimension of the Hilbert space. Such quantities require the calculation of overlaps $r_{i,j}=\vert \braket{\psi_j}{\psi_i}\vert^2$ between pairs of states that belong to a given set. The main sources of errors in the experimental calculation are the statistics of photon counts and the imperfections in preparing $\ket{\psi_i}$ and measuring $\bra{\psi_j}$. The uncertainties reported in the main text derive only from the poissonian statistics of single-photon counts. We refer to such errors as $\sigma_c$.  In this appendix we analyse the effects of an imperfect setting of the optical circuits parameters instead. 

The implicit assumption of the $h_n$ inequalities is that the overlap function $r_{i,j}$ is symmetric. In the experiment this assumption could not be perfectly satisfied. For example, the same state $\psi_i$ has to be encoded both as a ket and as a bra. The parts of the circuit dedicated for the preparation and the measurement are separated and involved different variable {beam splitters and phase shifters}. Any small mismatch between the two stages in encoding the same state $\psi_i$ makes the overlaps matrix not symmetric, i.e $r_{i,j}\neq r_{j,i}$. This effect generates errors in the $h_n$ estimation that may lead to over- or under-estimates. The reason is that the effective number of different states in the set is larger than $n$. For example in the experiment estimating $h_5$, considering only the upper triangular part of the overlaps matrix as we did in the main text, the state $\psi_1$ is encoded as a bra for the $r_{0,1}$ calculation but also as a ket in the preparation stage for $r_{1,2}$, $r_{1,3}$ and $r_{1,4}$. An analogous consideration holds for states $\psi_2$ and $\psi_3$. Thus, it follows that the actual number of states involved in the $h_5$ calculation could in principle be up to 8.

In Fig. \ref{fig:errors} we investigate such an effect that follows from errors in the circuit settings that prepare and measure the states for the $h_5$ calculation. We report the experimental estimate $h_5^{exp}= 1.391 \pm 0.011$ in black which is larger than the maximum theoretical value $h_5^{theo}= 1.375$ in red. We consider two types of errors in the parameters $\{\theta_1, \theta_2, \theta_3, \phi_1, \phi_2, \phi_3\}$ that generate the ququarts and in the angles $\{\theta_4, \theta_5, \theta_6, \phi_4, \phi_5, \phi_6\}$ which perform the projection. The first one $\epsilon$ is a relative error in the setting of the angles in the circuit. The second type $\delta$ is an additive error, i.e. a bias in the angles. The red (blue) shaded area indicates the maximum dispersion of $h_5$ at a given error $\epsilon$ ($\delta$). It is evident from such analysis how much $h_5$ is sensitive to small errors in the optical circuits. In orange we highlight the errors value for which the $h_5$ dispersion is equal to 2.5 and 10 $\sigma_c$, the uncertainty associated to the photon counts of the experiment. On one hand, the two plots give an explanation for the value $h_5^{exp}$ that was larger than theoretical bound. On the other, they provide evidence that the individual parameters of the optical circuit were calibrated with relative errors within 0.005 and biases within 0.5°.

\begin{figure}
    \centering
    \includegraphics[width=\textwidth]{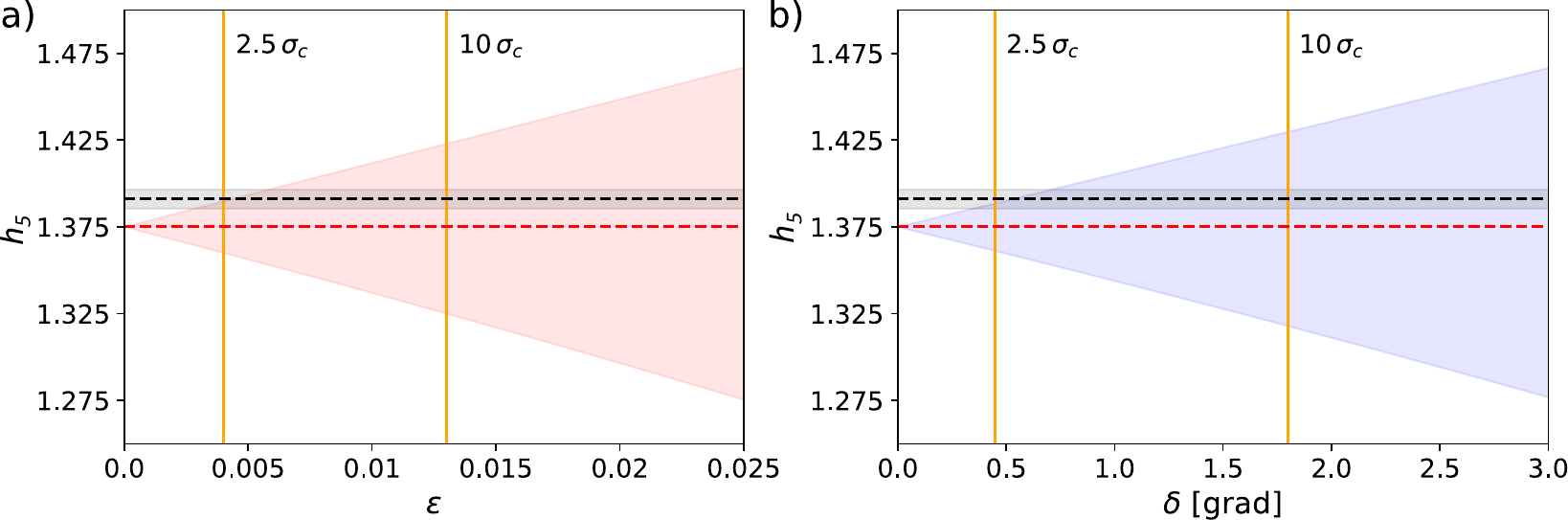}
    \caption{\textbf{Errors in the optical circuit settings.} a) Effect of imperfections summarized in the relative error $\epsilon$ associated to the angles $\{\theta_1, \theta_2, \theta_3, \phi_1, \phi_2, \phi_3\}$ and $\{\theta_4, \theta_5, \theta_6, \phi_4, \phi_5, \phi_6\}$ in the $h_5$ estimation. The red area covers the maximum dispersion associated to $h_5$ for a given level of $\epsilon$. The yellow lines highlight the values of $\epsilon$ which produce errors in the $h_5$ estimation equal to 2.5 and 10 $\sigma_c$, where $\sigma_c$ is the uncertainty due to the single-photon counts statistics. The latter is reported as the grey area around the $h^{exp}_5$ value in black. We report in red the theoretical value $h^{theo}_5$. b) Same analysis for additive errors $\delta$ in the angles of the circuit.}
    \label{fig:errors}
\end{figure}

{
\section{Design, fabrication and calibration of the {UPP}}
\label{app:UPP}

The six-mode {UPP} employed in our experiments consists of a waveguide optical circuit, developed according to the 'universal' design reported in \cite{Clements2016}, which is made {programmable} by means of thermo-optic phase shifters. Details of the circuit layout are shown in Fig.~\ref{fig:6x6WGcircuit}.

The core of the circuit is a rectangular network of {MZIs}, each of them consisting of two directional couplers in cascade. Directional couplers are here designed as two curved waveguide segments, composed only of circular arcs with $R$~=~30~mm curvature radius, which in the middle region are brought close to exchange optical power by evanescent field interaction. The minimum distance is chosen here to produce a splitting ratio of precisely one half. Each  of these {MZIs} is equipped with two thermo-optic phase shifters as shown in Fig. 3c of the Main Text. {The MZI cell is highlighted in Fig.~\ref{fig:6x6WGcircuit} by a red rectangle: its total length is 11.4~mm}.

\begin{figure}
\includegraphics[width=\textwidth]{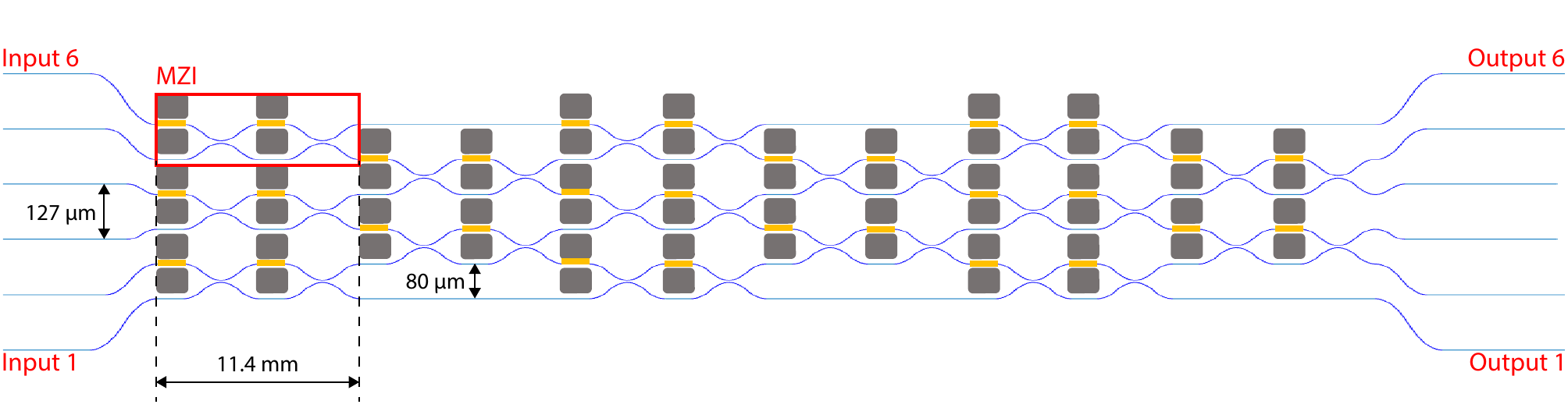}
\caption{\label{fig:6x6WGcircuit} {\textbf{Scheme of the waveguide circuit of the six-mode UPP}. Blue lines depict the waveguide paths, yellow rectangles indicate the microheaters that implement the thermo-optic phase shifters (electrical connection paths are not represented), while dark rectangles indicate the thermal-insulation trenches. The two dimensions are not to scale: the actual circuit length is 82~mm (horizontal dimension in the drawing), but lateral footprint of the circuit is only 0.635~mm (vertical dimension in the drawing).}}
\end{figure}

Lateral pitch of the interferometer arms is {80}~$\mu$m; fan-in and fan-out sections are added at the two ends of the circuits, to match the 127~$\mu$m pitch of standard fiber arrays. The full circuit fits in a $82 \times 20\;\text{mm}^2$ chip.

We fabricated the photonic processor in a commercial {alumino-borosilicate} glass substrate (Eagle XG) with 1.1~mm thickness, using the femtosecond laser writing technology. In particular, we adopted a Yb based laser system (Light Conversion Pharos), delivering a train of pulses at 1030 nm wavelength, repetition rate tunable from 1 kHz to 1 MHz, pulse-width of 170 fs, and average power up to 10.0 W.

Waveguides were directly written 25~$\mu$m below the top surface of the glass substrate, each by 6 overlapped laser scans. For the waveguide inscription the laser repetition rate was set to 1~MHz and 330 nJ pulses were focused by {a 20$\times$ water-immersion objective (0.5 NA)} while the substrate was translated at constant speed of 25.0~mm/s. After irradiation, the substrate underwent a thermal annealing process \cite{arriola2013low, corrielli2018symmetric} which improves propagation losses and optical confinement properties of the waveguides. The resulting optical insertion loss of the whole circuit is about 3~dB.

Thermo-optic phase shifters are based on resistive {microheaters} deposited on the top surface of the optical chip, precisely above the positions of the waveguides where a tunable phase term needs to be introduced. By driving a controlled current into the {microheater}, a controlled temperature increase is achieved locally within the substrate, which in turn induces, by the thermo-optic effect, a proportional phase delay in the light propagating in the waveguide. To fabricate the resistive {microheaters} we used a technique similar to the one described in Ref.~\cite{ceccarelli2019thermal}. Two metallic layers, respectively of chromium (5-nm thick) and of gold (100-nm thick), were deposited in sequence on the substrate surface, by thermal evaporation. Metal deposition was followed by a thermal annealing process in vacuum (300$^\circ$C for 3~h, $p < 1 \cdot 10^{-5}$~mbar) to stabilize the resistivity of the metallic film. Finally, the same femtosecond laser used for waveguide inscription was used to pattern the {microheaters}: here femtosecond laser pulses of 200~nJ energy and 1~MHz repetition rate were focused by a 10$\times$ objective (0.25 NA), while translating the substrate at 2~mm/s constant speed. A single {microheater} has a length of 1.5 mm and a width of 10~$\mu$m, providing an average resistance value of about 125~$\Omega$, and is connected to millimeter-wide contact pads by paths of minor resistance.

To increase the efficiency of the phase shifters, and reduce cross talks between adjacent devices, insulating micro-trenches were also excavated on both sides of each {microheater} \cite{ceccarelli2020low}, using water-assisted laser ablation. In detail, we used the same Yb femtosecond laser system but we set the repetition rate to 20~kHz, and tuned the pulse-compressor to stretch the laser pulses to about 1~ps duration. Ablation was performed by focusing pulses with 1.5~$\mu$J energy with a 20$\times$ water-immersion objective (0.5 NA); scanning speed was 4~mm/s. Each trench has a width of 60~$\mu$m and a length of 1.5~mm, i.e. it is as long as the {microheater}. The glass 'wall' between two trenches, into which the waveguide is inscribed and above which the {microheater} is deposited, is as thin as 20~$\mu$m. For practical reasons, the trench-excavation process followed the waveguide fabrication (namely, it was conducted after waveguide irradiation and annealing) and preceded the deposition of the metallic layers for phase shifter fabrication.
}

{As a final step of the process, the UPP was assembled on an aluminum heat sink along with interposing printed circuit boards and fiber arrays (single-mode at the input and multi-mode at the output) in order to guarantee easy electrical and optical access to the device.

Once fabrication and packaging were complete, the UPP was subject to an extensive calibration procedure with the aim of fully characterizing the relation between electrical currents and phase terms induced throughout the circuit. This calibration process is based on the following model. Under the assumption of perfectly balanced (i.e. 50:50) directional couplers, the normalized optical power $P_{cross}$ at the cross output of the MZI is expressed by
\begin{equation}
	P_{cross} = \frac{1 + \cos(\theta)}{2},
	\label{calib:opt_power}
\end{equation}
where $\theta$ is the internal phase of the MZI, namely the phase term that we tune to change the splitting ratios of the variable beam splitters of the UPP. In principle, due to thermal cross talk effects, the phase delay $\theta_i$ induced on the $i$-th MZI depends on the electrical current $I_j$ flowing through the $j$-th microheater and this dependence is governed by Joule's law of heat dissipation. In formulas:
\begin{equation}
	\theta_{i} = \theta_{0,i} + \sum_{j}\alpha_{ij}I_j^2(1+\beta_{j}I_j^2),
	\label{calib:current}
\end{equation}
where $\theta_{0,i}$ are static phase contributions related to fabrication tolerances, $\alpha_{i,j}$ are the thermo-optic coupling coefficients between microheaters and MZIs and $\beta_{j}$ are correction factors needed to take into account the dependence of the microheaters' electrical resistance on the temperature. Here, the superposition theorem is successfully employed in presence of the nonlinear correction factors $\beta_{j}$ thanks to the reduced cross talks and to the limited dependence of gold electrical resistivity on the temperature. Moreover, it is worth noting that, due to the peculiar geometry of the circuit (see again Fig.~\ref{fig:6x6WGcircuit}) horizontally neighboring microheaters are much further apart (some mm) than vertically neighboring ones (a few hundreds of µm). As verified experimentally, this means that we can neglect the thermo-optic coupling between microheaters and MZIs that are not on the same column, with a great advantage in terms of calibration effort and accuracy.

Each MZI was individually characterized by employing coherent light at 785~nm (Thorlabs L785P25), an external photodetector (Thorlabs PM16-121) and a multichannel phase shifter driver (Qontrol Q8iv). A custom Python software was developed to enable fully automated measurements of the optical power transmission of the MZIs as a function of the electrical current flowing though a given microheater. The isolation procedure reported in \cite{alexiev21calibration} was adopted in order to guarantee that light was always entering only one of the MZI cell’s inputs and that light coming from only one of its outputs was being detected. Optical fiber switches (Lfiber) were employed in order to route the light to a given input of the UPP and to allow the photodetector to collect the light coming from a given output. Finally, Eq.~\ref{calib:opt_power} and \ref{calib:current} were exploited to best fit the experimental data and extract the values corresponding to $\theta_{0,i}$, $\alpha_{i,j}$ and $\beta_{j}$.

Similar models and measurement procedures were employed also for the external phase shifters (i.e. the ones inducing the phase terms $\phi_i$ at the input ports of the variable beam splitters). This was done with the same procedure reported for the internal phases $\theta$, but interferometric rings were in this case purposely formed inside the UPP by exploiting multiple MZIs set to operate as mirrors ($\theta = \pi$) or balanced beam splitters ($\theta = \pi/2$) as reported in \cite{carolan2015universal, harris2017quantum}. The assumption of negligible heat diffusion along the horizontal axis was employed again, but this time phase terms induced vertically by either internal or external phase shifters in waveguide regions that do not correspond to physical phase shifters were remapped onto external phase shifters as horizontal cross talk terms following the algorithm reported in \cite{crespi2022indistinguishability}. 

Once relation \ref{calib:current} is calibrated for all $\theta_i$ and $\phi_i$, it is possible to invert it and thus obtain the set of the electrical currents $I_j$ corresponding to given phase settings $(\theta_i, \phi_i)$. In this work, the full calibration dataset was employed whenever the control of all the programmable MZIs composing the UPP mesh was necessary. In particular, we employed such a global calibration for the 5-mode qudits in the $h_6$ inequality. In all the other cases we preferred to perform ad hoc calibration procedures aimed at controlling a limited part of the processor with minimum experimental errors and, thus, maximum accuracy. 

\begin{figure}
	\includegraphics[width=\textwidth]{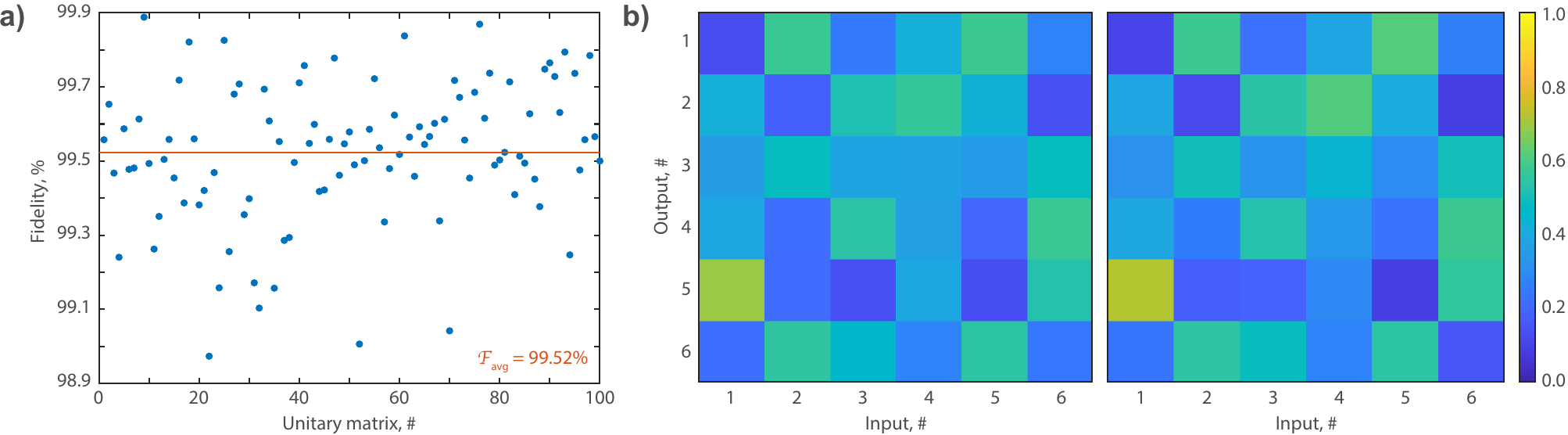}
	\caption{\label{fig:6x6fidelity} {\textbf{Experimental benchmarking of the calibration dataset.} a) Distribution of the fidelities calculated on the moduli of the elements of 100 random unitary matrices. The average fidelity $\mathcal{F}_{avg} = 99.52\%$ is indicated by the horizontal red line. b) Example of unitary matrix (target $|T|$ and experimental $|T_{exp}|$) implemented with accuracy $\mathcal{F}(T, T_{exp}) = 99.55\%$ (the figure shows only the moduli of the matrix elements).}}
\end{figure}
Before starting the actual experiments, the UPP calibration was benchmarked through the following procedure: (I) extraction of a random matrix $T\in U(6)$, (II) calculation of the corresponding phase settings through the decomposition algorithm described in \cite{Clements2016}, (III) implementation of the phase settings in the UPP by exploiting the calibration dataset, (IV) intensity measurements to reconstruct the moduli of all the elements of the experimental matrix $T_{exp}$ \cite{hoch2023characterization} and (V) evaluation of the implementation accuracy through the following definition of fidelity:
\begin{equation}
\mathcal{F}(T, T_{exp}) = \frac{1}{6} \text{Tr}(|T^\dagger| |T_{exp}|).
	\label{calib:fidelity}
\end{equation}
The benchmarking procedure was repeated for 100 different unitary matrices and the final result is reported in Fig. \ref{fig:6x6fidelity} (a). The fidelity was almost entirely distributed over 99\%, with an average value $\mathcal{F}_{avg} = 99.52\%$. An example of unitary matrix implementation with fidelity $\mathcal{F}(T, T_{exp}) = 99.55\%$ is reported in Fig. \ref{fig:6x6fidelity} (b), in which we compare target and experimental matrices $|T|$ and $|T_{exp}|$ (the figure shows only the moduli of the matrix elements).}

\bibliography{bibliography}